\renewcommand\footnotetextcopyrightpermission[1]{} 
\Crefname{property}{Property}{Properties}
\Crefname{example}{Example}{Examples}
\Crefname{table}{Table}{Tables}
\Crefname{figure}{Figure}{Figures}
\newtheorem{proposition}{Proposition}
\newtheorem{example}{Example}
\newtheorem{lemma}{Lemma}
\newtheorem{definition}{Definition}
\newtheorem{remark}{Remark}
\newcommand{\enn}{\hat{n}}
\newcommand{\emm}{\hat{m}}
\DeclareMathOperator*{\argmax}{argmax}
\newcommand{\SR}[1]{}
 \newcommand{\memw}{min \#envy max \sw{}\xspace}
 \newcommand{\mtemw}{min total envy max \sw{}\xspace}
 \newcommand{\mtec}{min total envy complete\xspace}
 \newcommand{\mec}{min \#envy complete\xspace}
 \newcommand{\mcmw}{{minimum cost perfect matching}\xspace}
\newcommand{\msEF}{maximum size EF\xspace}
\newcommand{\egalk}{max \egwel{}\xspace}
\newcommand{\egaln}{max \egwel{}\xspace}
\newcommand{\minEnvy}{min \#envy\xspace}
\newcommand{\minTEnvy}{min total envy\xspace}
\newcommand{\minmaxTEnvy}{minimax total envy\xspace}
 \newcommand{\NPc}{\textrm{\textup{NP-c}}}
\newcommand{\NPC}{\textrm{\textup{NP-complete}}}
\newcommand{\NPH}{\textrm{\textup{NP-hard}}}
\newcommand{\WH}{\textrm{\textup{W[1]-hard}}}
 \newcommand{\Poly}{\textup{P}}
\colorlet{mycolor}{gray!25}
\newcommand{\EF}[1]{\ifstrempty{#1}{\textrm{\textup{EF}}}{\textrm{\textup{EF{$#1$}}}}}
\newcommand{\envy}{\ensuremath{\mathsf{envy}}}
\newcommand{\tenvy}{\ensuremath{\mathsf{total\textsc{ }envy}}}
\newcommand{\topp}{\ensuremath{\max}}
\newcommand{\sw}{\ensuremath{\mathsf{USW}}}
\newcommand{\egwel}{\ensuremath{\mathsf{ESW}}}
\title{The Degree of Fairness in Efficient House Allocation}
\author{Hadi Hosseini}
\affiliation{
  \institution{Pennsylvania State University}
  \country{University Park, USA}
  }
\email{hadi@psu.edu}
\author{Medha Kumar}
\affiliation{
   \institution{Pennsylvania State University}
  \country{University Park, USA}
}
\email{mkumar@psu.edu}
\author{Sanjukta Roy}
\affiliation{
  \institution{University of Leeds}
  \country{Leeds, UK}
  }
\email{s.roy@leeds.ac.uk}
\begin{abstract}
    
      The classic house allocation problem is primarily concerned with finding a matching between a set of agents and a set of houses that guarantees some notion of economic efficiency (e.g. utilitarian welfare). While recent works have shifted focus on achieving fairness (e.g. minimizing the number of envious agents), they often come with notable costs on efficiency notions such as utilitarian or egalitarian welfare.
      We investigate the trade-offs between these welfare measures and several natural fairness measures that rely on the number of envious agents, the total (aggregate) envy of all agents, and maximum total envy of an agent.
     In particular, by focusing on envy-free allocations, we first show that, should one exist, finding an envy-free allocation with maximum utilitarian or egalitarian welfare is computationally tractable.
      We highlight a rather stark contrast between utilitarian and egalitarian welfare by showing that finding utilitarian welfare maximizing allocations that minimize the aforementioned fairness measures can be done in polynomial time while their egalitarian counterparts remain intractable (for the most part) even under binary valuations.
      We complement our theoretical findings by giving insights into the relationship between the different fairness measures and conducting empirical analysis.
%
\end{abstract}
\keywords{Fairness, Welfare, Utilitarian, Egalitarian, House Allocation}
\newcommand{\BibTeX}{\rm B\kern-.05em{\sc i\kern-.025em b}\kern-.08em\TeX}
\begin{document}

\pagestyle{fancy}
\fancyhead{}


\maketitle

\section{Introduction}
 
The classic house allocation problem is primarily concerned with assigning a set of houses (or resources) to a set of agents based on their preferences over houses such that each agent receives at most one house. 
It was motivated by a variety of applications such as kidney exchange \cite{roth2004kidney} or labour market \cite{hylland1979efficient} 
where agents are initially endowed with houses or houses have to be distributed afresh among agents.\footnote{This problem is commonly known as ``Shapley-Scarf Housing Markets'' when agents are initially endowed with houses. The goal is often finding mutually-beneficial exchanges that lead to efficient stable allocations (see \cite{shapley1974cores,roth1982incentive}).} 
%
While this model was primarily studied for designing incentive compatible mechanisms \cite{svensson1999strategy,abdulkadirouglu2003school} along with some notion of economic efficiency, recent works have shifted focus to the issues of fairness such as envy-freeness (EF), which requires that every agent weakly prefers its allocated house to that of every other agent. An envy-free allocation may not always exist: for example, consider two agents who both like the same house. The fair division literature contains a variety of \textit{approximate} envy measures (e.g. envy-free up to one \cite{lipton2004approximately,caragiannis2019unreasonable}) that cannot appropriately be utilized here due to the `one house per agent' constraint.

An orthogonal, but more suitable, approach is measuring the `degree of envy' \cite{chevaleyre2006issues} among the agents by counting the number of envious agents or the total aggregate envy experienced. In this vein, recent works have investigated the existence of envy-free house allocations under ordinal preferences \cite{GSV19envy}, maximum-size envy-free allocation \cite{aigner2022envy}, and complete allocations that minimizes the number of envious agents \cite{KMS21complexity,MadathilMS23} or those that minimize the total aggregate envy among agents \cite{hosseini2023graphical}.
Yet, these approaches often take a toll on efficiency notions such as size (the number of allocated agents), utilitarian welfare (the sum of agents' values), or egalitarian welfare (the value of the worst off agent).  The following example illustrates some of these nuances.

\begin{example}[\textbf{Fairness in House Allocation}] 
\label{ex:tradeoff}
\begin{figure}[t] 
    \centering
    \begin{minipage}{0.45\linewidth}
        \centering
         \resizebox{!}{0.9\linewidth}{
        {
        \begin{tikzpicture}[every node/.style={draw,circle}, fsnode/.style={fill=black}, ssnode/.style={fill=black}]
        
        \begin{scope}[start chain=going below,node distance=5mm]
        \foreach \i in {$a_1$, $a_2$, $a_3$, $a_4$}
          \node[fsnode,on chain] (f\i) [label=left: \i] {};
        \end{scope}
        
        \begin{scope}[xshift=3cm,yshift=0.35cm,start chain=going below,node distance=5mm]
        \foreach \i in {$h_1$, $h_2$, $h_3$, $h_4$, $h_5$}
          \node[ssnode,on chain] (s\i) [label=right: \i] {};
        \end{scope}
        
        \tikzset{decoration={snake,amplitude=.7mm,segment length=4mm,
                       post length=0mm,pre length=0mm}}
        \draw[ thick, draw=black] (f$a_1$) -- (s$h_1$);
        \draw[ thick, draw=black] (f$a_1$) -- (s$h_2$);
        \draw[ thick, draw=black] (f$a_2$) -- (s$h_2$);
        \draw[ thick, draw=black] (f$a_3$) -- (s$h_1$);
        \draw[ thick, draw=black] (f$a_4$) -- (s$h_2$);
        \draw[ blue!70, line width=4, opacity=0.4] (f$a_1$) -- (s$h_1$);
        \draw[ blue!70, line width=4, opacity=0.4] (f$a_2$) -- (s$h_2$);
        \draw[decorate, orange, line width=0.75] (f$a_1$) -- (s$h_3$);
        \draw[decorate, orange, line width=0.75] (f$a_2$) -- (s$h_4$);
        \draw[decorate, orange, line width=0.75] (f$a_3$) -- (s$h_5$);
        
        \tikzset{decoration={snake,amplitude=.7mm,segment length=6mm,
                      post length=0mm,pre length=0mm}}
        \draw[dashed, green, line width=1] (f$a_1$) to[out=30,in=160] (s$h_1$);
        \draw[dashed, green, line width=1] (f$a_2$) -- (s$h_3$);
        \draw[dashed, green, line width=1] (f$a_3$) -- (s$h_4$);
        \draw[dashed, green, line width=1] (f$a_4$) -- (s$h_5$);
       \end{tikzpicture}}
           }
        \end{minipage}\hfill
        \begin{minipage}{0.5\linewidth}
        \caption{\small{An envy-free allocation of maximum size is shown in wavy orange; a minimum \#envy complete allocation is shown in dashed green; no envy-free allocation has the maximum welfare $2$; and the allocation in blue achieves  minimum \#envy among the matchings that has the maximum welfare. 
        }}
     \label{fig:mainexample}
    \end{minipage}

%
\end{figure}
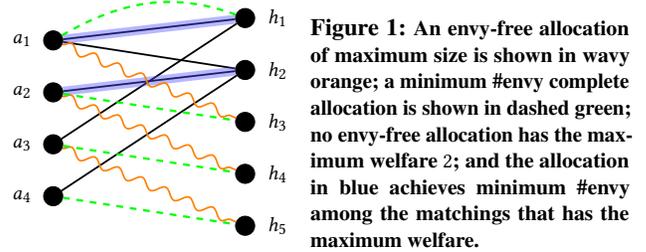
%
Consider an instance with four agents $\{a_1,a_2,a_3,a_4\}$ and five houses $\{h_1, h_2, \dots, h_5\}$ and binary valuations. For the ease of exposition, we use a graphical representation of the problem as shown in \cref{fig:mainexample}, where solid lines indicate an agent has a positive valuation for the house.

A maximum-size envy-free allocation (shown in orange) assigns houses $h_3$, $h_4$, and $h_5$ to agents $a_1$, $a_2$, and $a_3$ respectively.
 A maximum utilitarian welfare allocation that minimizes the number of envious agents (shown in blue) has two envious agents $a_3$ and $a_4$, and a utilitarian welfare of two. There is no envy-free allocation with maximum utilitarian welfare.
A complete allocation (agent saturating) that minimizes the number of envious agents is shown in green. 
While the size of the maximum size envy-free allocation is three, both the maximum size envy-free allocation and the envy-free allocation with maximum utilitarian welfare, have zero utilitarian welfare.
\end{example}

Finding a complete allocation that minimizes the number of envious agents (the green allocation in the above example) is shown to be \NPC{} \cite{KMS21complexity} even for binary valuations. Moreover, a maximum-size envy-free allocation proposed by \cite{aigner2022envy} only considers allocations to positive valued houses and returns `empty' otherwise, and the envy-free algorithm proposed by \cite{GSV19envy} returns `none' because there is no complete envy-free allocation that assigns \textit{every} agent to a house it likes.
While these observations illustrate the intricate trade-offs between each approach, the relation between these notions and their computational aspects remain unclear.
\begin{table}[t] 
\centering 
\resizebox{\linewidth}{!}{%
\begin{tabular}{@{}llll@{}}
               \cmidrule(l){2-4} 
               & max \sw{} & max \egwel{}  & allocation size $\geq k$ \\ \midrule
envy-free (\EF{})     & \Poly{} (Prop.~\ref{thm:WeightedEFmaxWelfare})$^\dagger$       &   \Poly{} (Thm.~\ref{thm:EFegalwel})       &  P (Prop.~\ref{thm:WeightedEFmaxsize})$^\star$        
\\
\midrule
min \#envy       &      \Poly{}  (Thm.~\ref{thm:WeightedminEmaxWelfare})  &  \NPc{} (Thm.~\ref{thm:minenvyk-egal})   & \begin{tabular}[c]{@{}l@{}}$m > n$: \NPc{}$^\mathsection$\\ $m \leq n$: \Poly{} (Thm.~\ref{thm:WeightedminEnvycomplete})\end{tabular} \\
\midrule
min total envy &     \Poly{} (Thm.~\ref{thm:minTotalEnvyMaxWelfare})     &\begin{tabular}[c]{@{}l@{}}  as hard as (\$)  \\ 
(Thm.~\ref{thm:redfromMTEC}) \end{tabular}  &  \begin{tabular}[c]{@{}l@{}} $m > n$: open (\$)  \\ $m \leq n$: \Poly{} (Cor.~\ref{cor:minTotalEnvyComplete-mleqn} )$^\ddagger$\end{tabular}                 \\ 
\midrule
minimax total envy &    open   &  \NPc{}  (Thm.~\ref{thm:mmtenvy-egal-welfare}) & \NPc{}$^\ddagger$            \\ 
\bottomrule
\end{tabular}
}
\caption{\small{
The summary of results for weighted instances with $n$ agents, $m$ houses, and $0\leq k \leq \min\{m,n\}$. \Poly{} and \NPc{} refer to polynomial time and \NPC{}, respectively. (\$) refers to the min total envy complete when $m>n$.
Result(s) marked by $^\mathsection$ is due to \protect\cite{KMS21complexity}, those marked by $\ddagger$ and $^\dagger$ are shown by \protect\cite{MadathilMS23} and \protect\cite{aigner2022envy}, respectively for binary valuations,
and $^\star$ is the result by \protect\cite{GSV19envy} for $k=n$.}
}
\label{tab:summary}
\end{table}

\subsection{Our Contributions}
We consider three well-studied notions of economic efficiency - the size (i..e the number of allocated agents), the utilitarian welfare, and the egalitarian welfare of the allocation. We study these notions under both binary and arbitrary positive valuations and investigate their interaction with various envy fairness measures. These include envy-freeness (EF), minimizing the number of envious agents (min \#envy), and minimizing the total envy of all agents (min total envy), and minimizing the total envy of the most envious agent (minimax total envy).
Table~\ref{tab:summary} summarizes our results.

\paragraph{Envy-free allocations.}
We show that an envy-free allocation of maximum size can be computed efficiently under binary (\cref{prop:EFmaxSize}) or arbitrary non-negative valuations (\cref{thm:WeightedEFmaxsize}).
By focusing on welfare-maximizing allocations, we show that, should one exist, finding an envy-free allocation that maximizes utilitarian welfare (\sw{}) or egalitarian welfare (\egwel{}) is computationally tractable (\cref{thm:WeightedEFmaxWelfare} and \cref{thm:EFegalwel}, resp.).

\paragraph{Utilitarian welfare.}
We show that finding an allocation with min \#envy 
(\cref{thm:WeightedminEmaxWelfare}), or min total envy 
(\cref{thm:minTotalEnvyMaxWelfare}), is tractable under the added constraint of maximizing \sw{}. Without the welfare constraint, the former problem has been proven to be \NPH{} when we aim to find a complete allocation \cite{KMS21complexity}.
%
%
%
%
Additionally, we analyze the relationship between the size of an allocation and its \sw{}. We obtain polynomial time algorithms for finding min \#envy (\cref{thm:WeightedminEnvycomplete}), min total envy (\cref{cor:minTotalEnvyComplete-mleqn}) complete allocations when $m \leq n$. 

\paragraph{Egalitarian welfare.}
To complement our study, we consider the well-established Rawlsian notion of egalitarian welfare. 
We first show that finding an envy-free allocation, when one exists, with maximum \egwel{} can be done in polynomial time (\cref{thm:EFegalwel}).
We highlight a contrast between egalitarian and utilitarian welfare by showing that when considering allocations that maximize the egalitarian welfare (as opposed to utilitarian ones), finding a min \#envy allocation is \NPC{} (\cref{thm:minenvyk-egal}).
Moreover, it is \NPC{} to find minimax total envy max \egwel{} (\cref{thm:mmtenvy-egal-welfare}). 
%
%
Finally, while the computational complexity of finding a complete allocation that minimizes the total envy remains open, we show an intriguing relation to its egalitarian welfare counterpart (\cref{thm:redfromMTEC}) and conclude with complementary experimental observations under randomly generated valuations.

On the technical level, the computational hardness often comes from deciding how to utilize the allocation of ``undesirable'' houses to improve fairness (i.e., by reducing envy). A house is undesirable for an agent if there exists another house with higher value which is not assigned to any other agent. Intuitively, if an agent likes a house that is available (not assigned to anyone), then the agent does not care about other houses that are less valued.
Thus, different variants of the fair house allocation problem remain intractable when undesirable houses can be allocated, e.g., when finding a fair complete allocation and $m>n$, or when finding a fair  egalitarian welfare maximizing allocation.

\subsection{Related Work}

Fairness in house allocations is a well studied problem.  House allocations were first studied in social choice where the focus was on finding efficient and strategyproof allocations~\cite{roth1982incentive}. As the focus shifted towards fair allocations, increasingly algorithmic approaches were used. 
Achieving fairness by minimizing the number of envious agents or total envy has been studied previously by~\cite{GSV19envy,aigner2022envy,KMS21complexity,MadathilMS23}. 
\citet{kamiyama2021envy} showed hardness of finding EF solutions for pairwise preferences.
\citet{belahcene2021combining} studied a relaxed notion called ranked envy-freeness. The hardness and approximability of minimizing total envy was studied for housing allocation problem where agents are located on a graph \cite{hosseini2023tight,hosseini2023graphical}.
The egalitarian allocations have been studied under different names such as the classic makespan minimization problem in job scheduling \cite{lenstra1990approximation}, the Santa Claus problem \cite{bansal2006santa}, and in fair allocation of resources \cite{lipton2004approximately,bouveret2016characterizing}. In these settings the problem of maximizing the egalitarian welfare (worst-off agent) is shown to be \NPH{}~\cite{bouveret2016characterizing}, giving rise to several approximation algorithms \cite{shmoys1993approximation}. See \cref{app:relatedwork} for an extended discussion on related work.
Our setting is crucially different from these works in two  ways: in the house allocation problem each agent receives at most one house, and not all houses need to be assigned. 

\section{Model} \label{sec:model}

An instance of the \emph{house allocation problem} is represented by a tuple $\langle N, H, V \rangle$, where $N \coloneqq \{1, 2, \ldots, n\}$ is a set of $n$ agents, $H \coloneqq \{h_1, h_2, \ldots, h_m\}$ is a set of $m$ houses, and $V \coloneqq (v_{1}, v_{2}, \ldots, v_{n})$ is a \emph{valuation profile}. 
Each $v_i(h)$ indicates agent $i$'s non-negative value for house $h \in H$. Thus, for $i\in N$ the value of a house $h\in H$ is $v_i (h) \geq 0$, and $v_i (\emptyset) = 0$.
An instance is \textit{binary} if for every $i\in N$ and every $h\in H$, $v_{i} (h) \in \{0,1\}$; otherwise it is a \textit{weighted} instance.

An allocation $A$ is an injective mapping from agents in $N$ to houses in $H$. For each agent $i\in N$, $A(i)$ is the house allocated to agent $i$ given the allocation $A$ and $v_i(A(i))$ is its value.
Thus, for each $\{ i, j\} \subseteq N$, $A(i) \cap A(j) = \emptyset$ and any house is allocated to at most one agent. The set of all such allocations is denoted by $\mathcal{A}$.\footnote{Note that in its graphical representation, this model differs from the classical bipartite matching problem as it allows for allocations along zero edges (non-edges) in a graph.
}

\paragraph{Fairness.}

Given an allocation $A$, we say that agent $i$ \emph{envies} $j$ if $v_{i}(A(j)) > v_i (A(i))$. The \textit{amount (magnitude)} of this pairwise envy is captured by $\envy_{i,j} (A) \coloneqq \max\{v_i (A(j)) - v_i (A(i)), 0\}$.
Given an allocation $A$, the \textit{total (aggregate) envy} of an agent $i$ towards other agents is denoted by $\envy_{i}(A) = \sum_{j\in N} \envy_{i,j}(A)$. 


An allocation $A$ is \textit{envy-free} (\EF{}) if and only if for every pair of agents $i,j\in N$ we have $v_{i}(A(i)) \geq v_i (A(j))$, that is, $\envy_{i,j} (A) = 0$.
Since envy-free allocations are not guaranteed to exist, we consider other plausible approximations to measure the `degree of envy' \cite{chevaleyre2006issues}. 
%

\paragraph{Degrees of envy.}
An allocation is \textbf{min \#envy} if it minimizes the number of envious agents, i.e.
    $
    \min_{A\in \mathcal{A}} \#\envy(A)
    $,
    where $\#\envy(A)$ is the number of envious agents, i.e., the size of the set $\{ i\in N: \envy_{i,j} (A) > 0, \text{for some}\ j\in N\}$.
\begin{sloppypar}
    An allocation is \textbf{min total envy} if it minimizes the total envy of all agents, i.e. 
    $
    \min_{A\in \mathcal{A}} \tenvy(A)
    $, 
    where $\tenvy (A)$ is the {total amount of envy} of allocation $A$ experienced by all agents, i.e., $\tenvy (A) \coloneqq \sum_{i\in N}\envy_{i}(A)$.
\end{sloppypar}
    An allocation is \textbf{minimax total envy} if it minimizes the maximum aggregate amount of envy experienced by an agent, i.e.  
    $
    \min_{A\in \mathcal{A}} \max_{i\in N} \envy_{i} (A)
    $.

        
  




In the above measures, an allocation is selected from the set of all feasible allocations $\mathcal{A}$. In the next section we discuss the reason behind restricting the set $\mathcal{A}$ to subsets that satisfy some measures of economic efficiency.

\paragraph{\textbf{Social Welfare.}}
Without any measures of social welfare, any empty allocation is vacuously envy-free, and consequently satisfies all four measures of fairness. 
Hence, we consider three notions of social welfare that measure the economic efficiency of allocations based on their size (number of assigned agents), utilitarian, or egalitarian welfare.


The \textbf{size} $|A|$ of an allocation, $A$, is simply the number of agents that are assigned to a house.\footnote{We intentionally use the term `size' instead of `cardinality' to avoid confusion with matchings that \textit{only} allow selection of positively valued (aka. `liked') edges of a graph.}
An allocation is \textbf{complete} if it either assigns a house to every agent ($N$-saturating) when $m\geq n$, or assigns every house to an agent ($H$-saturating) when $m < n$.
Note that completeness is a weak efficiency requirement that does not take agents' valuations into account. 


The \textbf{utilitarian welfare} of an allocation $A$ is the sum of the values of individual agents, i.e. $\sw(A) \coloneqq \sum_{i\in N} v_i (A)$. 
A maximum utilitarian welfare allocation is the one that maximizes the sum of the values, and can be found efficiently by computing a maximum-weight bipartite matching in the induced graph (a bipartite graph on $N\cup H$ where edge weights are given by the valuations $V$).  


\begin{sloppypar}
 The \textbf{egalitarian welfare} of an allocation $A$ is the value of the worst off agent among all agents in $N$, that is, $\egwel(A) \coloneqq \min_{i\in N} v_i(A)$.
A $k$-egalitarian welfare is the value of the worst off agent in a subset $S\subseteq N$ of agents of size $k = |S|$ such that $\egwel(A) \coloneqq \min_{i\in S} v_i(A)$.
\end{sloppypar}
If it is possible to achieve a positive (non-zero) egalitarian welfare for all agents, any allocation that maximizes the egalitarian welfare (there could be multiple) can be selected. In the special case where every feasible $N$-saturating allocation (allocations of size $n$) has an egalitarian welfare of zero, we look for the largest subset of agents $S\subseteq N$ that can simultaneously receive a positive value, and select an allocation that maximizes the $k$-egalitarian welfare among these agents.
In \cref{ex:tradeoff}, a maximum egalitarian welfare allocation has welfare one and size two since any larger subset of agents will result in a egalitarian welfare of zero.




\paragraph{\textbf{Fairness-Welfare Trade-offs.}}
Our main objective is to investigate the trade-offs between the four fairness measures and various notions of welfare (i.e. economic efficiency). 
Thus, for each fairness-welfare pair, we define computational problems in the following way: 
%
Given an instance of the house allocation problem, $\langle N, H, V \rangle$, find an allocation $A$ that minimizes unfairness as measured by $F$ within the set of all allocations that maximize an efficiency measure of $E$,
where $F$ is min envy, min total envy, minimax envy, or minimax total envy and $E$ is max size, max \sw, or max \egwel.

Some standard graph theoretic notations and algorithms that we use are provided in \cref{app:prelim} for easy reference.


\section{Maximum Size Envy-free Allocations} \label{sec:EF}

We start by considering envy-freeness as our main constraint. The goal is to find an envy-free allocation under various welfare measures. Clearly, an envy-free complete allocation may not always exist. Further, an empty allocation is always an envy-free allocation of size $0$. The first objective is to find an envy-free allocation of maximum size. In other words, among the set of all envy-free allocations we find an allocation that maximizes the number of assigned agents (or houses when $m \leq n$). 
%
%
%
In their paper \cite{GSV19envy} describe an efficient polynomial time algorithm to find an envy-free allocation in ordinal instances. However, they restrict themselves to complete envy-free allocations where all agents are assigned some house. Their algorithm returns `empty' when the number of available houses falls below the number of agents. 

Our algorithm relaxes this constraint and returns a maximum size envy-free allocation which need not be complete. 


\paragraph{\textbf{Algorithm description.}} 
\cref{alg:algorithmenvyt} creates a bipartite graph where each agent is only adjacent to houses that are its most preferred. In other words, if there is an edge between an agent $i$ and a house $h$, then $v_i(h)$ is positive and maximum among all houses remaining in the graph. The algorithm proceeds by removing all houses in any inclusion-minimal Hall violators. 
This process repeats by updating the highest valued house among the remaining houses for each agent, and adding, or retaining, the corresponding edges.
After either all houses are considered or no Hall violator is found, the algorithm returns a \emph{maximum size allocation}  which is produced by union of a maximum size matching $M$ in the induced graph $G$ and a maximum size matching in the complement graph $\overline{G} - M$.

The next lemma gives a natural invariant for the algorithm.

\begin{algorithm}[t]
        \caption{A maximum size envy-free allocation}
    \label{alg:algorithmenvyt}
    \begin{algorithmic}[1] \small
    \REQUIRE A house allocation instance $\langle N, H, V \rangle$
    \ENSURE A maximum size envy-free allocation 
    \FOR{each agent $i\in N$}
        \STATE Let $h_i^{\max} \in \argmax_{h\in H} v_{i}(h)$
    \ENDFOR
    \STATE Define $E_{\topp} = \{(i,h) | i \in N, h\in H, v_i(h) = v_i(h_i^{\max}), v_i(h) >0\}$ 
    \STATE Create a bipartite graphs $G = (N \cup H, E_{\topp})$ 
    \IF{there exists a Hall violator $(N',H')$ in $G$}
        \STATE Delete $H'$ from $H$;\ \textit{removing houses that cause envy.}
    \ELSE \RETURN Allocation $A$ = union of maximum size matching $M$ in $G$ and a maximum size matching in $\overline{G} - M$.
    \ENDIF
    \STATE Go to line 1
    \end{algorithmic}
\end{algorithm}

\begin{restatable}{lemma}{lemhalldeletion}\label{lem:halldeletion}\label{lem:prop-algmwEF}
    Given a weighted instance, any house removed by \cref{alg:algorithmenvyt} cannot be a part of any envy-free allocation.
\end{restatable}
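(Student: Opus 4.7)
The plan is to proceed by induction on the iterations of \cref{alg:algorithmenvyt} in which houses are deleted. The inductive hypothesis states that every house removed in a previous iteration cannot belong to any envy-free allocation, so a hypothetical envy-free allocation $A$ containing a newly deleted house must use only houses from the current set $\tilde H$ of surviving houses. I fix an inclusion-minimal Hall violator $(N', H')$ in the top-preference graph $G$ built over $\tilde H$, assume for contradiction that some $h^{*} \in H'$ appears in $A$, and derive a contradiction with $|H'| < |N'|$.

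The crucial observation is a dichotomy for each $a \in N'$, rooted in the fact that $N_G(a) \subseteq \tilde H$ is precisely the set of $a$'s most-preferred houses among $\tilde H$, each with strictly positive value. Either $A(a) \in N_G(a)$, in which case $a$ is matched to one of its top houses of $H'$, or else $v_a(A(a)) < v_a(h_a^{\max})$, which by envy-freeness forbids any house of $N_G(a)$ from being allocated in $A$ (the unallocated case $A(a) = \emptyset$ falls in this second branch since $v_a(\emptyset)=0$). This cleanly partitions $N'$ into $N_1 = \{a \in N' : A(a) \in N_G(a)\}$ and $N_2 = N' \setminus N_1$, with the companion sets $H_1 = N_G(N_1)$ and $H_2 = N_G(N_2)$ satisfying $H_1 \cup H_2 = H'$.

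The dichotomy immediately implies that no house in $H_2$ is allocated, so for every $a \in N_1$ the assigned house $A(a) \in N_G(a) \subseteq H_1$ must in fact lie in $H_1 \setminus H_2$. Injectivity of $A$ then gives $|N_1| \le |H_1 \setminus H_2|$. Because $h^{*} \in H'$ is allocated, $h^{*} \notin H_2$, hence $N_1 \neq \emptyset$ and therefore $N_2 \subsetneq N'$. This is precisely where \emph{inclusion-minimality} of the Hall violator is leveraged: applied to the proper subset $N_2$, it yields $|H_2| = |N_G(N_2)| \ge |N_2|$. Summing along the disjoint decomposition $H' = (H_1 \setminus H_2) \sqcup H_2$ produces $|H'| \ge |N_1| + |N_2| = |N'|$, contradicting the defining inequality $|H'| < |N'|$.

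The step I expect to be the main obstacle is exactly the one that leverages inclusion-minimality: the envy-freeness side of the argument by itself bounds only $|N_1|$ (and not the full $|N'|$), so some extra Hall-type bound on $N_2$ is indispensable, and establishing $N_2 \subsetneq N'$ requires the prior deduction that $N_1$ is nonempty from the existence of the allocated $h^{*}$. A minor subtlety to handle explicitly is that agents $a \in N'$ with $A(a) = \emptyset$ must be grouped with $N_2$, since their zero value under EF likewise forbids any positively-valued top house of $a$ from being assigned, which is what keeps the dichotomy clean.
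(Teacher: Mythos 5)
Your proof is correct, and while it rests on the same underlying intuition as the paper's --- an agent of the Hall set deprived of all its top houses would envy whoever receives one --- the way you cash that intuition out is genuinely different and, in fact, more complete. The paper's proof simply asserts that ``there must exist an agent $j\in X$ that does not receive its most preferred house $h$ --- or any house of equal value --- despite it being assigned to some other agent,'' without deriving this from the structure of the violator, and it never invokes inclusion-minimality at all. Your partition of $N'$ into $N_1$ (agents matched into their top sets) and $N_2$ (agents whose entire top sets must then be unallocated under EF), together with the counts $|N_1|\le|H_1\setminus H_2|$ and $|H_2|\ge|N_2|$, supplies exactly the missing combinatorial step, and it correctly identifies minimality as indispensable: for a non-minimal Hall set the lemma is simply false (e.g., three agents where two share a unique top house $h_1$ and the third has a disjoint top house $h_2$ and values nothing else --- the union is a Hall set, yet $h_2$ can appear in an envy-free allocation). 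Your explicit induction over iterations, which guarantees that the hypothetical EF allocation lives inside the surviving houses so that ``top among $\tilde H$'' really bounds $v_a(A(a))$, likewise makes precise a point the paper only gestures at with its remark about edges being added in decreasing order of value. The only cost of your route is length; what it buys is an argument that actually isolates where both EF and inclusion-minimality enter.
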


\begin{restatable}{theorem}{thmalgorithmenvyt}\label{thm:algorithmenvyt}
    Given a weighted instance, \cref{alg:algorithmenvyt} returns an envy-free maximum size allocation. 
\end{restatable}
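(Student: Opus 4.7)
The plan is to verify two things separately: (i) the returned allocation is envy-free, and (ii) no envy-free allocation has strictly larger size. Claim (ii) will rest entirely on \cref{lem:halldeletion}: every house the algorithm deletes is forbidden in every envy-free allocation, so the only benchmark I need to beat is the largest envy-free allocation supported on the surviving houses, which I will call $H^\star$.

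For (i), I would first note that the main loop terminates in at most $m$ rounds, since every iteration that finds a Hall violator strictly shrinks $H$. When the loop exits, the final graph $G$ (on $N\cup H^\star$, with edges from each agent to its top-valued surviving houses) has no nontrivial Hall violator, so by Hall's theorem the maximum matching $M$ in $G$ saturates every agent that has at least one edge there---equivalently, every agent with a positive-valued house in $H^\star$. Any such agent receives one of its top houses in $H^\star$ and therefore cannot envy any assignment drawn from $H^\star$. Let $N_0$ be the agents with no edge in the final $G$; by the definition of $E_{\topp}$, every house of $H^\star$ is worth $0$ to an agent in $N_0$, so such an agent envies nobody regardless of what they receive. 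Since the algorithm never assigns a house outside $H^\star$, this covers every pair.

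For (ii), \cref{lem:halldeletion} implies that every envy-free allocation is an injective assignment from $N$ into $H^\star$, so its size is at most $\min(n,|H^\star|)$. It then suffices to show that the returned allocation $A=M\cup M'$ satisfies $|A|=\min(n,|H^\star|)$. Hall applied to $G$ gives $|M|=|N\setminus N_0|$, so $|H^\star|-|N\setminus N_0|$ surviving houses remain unused. Because every $N_0$-agent has an edge in $\overline G$ to \emph{every} house of $H^\star$, the subgraph of $\overline G - M$ induced between $N_0$ and the unused houses is complete bipartite, and a maximum matching there has size $\min(|N_0|,\,|H^\star|-|N\setminus N_0|)$. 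Adding this to $|M|$ and simplifying gives $\min(n,|H^\star|)$ in both of the cases $|H^\star|\geq n$ and $|H^\star|<n$.

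The main subtlety I expect lies in the treatment of agents in $N_0$: they are never matched inside $G$, yet to attain the size bound the algorithm must place them via $\overline G - M$, and I must verify both that doing so introduces no envy and that it genuinely reaches $\min(n,|H^\star|)$. Both facts reduce to the Hall property together with the observation that every surviving house is worth $0$ to an $N_0$-agent. Once these are in place, combining the envy-freeness in (i) with the upper bound from \cref{lem:halldeletion} yields the theorem.
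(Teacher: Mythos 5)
Your proof is correct and follows essentially the same route as the paper's: envy-freeness via the top-edge structure of the final graph $G$ together with the observation that agents with no edge value every surviving house at zero, and maximality via \cref{lem:halldeletion} restricting every envy-free allocation to the surviving houses $H^\star$. Your explicit count $|A|=\min(n,|H^\star|)$, matched against the upper bound $\min(n,|H^\star|)$ for any envy-free allocation, is in fact a tighter and more complete justification of the size claim than the paper's informal assertion that the two-stage matching on $G$ and $\overline{G}-M$ assigns the maximum possible number of agents.
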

\begin{proof}
First, note that \cref{alg:algorithmenvyt} runs in polynomial time because every component of the algorithm including finding a inclusion-minimal Hall violator~\cite{GSV19envy,aigner2022envy} and computing a maximum size bipartite matching runs in time polynomial in $n$ and $m$.
Therefore, it suffices to prove that 
i) every house removed by the algorithm cannot be contained in any envy-free allocation, and
ii) a maximum size bipartite matching  returns a maximum size allocation among all envy-free allocations.

Statement (i) immediately follows from \cref{lem:halldeletion} and the fact that each agent's valuations for the houses unassigned in $M$ is zero. Statement (ii) follows from the observation that \cref{alg:algorithmenvyt} finds a maximum size matching in the induced graph $G$, where every agent only has edges to its most preferred house among the houses retained, no further edges can be added, and no Hall violators exist. 

 The unassigned houses and unassigned agents in $M$ are assigned in a maximum matching in $\overline{G} -M$. Thus, no agent or house that is already assigned in $M$ is reassigned. Moreover, since we find a maximum matching in $\overline{G} -M$, it assigns maximum number of agents to the houses they value zero.
Thus, the algorithm returns the maximum size envy-free allocation on the instance.
\end{proof}


\section{Utilitarian Welfare}\label{sec:utilwel}
In this section, we show that fair allocations can be obtained efficiently by introducing a utilitarian welfare constraint and show the connection between complete allocations and welfare maximizing ones when restricting the number of houses.

We begin the section by designing an EF allocation with maximum welfare.
\begin{restatable}{proposition}{thmWeightedEFmaxWelsize}
\label{thm:WeightedEFmaxsize}\label{thm:WeightedEFmaxWelfare}\label{thm:WeightedEFmaxWelsize}
    Given a weighted instance, an envy-free allocation of maximum \sw{} can be computed in polynomial time. 
\end{restatable}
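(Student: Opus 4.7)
The plan is to show that \cref{alg:algorithmenvyt} (already established to compute a maximum-size envy-free allocation) simultaneously achieves maximum utilitarian welfare among envy-free allocations, without any modification. The argument proceeds in three short steps.

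First, I would apply \cref{lem:halldeletion} to conclude that any envy-free allocation $A$ can use only houses from the surviving set $H^{*} \subseteq H$ that remains after all Hall-violator deletions. Hence for every agent $i$, either $A(i) = \emptyset$ or $A(i) \in H^{*}$, which gives $v_i(A(i)) \leq v_i(h_i^{\max, H^{*}})$, where $h_i^{\max, H^{*}}$ denotes $i$'s most-valued house within $H^{*}$. Summing over all agents yields the upper bound $\sw(A) \leq \sum_{i \in N} v_i(h_i^{\max, H^{*}})$ on any envy-free allocation.

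Second, I would argue that the algorithm's output attains this bound. Let $N_G = \{i \in N : v_i(h_i^{\max, H^{*}}) > 0\}$, i.e., the agents with at least one edge in the final bipartite graph $G$. The absence of Hall violators in $G$ (restricted to $N_G$) implies, via Hall's theorem, that there is a matching inside $G$ saturating $N_G$; therefore the maximum-size matching $M$ produced by the algorithm saturates $N_G$ as well. Since every edge $(i,h) \in G$ satisfies $v_i(h) = v_i(h_i^{\max, H^{*}})$ by construction, $M$ realizes welfare $\sum_{i \in N_G} v_i(h_i^{\max, H^{*}})$, which matches the upper bound above (agents outside $N_G$ contribute zero in both directions).

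Finally, the extension of $M$ via a maximum-size matching in $\overline{G} - M$ only assigns zero-valued houses to the still-unmatched agents, preserving envy-freeness by the same reasoning used in the proof of \cref{thm:algorithmenvyt} and leaving \sw{} unchanged. Each component runs in polynomial time, giving the stated claim. The main obstacle is justifying that the max-size matching in $G$ necessarily saturates every agent whose top surviving house has positive value; this requires the post-deletion Hall condition together with the observation that, within $G$, all edges incident to a fixed agent carry the same (top) weight, so size-optimality and weight-optimality coincide for this particular matching problem.
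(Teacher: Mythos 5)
Your argument is correct and follows essentially the same route as the paper's: its \cref{lem:WeightedEFmaxWelfare} likewise combines \cref{lem:halldeletion} (no envy-free allocation can use a deleted house, so no agent can obtain more than its top-valued surviving house) with the observation that in the final Hall-violator-free graph $G$ every agent having a positively valued surviving house is matched to one of its top surviving houses, so the output of \cref{alg:algorithmenvyt} attains the resulting welfare upper bound; your explicit ``upper bound plus attainment'' packaging and the appeal to Hall's theorem for saturation of $N_G$ are just a cleaner phrasing of the same steps. The only piece you omit is the paper's final check: under its problem formulation (find an envy-free allocation \emph{within the set of max-\sw{} allocations}, ``should one exist''), one must still compare $\sw(A)$ against the global maximum utilitarian welfare (obtainable from a maximum-weight bipartite matching) and report that no such allocation exists when the two differ. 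This is a one-line addition to your argument and does not affect its correctness.
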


Note that while a maximum size envy-free allocation can be computed in polynomial time, finding a complete allocation that minimizes the number of envious agents is \NPH{} \cite{KMS21complexity} (see example given in \cref{fig:mainexample}).

\subsection{Minimum \#Envy}
We start by showing that finding a \sw{}-maximizing allocation that minimizes \#envy can be done in polynomial time.
The algorithm constructs a bipartite graph and uses minimum cost  perfect matching~\cite{ramshaw2012minimum}. The details of the construction and proofs are relegated to \cref{app:subsec:memw}.

\begin{restatable}{theorem}{thmWeightedminEmaxWelfare}
\label{thm:WeightedminEmaxWelfare}
    Given a weighted instance, a \memw{} allocation can be computed in polynomial time.
\end{restatable}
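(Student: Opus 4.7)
The plan is to reduce the problem to lexicographic weighted bipartite matching by first showing that envy becomes an edge-local property once we restrict to max-\sw{} allocations.

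Step 1: I would establish that in any max-\sw{} allocation $A$, agent $i$ is envious if and only if $v_i(A(i)) < v_i^{\max}$, where $v_i^{\max} := \max_{h \in H} v_i(h)$ and $v_i(\emptyset) = 0$. One direction is immediate: if $i$ envies some $j$, then $v_i(A(j)) > v_i(A(i))$, whence $v_i^{\max} > v_i(A(i))$. For the converse, suppose $v_i(A(i)) < v_i^{\max}$ and pick $h^* \in \argmax_{h \in H} v_i(h)$. If $h^*$ were unassigned in $A$, reassigning $i$ to $h^*$ would strictly increase \sw{} by $v_i^{\max} - v_i(A(i)) > 0$, contradicting the max-\sw{} assumption. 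Hence $h^* = A(j)$ for some $j \neq i$, and $i$ envies $j$. Writing $N^+ := \{i : v_i^{\max} > 0\}$, this gives $\#\envy(A) = |N^+| - |\{i \in N^+ : v_i(A(i)) = v_i^{\max}\}|$ on every max-\sw{} allocation $A$---a sum of per-agent indicators depending only on $A(i)$.

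Step 2: I would encode the optimization as a min-cost max-weight perfect matching. Adjoin $m$ dummy agents and $n$ dummy houses to obtain a balanced bipartite graph with $n + m$ vertices on each side. Each edge between a real agent $i$ and a real house $h$ carries weight $v_i(h)$ and cost $0$ if $v_i(h) = v_i^{\max}$, else cost $1$. Each edge from a real agent $i$ to a dummy house carries weight $0$ and cost $1$ if $i \in N^+$ (penalizing leaving an envy-prone agent unassigned) and cost $0$ otherwise; all edges incident to dummy agents carry weight $0$ and cost $0$. Interpreting a real agent matched to a dummy house as unassigned, the total weight of any perfect matching equals the \sw{} of the induced allocation, and, among max-weight matchings, its total cost equals the $\#\envy$ of that allocation by Step 1. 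A min-cost max-weight perfect matching---obtained in polynomial time by, e.g., using combined weights $W \cdot v_i(h) - c_{i,h}$ for a sufficiently large $W$ inside the standard \mcmw{} algorithm---therefore returns a \memw{} allocation.

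The main obstacle I expect is the structural claim in Step 1; once envy reduces to a per-agent property on max-\sw{} allocations, the remainder is a routine lexicographic reduction to min-cost matching.
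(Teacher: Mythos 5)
Your proposal is correct and follows essentially the same route as the paper: your Step 1 is the paper's key observation (Lemma~\ref{lem:minenvyenvious}) that on max-\sw{} allocations an agent is envious iff it misses a most-preferred house, and your Step 2 is the paper's \cref{alg:WeightedminEmaxWelfare}, which adds dummy houses and runs a \mcmw{} with combined cost $-v_i(h)\cdot L + c_{envy}(i,h)$ for large $L$. The only cosmetic differences are that you also add dummy agents to balance the graph and phrase the objective as min-cost max-weight rather than a single scaled cost.
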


While finding a \mec{} allocation is shown to be \NPH{} even for binary instances \cite{KMS21complexity}, we establish a relation between \mec{} and \memw{} allocations when $m\leq n$ by extending a minimum envy maximum \sw{} allocation to a complete allocation.\footnote{For binary instances, \citet{MadathilMS23} independently showed that a \mec{} allocation (termed as ``optimal'' house allocation) can be computed in polynomial time when $m\leq n$.} 

\begin{restatable}{proposition}{propWelCompmlessn}\label{prop:wel-comp-m-less-n}
In a binary instance when $m \leq n$, given a \memw{} allocation $A$, a complete allocation $\hat{A}$ can be constructed in polynomial time such that
\begin{enumerate}[(i)]
    \item\label{it:wel-comp-m-less-n1} $A$ and $\hat{A}$ has equal \sw{} and \#envy, and
    \item\label{it:wel-comp-m-less-n2} $\hat{A}$ has minimum \#envy among all complete allocations.
\end{enumerate}
\end{restatable}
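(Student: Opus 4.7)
The plan is to form $\hat{A}$ from $A$ by picking any bijection between the $m - |A|$ houses unassigned in $A$ and an arbitrary subset of $m - |A|$ agents unassigned in $A$; such a subset exists because $n - |A| \geq m - |A|$ when $m \leq n$, and the resulting allocation is complete. The whole argument hinges on a single structural observation about any maximum-\sw{} allocation $A$: whenever a house $h$ is unassigned in $A$, every agent $i$ with $v_i(h) = 1$ must already satisfy $v_i(A(i)) = 1$. Otherwise, reassigning $i$ to $h$ (freeing $A(i)$ if $i$ was assigned, or placing $i$ outright if $i$ was unassigned) strictly increases \sw{}, contradicting maximality of $A$.

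For part \eqref{it:wel-comp-m-less-n1}, the observation delivers both claims cleanly. Each newly added pair $(j,h)$ has $v_j(h) = 0$, by applying the observation to $i = j$ (which is unassigned, hence $v_j(A(j)) = 0 \neq 1$), so $\sw(\hat{A}) = \sw(A)$. For envy, the assignments of all agents already in $A$ are untouched, so only pairwise envies directed at a newly assigned agent could change. Agent $j$ itself sees no change since $v_j(\hat{A}(j)) = 0 = v_j(A(j))$. An existing agent $i$ could newly envy $j$ only if $v_i(h) = 1$ and $v_i(A(i)) = 0$, which the observation forbids. Hence $\#\envy(\hat{A}) = \#\envy(A)$.

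For part \eqref{it:wel-comp-m-less-n2}, I would use a short counting identity valid for any complete allocation $B$ under binary valuations with $m \leq n$. Let $L = \{i \in N : v_i(h) = 1 \text{ for some } h \in H\}$. Since every house is assigned in $B$, an agent $i$ is envious in $B$ if and only if $i \in L$ and $v_i(B(i)) = 0$; consequently $\#\envy(B) = |L| - \sw(B)$. Thus minimizing $\#\envy$ over complete allocations is equivalent to maximizing $\sw$ over complete allocations. The structural observation also implies that the global maximum welfare $w$ is attainable by a complete allocation (extend any max-\sw{} allocation as above), so the minimum $\#\envy$ over complete allocations equals $|L| - w$, which coincides with $\#\envy(\hat{A})$.

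The only delicate step is the structural observation; once established, the remaining arguments are brief bookkeeping. I would state it as a small claim with a one-line exchange-argument proof, taking care that ``$v_i(A(i)) = 0$'' uniformly covers both the case where $i$ is assigned a house it values $0$ and the case where $i$ is unassigned (with $A(i) = \emptyset$), since the exchange works in either scenario. The $m \leq n$ hypothesis enters in exactly two places: ensuring that enough unassigned agents exist for the extension, and guaranteeing that \emph{complete} here means $H$-saturating so that the counting identity $\#\envy(B) = |L| - \sw(B)$ applies.
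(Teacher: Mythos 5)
Your proposal is correct, and the construction and the argument for part (i) match the paper's proof essentially verbatim: extend $A$ by arbitrarily pairing leftover houses with leftover agents, and use the exchange argument (any unassigned house must be valued $0$ by every agent not already receiving value $1$, else welfare could be increased) to conclude that neither \sw{} nor \#envy changes. For part (ii) you take a slightly different and cleaner route. The paper argues by contradiction, comparing a hypothetical better complete allocation $A^*$ agent by agent and showing that strictly fewer envious agents would force $\sw(A^*) > \sw(A)$. You instead isolate the identity $\#\envy(B) = |L| - \sw(B)$, valid for every complete ($H$-saturating) allocation $B$ in a binary instance with $m \leq n$, where $L$ is the set of agents who value at least one house positively; this immediately converts ``minimize \#envy over complete allocations'' into ``maximize \sw{} over complete allocations'' and, combined with the fact that the global welfare optimum is attained by a complete allocation, finishes the proof. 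The two arguments rest on the same underlying observation (under binary valuations and $H$-saturation, an agent is envious iff it likes some house but receives value $0$), but your identity makes the equivalence between the two optimization problems explicit rather than leaving it implicit in a contradiction, which is a small but genuine gain in transparency. No gaps.
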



 Note that \cref{prop:wel-comp-m-less-n} does not hold for weighted instances as we illustrate by \cref{ex:mecWtm-lessn} in \cref{app:subsec:memw}. 
Nevertheless, we show that a \mec allocation can be computed in polynomial time for weighted instances when $m \leq n$. The proof is relegated to \cref{app:utilwel}.

\begin{remark}
First, given a maximum size envy-free allocation we cannot `append' it to achieve a \mec{} allocation even for binary instances. In \cref{ex:tradeoff}, the allocation indicated by orange lines cannot be simply completed to reach the \mec{} allocation (shown in green).
Second, when $m > n$ a \memw{} allocation may leave more agents envious compared to a \mec{} allocation. In \cref{ex:tradeoff}, the \memw{} allocation indicated by blue leaves two agents envious while the \mec{} allocation (shown in green) only leaves one agent envious. 
\end{remark}

\begin{restatable}{theorem}{thmWeightedminEnvycomplete}
\label{thm:WeightedminEnvycomplete}
    Given a weighted instance, when $m \leq n$, a \mec allocation can be computed in polynomial time. 
\end{restatable}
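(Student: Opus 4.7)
The plan is to reduce the problem to maximum bipartite matching, exploiting a simple structural observation about envy under $H$-saturation. The key point is that when $m \leq n$, any complete allocation assigns every house in $H$ to exactly one agent, so the set of assigned houses equals $H$ independently of the chosen allocation. Writing $v_i^* := \max_{h \in H} v_i(h)$, an agent $i$ is non-envious in any $H$-saturating allocation $A$ if and only if either (i)~$v_i^* = 0$, in which case $i$ is trivially non-envious regardless of assignment, or (ii)~$v_i(A(i)) = v_i^*$. In particular, whether $i$ is envious depends only on $v_i$ and $A(i)$, not on the houses received by the other agents.

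With this in hand, let $N_+ := \{i \in N : v_i^* > 0\}$. I would build the bipartite graph $G = (N_+ \cup H, E)$ with $E = \{(i,h) : v_i(h) = v_i^*\}$, compute a maximum matching $M$ in $G$, and extend $M$ to an $H$-saturating allocation $A^*$ by distributing the remaining $m - |M|$ houses arbitrarily among the still-unmatched agents (prioritising agents in $N \setminus N_+$). This extension is always possible since $n - |M| \geq m - |M|$. By the observation above, $\#\envy(A^*) = |N_+| - |M|$: the $|M|$ agents matched by $M$ and all agents in $N \setminus N_+$ are non-envious, while every remaining agent in $N_+$ is envious. Conversely, in \emph{any} $H$-saturating allocation $A$, the pairs $\{(i, A(i)) : i \in N_+,\ v_i(A(i)) = v_i^*\}$ form a matching in $G$, so at most $|M|$ agents of $N_+$ can be non-envious, giving $\#\envy(A) \geq |N_+| - |M| = \#\envy(A^*)$. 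Polynomial runtime follows from $O(nm)$ preprocessing to compute the $v_i^*$'s plus polynomial-time maximum bipartite matching.

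No substantial obstacle is expected; the argument collapses onto the structural observation, which trivialises envy whenever $H$ is entirely assigned. The only mild point to verify is that the matching $M$ can always be completed to an $H$-saturating allocation, which holds because $n \geq m$ leaves enough unmatched agents to cover the $m - |M|$ leftover houses. Intuitively, the case $m \leq n$ is easy precisely because one is not free to choose \emph{which} houses to leave unassigned---every house must go somewhere---which is exactly the combinatorial freedom driving the \NPH{}ness of the general \mec{} problem when $m > n$~\cite{KMS21complexity}.
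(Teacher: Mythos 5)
Your proposal is correct and follows essentially the same route as the paper: the same characterization that, when $m \leq n$, an agent in a complete allocation is non-envious exactly when it holds one of its most-preferred houses (the paper's Claim~\ref{obs:weightedNonTopisEnvy}), the same reduction to a maximum matching in the ``top-choice'' bipartite graph, and the same arbitrary completion plus counting argument (\cref{alg:WeightedminEnvycomplete}). Your restriction to $N_+$ is a small refinement that cleanly handles agents with all-zero valuations, but it does not change the underlying argument.
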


   Our approach to find a \memw{} allocation heavily relies on the set of \sw{} maximizing allocations. Next, we discuss two  observations about the necessity of utilizing \sw{} maximizing allocations and restriction on the number of houses.

\subsection{Minimum Total Envy}

When focusing on total envy of agents, under binary valuations, the total envy can be seen as the number of distinct envy relations between all pair of agents. Whereas in weighted instances, individual values for each assigned houses (and not just pairwise relations) play an important role in computing the total envy.

\begin{example} \label{ex:maxwelfareDifferentEnvy}
    Consider the weighted instance given in \cref{fig:maxwelfareDifferentEnvy}. There are two allocations that maximize the \sw{} with a total welfare of $15$. 
    One allocation (shown in red) leaves three agents envious (namely, $a_2$, $a_3$, and $a_4$) with a total envy of $5$; while another allocation (shown in green) only contains one envious agent ($a_1$) and still generates the total envy of $5$.    
    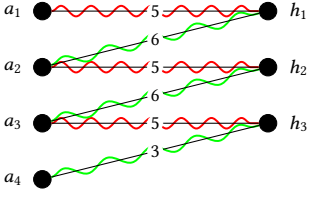
\begin{figure}[t]
        \centering
        \begin{minipage}{0.45\linewidth}
        {\footnotesize
        \begin{tikzpicture}[every node/.style={draw,circle}, fsnode/.style={fill=black}, ssnode/.style={fill=black}, vnode/.style={draw=white, fill=white, inner sep=0pt,outer sep=0pt, midway}]
        
        \begin{scope}[start chain=going below,node distance=5mm]
        \foreach \i in {$a_1$, $a_2$, $a_3$, $a_4$}
          \node[fsnode,on chain] (f\i) [label=left: \i] {};
        \end{scope}
        
        \begin{scope}[xshift=3cm,yshift=0cm,start chain=going below,node distance=5mm]
        \foreach \i in {$h_1$, $h_2$, $h_3$}
          \node[ssnode,on chain] (s\i) [label=right: \i] {};
        \end{scope}
        
        \tikzset{decoration={snake,amplitude=.7mm,segment length=4mm,
                       post length=0mm,pre length=0mm}}

        \draw[decorate, red, line width=0.75] (f$a_1$) -- (s$h_1$);
        \draw[decorate, red, line width=0.75] (f$a_2$) -- (s$h_2$);
        \draw[decorate, red, line width=0.75] (f$a_3$) -- (s$h_3$);

        \tikzset{decoration={snake,amplitude=.7mm,segment length=6mm,
                      post length=0mm,pre length=0mm}}
        \draw[decorate, green, line width=0.75] (f$a_2$) -- (s$h_1$);
        \draw[decorate, green, line width=0.75] (f$a_3$) -- (s$h_2$);
        \draw[decorate, green, line width=0.75] (f$a_4$) -- (s$h_3$);

        \draw (f$a_1$) -- (s$h_1$) node[vnode] {$5$};
        \draw (f$a_2$) -- (s$h_2$) node[vnode] {$5$};
        \draw (f$a_2$) -- (s$h_1$) node[vnode] {$6$};
        \draw (f$a_3$) -- (s$h_2$) node[vnode] {$6$};
        \draw (f$a_3$) -- (s$h_3$) node[vnode] {$5$};
        \draw (f$a_4$) -- (s$h_3$) node[vnode] {$3$};
        
       \end{tikzpicture}}
        \end{minipage}\hfill
        \begin{minipage}{0.45\linewidth}  
   
        \caption{\small{An example depicting the difference between min \#envy and min total envy in \sw{}-maximizing allocations. Red has three envious agents, namely, $a_2,a_3,a_4$ and green has one envious agent $a_1$ but both has total envy five.}}
       \label{fig:maxwelfareDifferentEnvy}
        \end{minipage}
    %
\end{figure}
\end{example}

This example illustrates that the proposed algorithms for finding \memw{} cannot be readily used for finding \mtemw{}. In \cref{app:utilwel}, we present an example (\cref{ex:envy_totalenvy_binary}) which shows that this challenge persists even for binary instances. 
Nonetheless, we show that one can achieve a \mtemw{} allocation in polynomial time by constructing a bipartite graph, similar to the algorithm for \memw{}, with a carefully crafted cost function that encode envy and \sw{} as cost, and utilizing algorithms for finding a minimum cost perfect matching. 

\paragraph{\textbf{Algorithm description}.}
The algorithm (\cref{alg:minTotalEnvyMaxWelfare} in \cref{app:subsec:mtew}) proceeds by constructing a bipartite graph $G = (N \cup H', E)$ where the set $H'$ is constructed by adding a set of $n$ dummy houses to the set of houses $H$. That is, $H' = H \cup \{h^i \mid i \in N\}$. Given an agent $i\in N$ and $h\in H'$, we have $(i, h) \in E$ if and only if $h\in H$ and $v_i(h) >0$, or $h \in H' \setminus H$. 
 We construct a cost function $c$ on edges of $G$. Before we define the cost function we scale the valuations $V$ such that for each agent $i$ and house $h$, if we have that $v_i(h) >0$, then $v_i(h) \geq 1$. Now we define two components of the cost function $c$, namely, envy component $c_{envy}$ and \sw{} component $c_{sw}$ such that $c = c_{envy} + c_{sw}$.
  Let $H^{\max}_i$ be the set of most preferred houses in $H$ for agent $i$, i.e., $H^{\max}_i = \argmax_{h \in H} v_i(h)$. For ease of exposition we assume $v_i(h) = 0$ for each agent $i$ and a dummy house $h\in H'\setminus H$. For an edge $(i, h) \in E$, if $ h \in H^{\max}_i$, then define $c_{envy}(i, h)  = 0$; otherwise $c_{envy}(i, h)= \sum_{\bar{h} \in H}\max\{v_i(\bar{h}) - v_i(h), 0\}$.
We denote $\sum_{i\in N}\sum_{\bar{h} \in H} v_i(\bar{h}) +1$ by $L$. Furthermore, we define the \sw{} component of the cost $c$ for an edge $(i,h) \in E$ as   $c_{sw}(i, h)  =-v_i(h)\cdot L$.
 Finally, we return a \mcmw{} matching in $G$.

\begin{restatable}{theorem}{thmminTotalEnvyMaxWelfare}
\label{thm:minTotalEnvyMaxWelfare}
    Given a weighted instance, a \mtemw{} allocation can be computed in polynomial time.
\end{restatable}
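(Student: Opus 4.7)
The plan is to establish correctness of the \mcmw{}-based algorithm by showing that (i) perfect matchings in $G$ are in bijective correspondence with house allocations (a match to a dummy house encodes an agent left unassigned in the original instance), (ii) the $c_{sw}$ component of the cost encodes $-L \cdot \sw$, and (iii) within the set of \sw{}-maximizing matchings, the $c_{envy}$ component evaluates to the actual total envy of the corresponding allocation. The constant $L$, being strictly larger than the maximum possible $c_{envy}$ sum, will then act as a lexicographic separator: the minimum-cost perfect matching first maximizes \sw{} and then, within this restricted set, minimizes $c_{envy}$.

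First, I would verify the matching-allocation correspondence. The dummy houses $\{h^i \mid i \in N\}$ guarantee that $G$ always admits a perfect matching. Any perfect matching $M$ in $G$ induces a (possibly partial) house allocation $A_M$ by discarding the dummy edges, and conversely any allocation extends to a perfect matching in $G$. Using the convention $v_i(h) = 0$ for every dummy house $h$, we immediately get $\sum_{(i,h) \in M} c_{sw}(i,h) = -L \cdot \sw(A_M)$.

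The key technical step --- and the main obstacle --- is a swap lemma showing that $\sum_{i\in N} c_{envy}(i, A(i)) = \tenvy(A)$ whenever $A$ is \sw{}-maximizing. The argument is: in any max-\sw{} allocation, every house $h' \in H$ with $v_i(h') > v_i(A(i))$ must already be assigned to some other agent, because otherwise reassigning $i$ to $h'$ would strictly increase \sw{}. Hence the actual envy of agent $i$ equals $\sum_{h' \in H} \max\{v_i(h') - v_i(A(i)), 0\}$: assigned houses $h'$ contribute exactly as in the envy definition, while unassigned houses $h'$ necessarily satisfy $v_i(h') \leq v_i(A(i))$ and contribute zero. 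This coincides with $c_{envy}(i, A(i))$, including the boundary cases where $A(i) \in H^{\max}_i$ (both sides are $0$) and where $i$ is matched to its dummy, i.e., unassigned in $A$ (here $v_i(A(i)) = 0$ and every positively valued house of $i$ is taken elsewhere by the same swap argument).

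Finally, I would use the choice of $L$ to argue the lexicographic behaviour. After a standard scaling of rational valuations to integers (preserving the property that positive values are at least $1$), any two distinct \sw{} values differ by at least $1$. Since $\sum_{i \in N} c_{envy}(i, A(i)) \leq \sum_{i\in N} \sum_{\bar h \in H} v_i(\bar h) < L$ for every perfect matching, a \sw{}-gain of $1$ contributes $L$ to $c_{sw}$, which strictly dominates the full possible swing in $c_{envy}$. Consequently every minimum-cost perfect matching is \sw{}-maximizing, and by the previous paragraph it then minimizes $\tenvy$ among such matchings. Polynomial runtime follows from any standard algorithm for minimum-cost perfect matching~\cite{ramshaw2012minimum}.
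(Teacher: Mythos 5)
Your proposal is correct and follows essentially the same route as the paper's proof: the same graph with dummy houses, the same swap argument showing that in a max-\sw{} allocation every house an agent values above its assigned one must be allocated (so the $c_{envy}$ component equals the true total envy), and the same use of the large constant $L$ as a lexicographic separator between the welfare and envy components of the cost. The only minor difference is that you explicitly scale to integer valuations so that distinct \sw{} values differ by at least one, which is in fact a slightly more careful justification of the separation step than the paper's appeal to $v_i(h)\geq 1$.
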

\begin{proof}[Proof Sketch]
    Suppose the algorithm (\cref{alg:minTotalEnvyMaxWelfare}) returns allocation $A$. Then $A$ corresponds to a \mcmw{}  in the constructed graph $G$. We show that by minimizing the cost, we maximize the welfare and minimize the total envy. Observe that the cost of each pair $(i,h) \in A$ has two components, namely, \sw{} component $-v_i(h)\cdot L$ and a envy component. To complete the proof we show that (i) since $L$ is large, a minimum cost matching in $G$ maximizes \sw{}; (ii) the envy component of the cost of a perfect matching correctly computes the total envy of each agent. It follows from the fact that in a max \sw{} allocation every house valued higher than house $h$ by agent $i$ must be allocated for each $(i,h) \in A$.
\end{proof}

We show a result analogous to \Cref{prop:wel-comp-m-less-n} holds for min total envy even for weighted instances.

\begin{restatable}{proposition}{propWelCompmlessTE}\label{prop:TE-wel-comp-m-less-n}
Given a weighted instance with $m\leq n$, let $A^*$ be a \mtemw allocation. Then a complete allocation $A$ can be constructed in polynomial time such that
\begin{enumerate}
    \item $A^*$ and $A$ has equal \sw{} and envy, and
    \item $A$ is a \mtec allocation.
\end{enumerate}
\end{restatable}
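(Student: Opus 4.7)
The plan is to build $A$ by extending $A^*$ into a complete allocation: for every house $h \in H$ unassigned in $A^*$, pick some agent $j$ unassigned in $A^*$ and set $A(j) = h$, proceeding until all houses are covered. Since $m \le n$, the count of unassigned houses $m - |A^*|$ is at most the count of unassigned agents $n - |A^*|$, so such an extension is always feasible and polynomial-time. The resulting $A$ coincides with $A^*$ on the previously assigned agents and additionally matches every remaining house to a distinct previously unassigned agent.

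To establish (i), I would lean on two structural properties of max-\sw{} allocations obtained via standard exchange arguments: (a) for every house $h$ unassigned in $A^*$ and every agent $j$ unassigned in $A^*$, $v_j(h) = 0$, since otherwise augmenting $A^*$ with the edge $(j,h)$ would strictly increase \sw{}, contradicting the maximality of $A^*$; and (b) for every agent $i$ assigned in $A^*$ and every unassigned house $h$, $v_i(A^*(i)) \ge v_i(h)$, since otherwise reassigning $i$ from $A^*(i)$ to $h$ improves \sw{}. Property (a) immediately gives $\sw(A) = \sw(A^*)$ because each newly introduced pair carries value $0$. For the envy equality, a short case analysis on pairs suffices: previously assigned agents cannot envy newly assigned ones by (b); newly assigned agents do not envy one another by (a); and a newly assigned agent's envy toward any previously assigned agent is unchanged because its own value is still $0$. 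Summing these pairwise contributions yields $\tenvy(A) = \tenvy(A^*)$.

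For (ii), I would argue by contradiction: suppose some complete $B$ has $\tenvy(B) < \tenvy(A) = \tenvy(A^*)$; then the goal is to produce a max-\sw{} allocation $B'$ with $\tenvy(B') \le \tenvy(B)$, contradicting the minimality of $A^*$ among max-\sw{} allocations. The construction of $B'$ would mirror the min-cost matching setup of Algorithm~\ref{alg:minTotalEnvyMaxWelfare}, but adapted so that perfect matchings correspond to complete allocations by adding $n-m$ dummy slots to $H$ and using a cost of the form $c_{envy}(i,h) + c_{sw}(i,h)$ with the \sw{}-component scaled by a sufficiently large $L$. A useful identity supporting this approach is that for any complete allocation $c_{envy}(i,h) = \sum_{\bar h \in H} \max(v_i(\bar h) - v_i(h), 0)$ exactly equals the actual total envy of agent $i$, because every real house is assigned and therefore contributes to the envy sum; this is what allows the envy component to be read off directly from the matching. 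The hard part will be extracting $B'$ from $B$ while not increasing envy: local \sw{}-improving swaps can raise envy in weighted instances, so the argument must proceed globally, likely by comparing min-cost matchings over complete and over max-\sw{} sets and showing that any envy reduction attained outside the max-\sw{} stratum can be duplicated within it.
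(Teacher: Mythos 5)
Your construction and your proof of part (i) match the paper's: extend $A^*$ by assigning each leftover house to a leftover agent, and use the two exchange properties of max-\sw{} allocations (unassigned agents value unassigned houses at zero; assigned agents weakly prefer their own house to any unassigned one) to conclude that neither \sw{} nor any agent's total envy changes. That part is correct and complete.

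Part (ii), however, has a genuine gap, and you flag it yourself: the entire content of the claim is the step you label ``the hard part,'' namely producing from a hypothetical complete allocation $B$ with $\tenvy(B) < \tenvy(A)$ a max-\sw{} allocation $B'$ with $\tenvy(B') \le \tenvy(B)$. Appealing to the min-cost-matching formulation of Algorithm~\ref{alg:minTotalEnvyMaxWelfare} does not supply this step: that algorithm optimizes over the max-\sw{} stratum by design (the large multiplier $L$ forces welfare maximization first), so it cannot certify anything about complete allocations that are \emph{not} welfare-maximizing, which is exactly where $B$ might live. The paper closes this gap with a direct combinatorial exchange argument on the symmetric difference of $A$ and $B$: since $\tenvy(B)<\tenvy(A)$ forces $\sw(B)<\sw(A)$, there is a house $h$ whose recipient under $A$ values it more than its recipient under $B$; following the alternating path in $A\,\triangle\,B$ starting at $h$, either no agent along it has strictly larger envy in $A$ than in $B$ --- in which case rerouting $B$ along the path strictly decreases $\tenvy(B)$, contradicting its minimality --- or there is a first such agent $z$, in which case rerouting $A$ along the prefix up to $z$ strictly decreases $\tenvy(A)$ while strictly increasing its welfare, contradicting $\sw$-maximality of $A$. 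You would need to supply an argument of this kind (or a genuinely worked-out version of your ``duplicate the envy reduction within the max-\sw{} stratum'' idea) for the proof to stand; as written, part (ii) asserts the conclusion rather than proving it.
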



\cref{cor:minTotalEnvyComplete-mleqn} follows immediately from \cref{prop:TE-wel-comp-m-less-n}.

\begin{restatable}{corollary}{thmminTotalEnvyComplete}
\label{cor:minTotalEnvyComplete-mleqn}
    Given a weighted instance, a \mtec allocation can be computed in polynomial time when $m \leq n$.
\end{restatable}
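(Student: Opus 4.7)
The plan is to obtain the corollary as a direct composition of the two polynomial-time procedures already established, namely the algorithm of Theorem~\ref{thm:minTotalEnvyMaxWelfare} and the completion construction of Proposition~\ref{prop:TE-wel-comp-m-less-n}. Concretely, I would first invoke the polynomial-time algorithm underlying Theorem~\ref{thm:minTotalEnvyMaxWelfare} on the given weighted instance to compute an allocation $A^*$ that minimizes total envy among all allocations of maximum utilitarian welfare. No extra assumption on the relationship between $m$ and $n$ is required at this stage, since that theorem applies to arbitrary weighted instances.

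The second step uses the hypothesis $m\leq n$ so that Proposition~\ref{prop:TE-wel-comp-m-less-n} is applicable. I would feed $A^*$ into the polynomial-time construction described in that proposition, obtaining a complete allocation $A$. By part~(1) of the proposition, $A$ has the same utilitarian welfare and the same total envy as $A^*$, so nothing is lost in the completion. By part~(2), $A$ is already guaranteed to be a min total envy complete allocation, which is exactly the object the corollary asks for. Returning $A$ therefore finishes the argument.

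Because the procedure is just a chain of two polynomial-time subroutines whose input/output requirements match, the overall running time is polynomial, and correctness is inherited directly from the guarantees of Theorem~\ref{thm:minTotalEnvyMaxWelfare} and Proposition~\ref{prop:TE-wel-comp-m-less-n}. Indeed there is no genuine obstacle here; the only thing worth checking is that the interface between the two subroutines is clean, i.e., that the allocation produced in the first step is a valid input for the completion procedure of the second step, which is precisely the hypothesis of Proposition~\ref{prop:TE-wel-comp-m-less-n}. Consequently, the corollary follows immediately, as claimed.
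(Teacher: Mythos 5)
Your proposal is correct and matches the paper's own argument: the paper derives \cref{cor:minTotalEnvyComplete-mleqn} immediately from \cref{prop:TE-wel-comp-m-less-n}, implicitly combined with the polynomial-time computability of a \mtemw{} allocation from \cref{thm:minTotalEnvyMaxWelfare}, exactly as you describe. You have simply made the composition of the two subroutines explicit, which is a faithful expansion of the same reasoning.
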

  

Observe that even when $m \leq n$, there may be several complete matchings with different total envy.
In \cref{fig:completematchings}, both matchings are complete because they assign all the houses, however, the allocation indicated by red yields a higher total envy (two by $a_2$) than the green one (one by $a_3$). 

    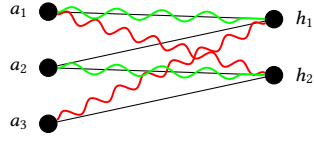
\begin{figure}[t] 
    \centering
    \begin{minipage}{0.45\linewidth}
        \centering
        {\footnotesize
        \begin{tikzpicture}[every node/.style={draw,circle}, fsnode/.style={fill=black}, ssnode/.style={fill=black}]
        
        \begin{scope}[start chain=going below,node distance=5mm]
        \foreach \i in {$a_1$, $a_2$, $a_3$}
          \node[fsnode,on chain] (f\i) [label=left: \i] {};
        \end{scope}
        
        \begin{scope}[xshift=3cm,yshift=-0.1cm,start chain=going below,node distance=5mm]
        \foreach \i in {$h_1$, $h_2$}
          \node[ssnode,on chain] (s\i) [label=right: \i] {};
        \end{scope}
        
        \tikzset{decoration={snake,amplitude=.7mm,segment length=4mm,
                       post length=0mm,pre length=0mm}}
        \draw (f$a_1$) -- (s$h_1$);
  
        \draw (f$a_2$) -- (s$h_1$);
        \draw (f$a_2$) -- (s$h_2$);
        \draw (f$a_3$) -- (s$h_2$);
    
        \draw[decorate, red, line width=0.75] (f$a_1$) -- (s$h_2$);
        \draw[decorate, red, line width=0.75] (f$a_3$) -- (s$h_1$);

        \tikzset{decoration={snake,amplitude=.7mm,segment length=6mm,
                      post length=0mm,pre length=0mm}}
        \draw[decorate, green, line width=0.75] (f$a_1$) -- (s$h_1$);
        \draw[decorate, green, line width=0.75] (f$a_2$) -- (s$h_2$);
        
       \end{tikzpicture}}
       
       \end{minipage}\hfill
       \begin{minipage}{0.46\linewidth}

        \caption{\small{Two complete allocations with different total envy. Total envy of the red allocation is two due to agent $a_2$, the same for green is one due to $a_3$.}}
        \label{fig:completematchings}
    \end{minipage}
%
\end{figure}
  
\section{Egalitarian Welfare}\label{sec:egal}

When the efficiency measure is maximizing the utilitarian welfare, any maximum-weight matching on the induced bipartite graph returns the required allocation in polynomial time.
However, the problem of finding an allocation that maximizes the egalitarian welfare has received less attention in the house allocation setting.


While in fair division finding an allocation that maximizes the egalitarian welfare is \NPH{}\footnote{When agents can receive multiple items, an egalitarian allocation always exists but computing a  egalitarian allocation is \NPH{} \cite{bouveret2016characterizing}.
},
we show that in the house allocation setting wherein agents are restricted to receive at most one house, an egalitarian solution can be found in polynomial time.

Note that in binary instances, finding an egalitarian allocation is equivalent to finding an envy-free allocation of maximum size (\cref{prop:EFmaxSize}) since in every allocation the egalitarian welfare is either zero or one.
When it comes to weighted instances, however, the goal is to maximize the number of agents who receive a positive value and, conditioned on that, maximize the value of the worst-off agent.
We use this intuition to search for an allocation that maximizes the number of agents that receive a positively valued house.

\begin{algorithm}[t]\small
        \caption{Finding an allocation of max \egwel{}}\label{alg:algorithm-egalk}
    \begin{algorithmic}[1]
    \REQUIRE A house allocation instance $\langle N, H, V \rangle$.
    \ENSURE An allocation with maximum egalitarian welfare.
    \FOR{$k = n$ to $1$}
        \STATE Let $v_{\max} \coloneqq \max_{i,h}{v_{i}(h)}$ and $v_{\min} \coloneqq \min_{i,h}{v_{i}(h)}$
        
        $\triangleright$ \textsc{\footnotesize Looping  all unique values of houses in $H$}
        
        \FOR{$\beta = v_{\max}$ to $v_{\min}$}
             \STATE Create a bipartite graph $G_{\beta} = (N \cup H, E)$ s.t. there is exists an edge between each agent $i\in N$ and each house $h\in H$ if $v_{i}(h) \geq \beta$ 
            \STATE Let $A \coloneqq$ a maximum size matching of $G_{\beta}$
            \IF{there exist $k$ allocated agents in $A$}
                \RETURN Allocation $A$
            \ENDIF   
        \ENDFOR
    \ENDFOR
    \end{algorithmic}
\end{algorithm}

\paragraph{\textbf{Algorithm description.}}
The algorithm (\cref{alg:algorithm-egalk}) begins by considering the maximum number of agents $k=n$ who can potentially receive a positively-valued house. 
Consider the set of all agent valuations in decreasing order. 
For each such value $\beta$, create a bipartite graph $G_{\beta}$ where there is an edge between any agent-house $(i,h)$ pair if $v_{i}(h) \geq \beta$.
Now, find a maximum-size matching $M$ on $G_{\beta}$.
If the size of the matching is $k$, the algorithm returns $M$ as the required allocation. 
Otherwise, it repeats this process by decreasing the size to $k = k-1$. 

\begin{restatable}{theorem}{thmEgalWelfare}
\label{thm:EgalWelfare}
    Given a weighted instance, an egalitarian welfare maximizing allocation can be found in polynomial time.
\end{restatable}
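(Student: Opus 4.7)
The plan is to establish two things about Algorithm~\ref{alg:algorithm-egalk}: polynomial runtime and correctness under the two-tier egalitarian definition of \cref{sec:model}. Let $k^*$ be the largest number of agents that can simultaneously receive a positively valued house in any feasible allocation, and let $\beta^*$ be the largest value such that some allocation assigns $k^*$ agents each a house of value at least $\beta^*$; the target is an allocation achieving both.

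The runtime is straightforward: the outer loop executes at most $n$ times, and the inner loop only needs to scan the at most $nm$ distinct valuation thresholds (since $G_\beta$ only changes when $\beta$ crosses a value appearing in $V$). Each iteration builds a bipartite graph on $n+m$ vertices with at most $nm$ edges and computes a maximum matching in polynomial time (e.g., via Hopcroft--Karp). Hence the overall runtime is polynomial in $n$ and $m$.

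For correctness I would restrict the inner scan to strictly positive thresholds $\beta$, so that any matching in $G_\beta$ uses only positively valued edges and therefore witnesses $|M|$ agents each receiving positive value. Three observations then suffice. First, for every $k > k^*$ and every positive $\beta$, $G_\beta$ admits no matching of size $k$, as such a matching would contradict the definition of $k^*$; hence no outer iteration with $k > k^*$ can return. Second, at $k = k^*$ and $\beta = \beta^*$, the allocation witnessing $\beta^*$ yields a size-$k^*$ matching in $G_{\beta^*}$, so the inner loop succeeds on or before $\beta = \beta^*$. Third, for any $\beta > \beta^*$ a size-$k^*$ matching in $G_\beta$ would induce an allocation whose minimum positive value is at least $\beta > \beta^*$, contradicting the maximality of $\beta^*$; so the algorithm cannot terminate earlier in the inner scan. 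Combining these, the algorithm terminates at $k = k^*$, $\beta = \beta^*$ with an allocation that is egalitarian-welfare maximizing.

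The main subtlety, and the step I expect to require the most care, is the boundary case where $v_{\min} = 0$: the pseudocode must skip $\beta = 0$ in the inner loop (or equivalently, interpret $v_{\min}$ as the smallest strictly positive valuation). Otherwise, at $k = n$ the threshold $\beta = 0$ makes $G_\beta$ the complete bipartite graph, which admits a matching of size $\min(n,m)$ using zero-weight edges; the algorithm would then return a size-$n$ allocation with egalitarian welfare $0$ even when $k^* < n$ and the true optimum is strictly positive on fewer agents. Once this is patched, the rest of the argument is a clean monotonicity statement: for fixed $k$, the maximum matching size in $G_\beta$ is non-decreasing as $\beta$ decreases, so the top-down scan finds exactly the largest threshold witnessing size $k$, which is the desired $\beta^*$ once the outer loop reaches $k^*$.
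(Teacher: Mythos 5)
Your proof follows essentially the same approach as the paper's: both argue that the outer loop identifies the largest number $k$ of agents who can simultaneously receive positive value, that the top-down scan over thresholds $\beta$ then certifies the maximum egalitarian welfare attainable at that size, and that polynomiality follows from the $O(nm)$ distinct thresholds combined with polynomial-time maximum bipartite matching. Your explicit handling of the $\beta = 0$ boundary case is a genuine improvement in rigor --- the pseudocode's $v_{\min}$ can be zero, in which case $G_{0}$ is the complete bipartite graph and the algorithm as literally written would return a size-$n$ allocation of egalitarian welfare zero even when a smaller set of agents could all receive positive value --- whereas the paper's proof silently assumes the iteration is restricted to positive values of $\beta$.
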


\begin{proof}
Suppose that \cref{alg:algorithm-egalk} returns an allocation $A$ of size $k$ where every allocated agent receives positive value of at least $\beta$. To prove the theorem we need to show that (i) $k$ is the largest number of agents that can simultaneously receive positive value and (ii) $\egwel(A)$ is maximum among all allocations of size at least $k$. 
Note that it is sufficient to prove this for size exactly $k$ since we cannot increase \egwel{} by increasing the size of $A$.
Since $A$ is a maximum size matching in $G_{\beta}$, from definition of $G_{\beta}$, it holds that we cannot allocate more than $k$ agents to the houses they value at least $\beta$. Furthermore, the size $k$ decreases from $n$, and the algorithm returns the first $k$-sized allocation where each assigned agent receives positive value. 
Thus $A$ is the largest possible allocation where each assigned agent gets some positive value since we iterate over all positive values of $\beta$. Thus we show (i).
Moreover, since we start by setting $\beta$ to the highest possible value of \egwel{} and decrease step by step, there does not exist a $k$ size allocation for a higher value of $\beta$.   Therefore, for any $k$-sized allocation $\beta$ is the maximum \egwel{}. Thus we show (ii).
Note that the possible values of $\beta$ is bounded by the distinct values agents  have towards the houses. Then, there are $O(mn)$ values can be assumed by $\beta$. Thus the algorithm runs in polynomial time.
\end{proof}

Clearly, a maximum egalitarian allocation may not be unique. Thus, a natural question is whether we can find a fair allocation among all such allocations. 
We first show that analogous to its utilitarian counterpart (\cref{thm:WeightedEFmaxWelfare}), an envy-free allocation (if one exists) of maximum \egwel{} can be computed in polynomial time.

\paragraph{\textbf{Algorithm description}.}
Given an instance $I = \langle N, H, V \rangle$, \cref{alg:maxegalEF} first finds a max \egwel{} allocation with  welfare at least $\beta$ for $k$ agents using \cref{alg:algorithm-egalk}. Then we find an envy-free allocation, if exists, with egalitarian welfare at least $\beta$ for $k$ agents. 
We construct a reduced valuation $V'$ where $v'_{i}(h)$ is set to $v_i(h)$ if $v_i(h)\geq \beta$ and is zero otherwise for an agent $i \in N$ and house $h \in H$.
Then, we invoke \cref{alg:algorithmenvyt} as a subroutine to find a EF allocation $A$ in $\langle N, H, V' \rangle$. If $k$ agents receive value at least $\beta$ in $A$, then we return the allocation $A$; otherwise we return an empty allocation since no EF allocation of max \egwel{} exists.

\begin{algorithm}[t]\small
        \caption{Finding an EF allocation of maximum \egwel{}}\label{alg:maxegalEF}
    \begin{algorithmic}[1]
    \REQUIRE A house allocation instance $\langle N, H, V \rangle$.
    \ENSURE An EF allocation with maximum egalitarian welfare.
    \STATE Let $k$ and $\beta$ denote the size and \egwel{} of an allocation returned by \cref{alg:algorithm-egalk}.
    \STATE Create valuation $V'$ s.t $v'_{i}(h)= v_i(h)$ if $v_i(h)\!\geq \beta$; $v'_{i}(h)\!=\!0$ otherwise, for an agent $i \in N$ and house $h \in H$.
    \STATE Let $A$ be the allocation returned by \cref{alg:algorithmenvyt} given $\langle N, H, V' \rangle$.
    \IF{there exists $k$ allocated agents in $A$}
        \RETURN allocation $A$. 
    \ENDIF
    \RETURN $\emptyset$
    \end{algorithmic}
\end{algorithm} 
We prove the correctness of \cref{alg:maxegalEF} in the next theorem.

\begin{restatable}{theorem}{thmEFegalwel}\label{thm:EFegalwel}
    Given a weighted instance, an envy-free allocation of maximum \egwel{} can be computed in polynomial time.    
\end{restatable}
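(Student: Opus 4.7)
The plan is to prove polynomial runtime and correctness separately. Runtime is immediate: both \cref{alg:algorithm-egalk} and \cref{alg:algorithmenvyt} are polynomial time by \cref{thm:EgalWelfare} and \cref{thm:algorithmenvyt}, and constructing the truncated valuation $V'$ together with the final cardinality check add only $O(nm)$ overhead. For correctness, I would first invoke \cref{thm:EgalWelfare} to assert that the pair $(k, \beta)$ produced by \cref{alg:algorithm-egalk} is tight: $k$ is the largest number of agents that can simultaneously be assigned a house of value $\geq \beta$, and $\beta$ is the largest achievable threshold for that $k$. Hence any EF allocation that attains the maximum egalitarian welfare must assign $k$ agents to houses valued $\geq \beta$ in $V$.

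The core structural claim is an equivalence: an EF max-egalitarian allocation exists in $V$ if and only if \cref{alg:algorithmenvyt}, run on $\langle N, H, V'\rangle$, returns an allocation with at least $k$ agents on houses of positive $V'$-value. The ``only if'' direction proceeds by a short case analysis on whether $v_i(A(j))$ is above or below $\beta$: if $A$ is EF in $V$ and $v_i(A(i)) \geq \beta$, both subcases preserve $v'_i(A(i)) \geq v'_i(A(j))$; if $v_i(A(i)) < \beta$, envy-freeness in $V$ forces $v_i(A(j)) \leq v_i(A(i)) < \beta$, so $v'_i(A(j)) = 0 = v'_i(A(i))$. Thus any EF max-egalitarian allocation in $V$ is also EF in $V'$ with $k$ positively assigned agents, and by correctness of \cref{alg:algorithmenvyt} a matching of this size is attained on the $V'$ instance.

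The ``if'' direction---that the $V'$-EF allocation returned by \cref{alg:algorithmenvyt} is envy-free in $V$ itself---is the main technical obstacle. Envy-freeness among the top agents transfers back to $V$ by a symmetric case analysis. The delicate case involves bottom agents whom the subroutine assigns to houses of $V'$-value $0$: their potential envy towards top agents' houses is masked by the truncation whenever the envied house is valued in the open interval $(0,\beta)$ under $V$. I plan to close this gap by leveraging \cref{lem:halldeletion}: the Hall-violator deletions in \cref{alg:algorithmenvyt} remove precisely the houses that cannot belong to any EF allocation under $V'$, and combining this with the top agents being matched along their $V'$-top preferences allows any residual bottom-agent envy in $V$ to be traced back to a Hall-violator pattern that would have been eliminated within the subroutine. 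Together with the cardinality check, the two directions yield an EF allocation of maximum egalitarian welfare in polynomial time.
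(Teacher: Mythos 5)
Your decomposition mirrors the paper's \cref{alg:maxegalEF} exactly (compute $(k,\beta)$, truncate to $V'$, run \cref{alg:algorithmenvyt}, check the count), and your ``only if'' direction --- EF under $V$ implies EF under $V'$ via the two-case analysis on whether $v_i(A(i))\ge\beta$ --- is correct and is essentially the completeness half the paper relies on. The genuine gap is exactly where you flag it, and your proposed repair does not close it. The Hall-violator deletions of \cref{lem:halldeletion} are performed on the instance $\langle N,H,V'\rangle$, in which every value in the open interval $(0,\beta)$ has already been zeroed out; consequently an agent whose positive $V$-values all lie below $\beta$ contributes no edges to any graph built by the subroutine, belongs to no Hall set, and triggers no deletion. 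Its sub-threshold envy is invisible to the machinery you want to invoke and cannot be ``traced back to a Hall-violator pattern.''

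Concretely, take $N=\{1,2,3\}$, $H=\{h_1,h_2,h_3\}$ with $v_1=(10,0,0)$, $v_2=(0,10,0)$, $v_3=(5,3,0)$. Only agents $1$ and $2$ can simultaneously receive positive value, so $k=2$ and $\beta=10$, and $V'$ zeroes out all of agent $3$'s values. \cref{alg:algorithmenvyt} on $V'$ returns $1\mapsto h_1$, $2\mapsto h_2$, $3\mapsto h_3$, which is EF under $V'$ and passes the cardinality check (two agents at value $\ge\beta$), yet under $V$ agent $3$ envies both others since $v_3(h_1)=5>0=v_3(h_3)$. Moreover every max \egwel{} allocation must give $h_1,h_2$ to agents $1,2$, so no EF max \egwel{} allocation exists here at all; the ``if'' direction of your claimed equivalence is therefore false as stated --- a $V'$-EF allocation passing the check need not be EF under $V$. (The paper's own proof simply asserts ``$A$ is EF'' via \cref{thm:algorithmenvyt} without addressing this, so you have correctly located the crux; but closing it requires at minimum an explicit EF check on the output under the original $V$, together with an argument that failure of that check certifies non-existence --- an argument neither your sketch nor the paper supplies.)
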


\begin{proof}
We have the following property of the allocation $A$ returned by \cref{alg:maxegalEF}.
     Each agent that is assigned a house in $A$ receives value at least $\beta$.
The statement follows from the facts that each agent is assigned a house that it values positively by \cref{alg:algorithmenvyt} and each positive value in $V'$ is at least $\beta$. 
However, there may exist agents that is not assigned any house in $A$. Thus, if \cref{alg:algorithmenvyt} returns an allocation that does not assign $k$ agents, from the definition of maximum egalitarian welfare we conclude that there is no EF allocation of max \egwel{}.  The correctness of this step follows from \cref{thm:algorithmenvyt}. It shows that in $A$, maximum number of agents are assigned with positive value.  
So an agent $i$ that is unassigned in the allocation $A$ cannot be assigned to a house it likes without generating envy. Therefore, there is no EF allocation that can match $i$ to a house that it values $\beta$ or more. 

Using \cref{thm:algorithmenvyt}, we have that allocation $A$ is EF. This completes the proof of correctness of \cref{alg:maxegalEF}.
Since  Algorithms~\ref{alg:algorithmenvyt} and \ref{alg:algorithm-egalk} runs in time polynomial in $n$ and $m$, \cref{alg:maxegalEF} runs in polynomial time.
\end{proof}

\begin{figure*}
    \centering
    \includegraphics[scale=0.38]{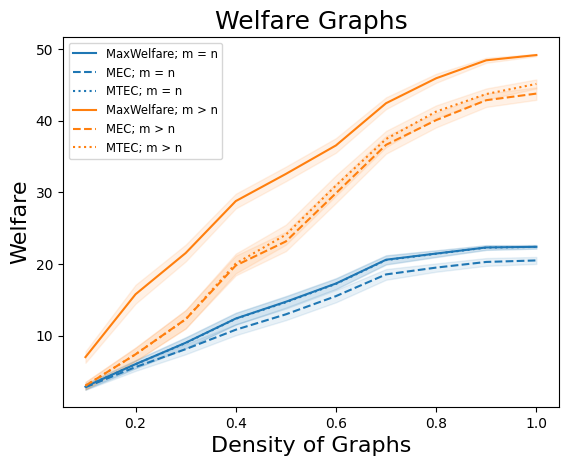}
    \includegraphics[scale=0.38]{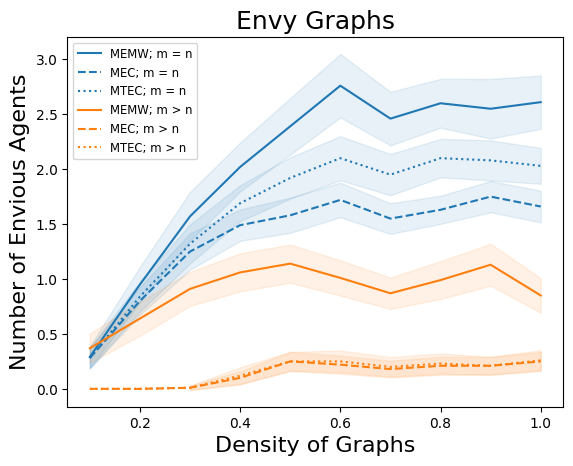}    
    \includegraphics[scale=0.38]{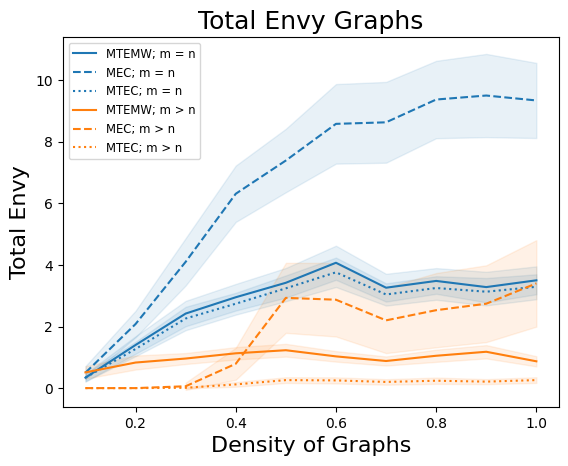} 
    \caption{The averages of \#envy (number of envious agents), total envy, and \sw{} over $100$ random trials for each graph density, given $5$ agents and $\{5, 10\}$ houses. Codes in the legend refer to MTEC: \mtec{}; MEC: \mec{}; MEMW: \memw{}; MTEMW: \mtemw{}.}
    \label{fig:truncborda}
\end{figure*}

\subsection{Minimum \#Envy}

We aim to find an \egwel{} welfare maximizing allocation that minimizes the number of envious agents. Under binary valuations, the \egwel{} is either zero or one. When the \egwel{} is one, we return a complete, envy-free allocation. For \egwel{} zero, an empty allocation is the optimal solution.
In contrast to the utilitarian welfare setting (\cref{thm:WeightedminEmaxWelfare}), finding a max \egwel{} allocation that is \minEnvy{} is intractable in a weighted instance. 

\begin{restatable}{theorem}{thmminenvykegal}\label{thm:minenvyk-egal}
    Given a weighted instance, finding a \minEnvy{} max \egwel{} allocation is \NPH{}.
\end{restatable}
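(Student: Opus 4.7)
The plan is to establish \NPH{}-ness via polynomial-time reduction from a known \NPH{} problem. A natural source is the \mec{} problem, which is \NPH{} on binary instances when $m > n$ by \cite{KMS21complexity}; the idea is to use weighted valuations to convert the ``complete'' side-constraint of \mec{} into the ``max \egwel{}'' side-constraint in the target, so that minimizing \#envy among max \egwel{} allocations encodes the source's min \#envy objective. Alternative source problems such as \textsc{3-Dimensional Matching} or \textsc{Independent Set} would yield gadget-based constructions along similar lines.

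Concretely, given a binary \mec{} instance $\langle N, H, V\rangle$ with $m > n$, I would build a weighted instance $\langle N, H', V'\rangle$ by augmenting the houses with a private ``reserve'' house $r_i$ per agent, setting $v'_i(h) = M$ when $v_i(h) = 1$ and $v'_i(h) = 0$ when $v_i(h) = 0$ for $h \in H$, and $v'_i(r_j) = \varepsilon$ iff $i=j$, with $M \gg \varepsilon > 0$. Since every agent can simultaneously be matched to its own reserve, the $k$-\egwel{} rule forces every max \egwel{} allocation to assign each agent either a liked original house (value $M$) or its own reserve (value $\varepsilon$). In the target, an agent assigned $M$ is never envious, while an agent on its reserve is envious exactly when some liked-by-it original house is taken by another agent.

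The crucial step is the two-way correspondence between min \#envy counts. Given an optimal min \#envy max \egwel{} allocation in the target, I would extend it to a complete allocation of the source by filling each reserve-holder with an unused house from $H$ (possible because $m > n$) and argue that the replacements can be chosen to avoid triggering new envy. Conversely, from a min \#envy complete source allocation, I would swap each unliked assignment for the agent's own reserve; this preserves the envious set and attains max \egwel{} in the target.

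The main obstacle is the first transformation: arbitrary replacements among unused houses in $H$ could introduce new envy, since an unused house in $H$ could be liked by an agent who was previously non-envious. One must argue that in an optimal target solution, the unused houses admit a careful distribution to reserve-holders that does not make any currently non-envious agent envious, e.g., by selecting replacements only among houses unliked by all currently non-envious agents or by a more global rearrangement. The weight gap $M \gg \varepsilon$ and the slack $m > n$ provide the room for such a matching or Hall-type argument (in the spirit of \cref{alg:algorithmenvyt}), closing the reduction and yielding the desired \NPH{}-ness.
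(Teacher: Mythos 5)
Your reduction starts from the right source problem (\mec{} with $m>n$, shown \NPH{} by \cite{KMS21complexity}) and has the right high-level idea of using positive values to convert the completeness constraint into a max \egwel{} constraint, but the private reserve houses break the equivalence, and the obstacle you flag at the end is not a closable technicality --- it is fatal to this construction. The reserves give every agent an ``opt-out'' that the source problem does not have: a max \egwel{} allocation in your target may park many agents on their reserves and leave their liked houses unassigned, yielding zero envy, while every complete allocation of the source instance is forced to hand out real houses and thereby create envy. Concretely, take $n=3$, $m=4$, with $a_1,a_2,a_3$ all liking exactly $\{h_1,h_2\}$. In the source, any complete allocation assigns three of the four houses, so at least one of $h_1,h_2$ goes to some agent while another agent who likes it is stuck with a zero-valued house; the minimum \#envy over complete allocations is $1$. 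In your target, assigning every agent its own reserve is a max \egwel{} allocation (max \egwel{} is $\varepsilon$, since not all three can receive $M$) with \#envy equal to $0$. The decision thresholds disagree, so no ``careful distribution of unused houses'' in the forward transformation can repair the reduction; the optimum genuinely drops once reserves are introduced.

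The paper's proof avoids this by adding no new houses at all: it sets $v'_i(h) = v_i(h) + \beta$ for every \emph{original} house $h$, with $\beta>0$ small. Every house is then positively valued by every agent, so the max \egwel{} allocations coincide with the complete allocations over the original house set --- there is no escape hatch --- and the uniform additive shift leaves every pairwise comparison $v_i(A(j))$ versus $v_i(A(i))$ unchanged, so the set of envious agents is preserved verbatim in both directions. If you want to salvage your framing, the fix is to delete the reserves and instead give the zero-valued original houses a small positive value $\varepsilon$ (keeping liked houses at a much larger value); that is essentially the paper's construction.
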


\begin{proof}[proof sketch]
We prove this by showing a reduction from the problem of finding a \mec{} allocation that is known to be NP-complete~\cite{KMS21complexity}.
    Given an instance $I = \langle N, H, V\rangle$ of the minimum \#envy complete problem, 
    we build an equivalent instance of the min \#envy max \egwel{} problem. For each agent $i \in N$ and house $h 
    \in H$, we create the valuation $v'_i(h)$ by adding  a positive small value $\beta$ to the valuation $v_i(h)$.
    Thus, all 
    max \egwel{} allocations assign a positively valued house to each agent under the new valuation. We show the hardness of the problems lies in minimizing \#envy under the completeness requirement.
    In \cref{app:egal:minenvy} we show the equivalence of the two instances to complete the proof.
\end{proof}
It is easy to check that the decision version of min \#envy max \egwel{} - where we check if there exists a max \egwel{} allocation with \#envy at most $t$ - is in NP. Thus the problem is \NPC{}.

\subsection{Minimum Total Envy}
While minimizing the total amount of envy experienced by the agents, restricting the search space to the maximum egalitarian welfare allocations does not result in any computational improvement. The problem remains computationally as hard as finding a \mtec{} allocation. 

\begin{restatable}{theorem}{thmredfromMTEC}\label{thm:redfromMTEC}
    Given a weighted instance, finding a \minTEnvy{} of max \egwel{} is as hard as finding a \mtec{} allocation.
\end{restatable}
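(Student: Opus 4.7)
The plan is to establish this hardness by a polynomial-time reduction from \mtec{} (the case $m>n$ being the non-trivial one) to the problem of finding a \minTEnvy{} max \egwel{} allocation. Given a \mtec{} instance $I=\langle N,H,V\rangle$ with $n$ agents and $m>n$ houses, I would construct an enlarged instance $I'=\langle N',H',V'\rangle$ by introducing a single dummy agent $i^*$ and a single dummy house $h^*$, and defining for any fixed $c>0$: $v'_i(h)=v_i(h)+c$ for $i\in N$ and $h\in H$; $v'_i(h^*)=0$ for $i\in N$; $v'_{i^*}(h^*)=c$; and $v'_{i^*}(h)=0$ for $h\in H$. Since $|H'|=m+1>n+1=|N'|$, feasibility of complete allocations is preserved.

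The first key step is to characterize the max \egwel{} allocations of $I'$. Any complete allocation of $I'$ in which $i^*$ is matched to $h^*$ gives $i^*$ value $c$ and each original agent value at least $c$, so its \egwel{} equals $c$. Conversely, whenever $i^*$ is unassigned or is matched to a house in $H$, agent $i^*$ has value $0$ and the \egwel{} collapses to $0$. Hence, the max \egwel{} of $I'$ equals $c$, attained exactly by the $N'$-saturating allocations in which $i^*\mapsto h^*$ and every original agent is assigned to some house in $H$. These are in one-to-one correspondence with the complete allocations of $I$, i.e., the feasible solutions of \mtec{}.

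The second step is to verify that total envy is preserved by this correspondence. Envy between two original agents is unchanged because the constant shift cancels in $v'_i(A(j))-v'_i(A(i))=v_i(A(j))-v_i(A(i))$. Agent $i^*$ envies no original agent $j$ since $v'_{i^*}(A(j))=0$ and $v'_{i^*}(h^*)=c$. Likewise, no original agent $i$ envies $i^*$ because $v'_i(h^*)=0$ while $v'_i(A(i))=v_i(A(i))+c\geq c>0$. Thus $\tenvy$ on an optimal solution of $I'$ equals $\tenvy$ on the corresponding complete allocation of $I$, so a \minTEnvy{} max \egwel{} allocation of $I'$ yields, by deleting $i^*$ and $h^*$, a \mtec{} allocation of $I$, and conversely. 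The reduction is clearly polynomial.

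The main subtlety I expect to contend with is the nonstandard egalitarian welfare definition in the paper, which uses a $k$-egalitarian fallback when no $N$-saturating allocation gives every agent a positive value. The construction sidesteps this fallback because the allocation pairing $i^*$ with $h^*$ and sending the original agents into $H$ already certifies that a strictly positive \egwel{} is achievable in $I'$; consequently, the ordinary max-\egwel{} definition governs and the pool of optimal allocations is exactly the one claimed above. The remaining verifications are routine arithmetic.
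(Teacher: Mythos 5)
Your reduction is correct, and it starts from the same source problem (\mtec{}) and the same core trick as the paper --- an additive shift of every valuation by a positive constant, which cancels in every pairwise envy difference while making every house positively valued --- but it differs in one substantive way. The paper only shifts: it sets $v'_i(h)=v_i(h)+\beta$ and then argues about allocations ``of size $n$ with $\egwel \geq \beta$''; it never establishes that the \emph{maximum} egalitarian welfare of the shifted instance equals $\beta$. In general it need not (if some complete allocation gives every agent a large positive value, the max $\egwel{}$ exceeds $\beta$ and the set of max-$\egwel{}$ allocations is a strict subset of the complete allocations), so the correspondence with \mtec{} is not airtight as written. Your dummy agent $i^*$, whose best achievable value is exactly $c$, caps the egalitarian welfare at $c$, so the max-$\egwel{}$ allocations of $I'$ are \emph{exactly} the $N$-saturating allocations of $I$ extended by $i^*\mapsto h^*$; together with your check that $i^*$ envies nobody and nobody envies $i^*$, the total envy is preserved and the equivalence is exact. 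You also correctly note why the paper's $k$-egalitarian fallback does not kick in. In short, your extra gadget buys a clean characterization of the optimal-welfare set that the paper's shorter argument glosses over, at the negligible cost of one extra agent and house.
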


The proof uses the same construction as in \cref{thm:minenvyk-egal}. We defer the details to \cref{app:egal:minTE}.
If in return the objective is to minimize the maximum total of envy  experienced by agents (minimax total envy) the problem becomes \NPH{}.
Note that similar to \cref{thm:minenvyk-egal}, in the decision version of the problem, one can check whether minimax total envy of the allocation is at most $t$, implying that the problem is \NPC{}.

\begin{restatable}{theorem}{thmmmtenvyegalwelfare}
    \label{thm:mmtenvy-egal-welfare}
    Given a weighted instance, finding a
    \minmaxTEnvy{} max \egwel{} allocation is \NPH{}. 
\end{restatable}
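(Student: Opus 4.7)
The plan is to reduce from the \NPH{} problem of finding a minimax total envy complete allocation (see \cref{tab:summary}, \minmaxTEnvy{} row), following the same template as the proofs of \cref{thm:minenvyk-egal,thm:redfromMTEC}. Given an instance $I = \langle N, H, V \rangle$ of \minmaxTEnvy{} complete, construct $I' = \langle N, H, V' \rangle$ by setting $v'_i(h) = v_i(h) + \beta$ for a sufficiently small positive $\beta$ (strictly less than the smallest nonzero gap among values in $V$). The first step is to show that every max \egwel{} allocation in $I'$ is N-saturating: since every valuation in $I'$ is at least $\beta > 0$, any complete allocation achieves $\egwel \geq \beta$, while any allocation leaving some agent unassigned has $\egwel = 0$. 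A standard augmentation (e.g.\ introducing a dummy agent who values every house at $0$, or invoking the NP-hardness of the source problem on the sub-family of instances in which no complete allocation assigns every agent a strictly positive value) guarantees that \emph{every} complete allocation in $I$ attains precisely the same \egwel{} of $\beta$ in $I'$, so the set of max \egwel{} allocations in $I'$ coincides with the set of complete allocations of $I$.

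The second step is the algebraic identity that pairwise envies are invariant under an additive shift: for every pair of agents $i,j$ and every allocation $A$,
\[
\envy'_{i,j}(A) = \max\{v'_i(A(j)) - v'_i(A(i)), 0\} = \max\{v_i(A(j)) - v_i(A(i)), 0\} = \envy_{i,j}(A),
\]
since the $\beta$ contributions cancel. Consequently, each agent's total envy in $A$, and hence the minimax total envy of $A$, coincide under $V$ and $V'$. Combining the two steps, an oracle for \minmaxTEnvy{} max \egwel{} on $I'$ directly yields a solution to \minmaxTEnvy{} complete on $I$, transferring the NP-hardness. Membership in NP is immediate: given a candidate allocation and an envy threshold $t$, one can verify in polynomial time that the allocation attains max \egwel{} (via \cref{thm:EgalWelfare}) and that every agent's total envy is at most $t$. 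Hence the problem is \NPC{}.

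The main obstacle is the first step: ensuring that the correspondence between max \egwel{} allocations in $I'$ and complete allocations in $I$ is exact rather than a strict containment. Naively, max \egwel{} allocations in $I'$ are only those complete allocations that also maximize $\min_i v_i(A(i))$ in $I$, which may be a proper subset of all complete allocations. The cleanest resolution is the dummy-agent augmentation: introducing an agent that values every house at $0$ in $V$ (hence at $\beta$ in $V'$) forces $\egwel(A) \le \beta$ for every $A$, and one verifies that this agent neither envies nor is envied in a way that alters the minimax total envy of the original agents, so the reduction preserves the objective value exactly.
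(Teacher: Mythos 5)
Your overall strategy---shift all valuations by $\beta$ and reduce from \minmaxTEnvy{} complete, exploiting the fact that pairwise envy is invariant under an additive shift---is sound in spirit, and the shift-invariance step and the NP-membership argument are correct. The gap is exactly where you locate it, and your proposed resolutions do not close it. The dummy-agent fix as stated fails: the hard instances of \minmaxTEnvy{} complete (the Independent Set gadget of \cite{MadathilMS23}) have $m>n$, so a complete allocation of $I$ leaves some houses unassigned, and no agent accrues envy on account of an unassigned house. Forcing a dummy agent into the instance makes one additional house occupied in every max \egwel{} allocation of $I'$, and an original agent $i$ with $v_i(h^*)>v_i(A(i))$ for the dummy's house $h^*$ now envies the dummy, so the minimax total envy of the two instances need not coincide; ``neither envies nor is envied'' is false in the relevant direction. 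This can be repaired, but only by also adding a dummy \emph{house} valued zero by everyone and then giving an exchange argument showing that an optimal solution may be assumed to place the dummy agent on the dummy house---none of which is in your write-up. Your alternative fix (invoking hardness of the source problem on the subfamily where every complete allocation has \egwel{} zero) does work, because in the \cite{MadathilMS23} construction only $\hat{n}$ houses are positively valued by anyone while there are $\hat{n}+3\hat{m}$ agents, so every complete allocation ties at \egwel{} exactly $\beta$ after the shift and the set of max \egwel{} allocations of $I'$ is precisely the set of complete allocations of $I$; but this is a property of a specific reduction that must be verified, not a black-box fact one can ``invoke.''

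For comparison, the paper avoids the issue entirely by not reducing from \minmaxTEnvy{} complete as a black box: it reduces directly from Independent Set in cubic graphs, reusing the gadget of \cite[Lemma 38]{MadathilMS23} but replacing the binary values $0/1$ with $\beta/(\beta+1)$. The counting argument above (more agents than $(\beta+1)$-valued houses) then guarantees by construction that every complete allocation has \egwel{} exactly $\beta$, so the max \egwel{} constraint is vacuous over complete allocations and the original equivalence with independent sets of size $k$ goes through verbatim. If you want to keep your modular reduction, you should either carry out the dummy-agent-plus-dummy-house repair with the exchange argument, or explicitly verify the subfamily property of the source instances; as written, the proof is incomplete.
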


The detailed proof can be found in the \cref{app:egal:minimaxE}. In a nutshell, we provide a reduction from the Independent Set problem in cubic graph~\cite{GJ79}.
Note that even though our construction is similar to \cite{MadathilMS23}'s hardness reduction for finding a minmax total envy complete allocation\footnote{\citet{MadathilMS23} refer to this problem as ``egalitarian house allocation''.}, our reduction further ensures that every agent receives a positive value.

\section{Experiments}\label{sec:expt}
We experimentally investigate the welfare loss and fairness of the proposes algorithms on randomly generated bipartite graphs.
For a fixed number of agents, we varied the number of houses ($m = n$ and $m > n$) and considered both binary and weighted valuation functions $V$. We modelled preferences by iterating over the density of edges ($\lambda \in [0.1, 1.0]$), i.e. the probability that an edge exists is $\lambda$, in the corresponding bipartite graph. For each instance, defined by $(m, \lambda, V)$, we ran $100$ trials, on a randomly generated a graph $G$ that satisfied the $(m, \lambda, V)$ constraints. We compared the maximum \sw{} achieved by a welfare maximizing allocations, to the welfare achieved by an envy minimizing allocation allocations to understand the price of fairness. For the former allocations, we observed \memw{} and \mtemw{} allocations, and for the latter we considered \mec{} and \mtec{} allocations. Next, we compared the \#envy (and total envy) of \mec{} (resp. \mtec{}) with that of the \memw{} and \mtec{} (resp. \mtemw{} and \mec{}). The plots below show us the average value of these metrics over the $100$ trials for weighted valuations, and highlight the $95\%$-Confidence Interval.

\textbf{Observations.}
When there is an abundance of houses, the envy and total envy of all allocations decreases and the \sw{} increases. Similarly, as the graph grows denser (i.e. $\lambda > 0.4$), welfare increases and, under binary valuations, envy and total envy vanish. For weighted valuations too, we notice a slight decrease, but they still persist. As expected, the number of envious agents in a \mec{} allocation is least, followed by \mtec{} and \memw{}. The lower \sw{} of \mtec{} can be attributed to leaving highly valued houses unallocated. Notably, \mec{} has higher total envy than \mtemw{}, since it would prefer one highly envious agent to multiple slightly envious ones. Additional plots and discussions on experiments can be found in Appendix~\ref{expt-contd}.

\section{Concluding Remarks}
Our investigation on the tradeoffs between different efficiency and fairness concepts gives rise to several intriguing open questions. For example, the computational complexity of minimizing total envy remains unsolved. Moreover, one can ask if we can guarantee approximations of welfare to achieve EF or relaxations of EF; or whether the complexity of the problems change when considering strict ordinal preferences, Borda valuations, or pairwise preferences.

\section*{Acknowledgments}
Hadi Hosseini acknowledges support from NSF IIS grants \#2144413 and \#2107173.
We thank the anonymous reviewers for their helpful comments.


\clearpage

\bibliographystyle{named}
\bibliography{references}


\clearpage

\appendix

\section*{Supplementary Material}

\section{Additional Related Work}\label{app:relatedwork}
Gan et al.~\cite{GSV19envy} 
described an algorithm to find an envy-free allocation, should one exist, that allocates a house to each agent when the preference of an agent is given as ranking over the houses. They also proved that an EF allocation exists with high probability if the number of houses exceeded the number of agents by a logarithmic factor. Building on this, Aigner-Horev and Segal-Halevi \cite{aigner2022envy}  developed an algorithm to find the maximum size EF allocation under binary valuations where agents are only assigned to houses they value positively. Further, for a slightly relaxed definition of envy under weighted valuations, they also found the maximum cardinality EF matching. 
%
%
Madathil et al.~\cite{MadathilMS23} consider complete allocations, allow assignment of zero-valued house to an agent. Under binary valuations, they study min \#envy, min total envy, and minimax total envy allocations and refer to them as optimal, utilitarian, and egalitarian house allocation problems, respectively.\footnote{Note that they are different from utilitarian or egalitarian welfare.} They show that 
minimax total envy complete allocations can be found in polynomial time under restrictions. 
 They show it is \NPH{} in general using a reduction from Independent Set~\cite[Lemma 38]{MadathilMS23} similar to us (\Cref{thm:mmtenvy-egal-welfare}). 

Kamiyama~\cite{kamiyama2021envy} considered the problem of finding envy-free allocations for pairwise preferences and showed it is \NPH{} even with some restricted preferences, polynomial time with more restrictions, and \WH{} parameterized by the number of agents.
Belahcene et al.~\cite{belahcene2021combining} look at the house allocation problem under the guise of project allocations and evaluate a relaxed notion of envy-freeness i.e. rEF where agents are only envious of houses given to another agent if they rank the house higher than the other agent. Hosseini et al.~\cite{hosseini2023graphical} discuss the notion of aggregate envy, where they sum every agent's pairwise envy with every other agent to create an envy measure they minimize. They prove that even under restricted valuations, i.e. identical and evenly spaced agent valuations, minimizing the amount of aggregate envy is \NPH{} by a reduction from the linear arrangements problem. Hosseini et al.~\cite{hosseini2023tight} define the graphical housing allocation problem as generalization of the minimum linear arrangement and house allocation problem, and characterize the approximability of housing allocation on graphs with different structures.

\section{Additional Material from Section \ref{sec:model}}\label{app:prelim}

\subsection{Graphical Representations and Techniques}
For completeness, here we define some of the standard definitions in graph theory that we used.
Given a bipartite graph $G= (A\cup B, E)$, a matching $M$ is a pairwise vertex disjoint subset of edges $E$. We denote the complement graph of a graph $G$ as $\overline{G}$. We use the notation $G - M$ to denote the reduced graph obtained by deleting the vertices matched in $M$ from $G$.
\begin{definition}[Hall Violator]
    A Hall set is a subset $A' \subseteq A$ such that $|N(A')|<|A'|$ where $N(A')$ denotes the neighbors of vertices in the set $A$, i.e., $N(A') = \{v \in B \mid u \in A, (uv) \in E\}$. We call the set $N(A')$ as a Hall violator. 
\end{definition}
A minimal Hall violator can be computed in polynomial time~\cite{GSV19envy,aigner2022envy}.

A matching $M$ is maximal when it is not contained in any other matching.
A maximum matching in a graph $G$ is a matching containing maximum number of edges in $G$. 
\begin{definition}[Maximum Size allocation in $G$]\label{def:maxbiaprtite}
    A maximum size bipartite matching in $G= (A\cup B, E)$ is the largest matching between the sets $A$ and $B$ formed by the union of a maximum size matching $M$ in $G$ and a maximum size matching in $\overline{G} - M$.
\end{definition}

Note that any maximum size allocation in a graph $G = (N \cup H, V)$ can be found in polynomial time, since it is the union of two maximum size matchings which can each be found in polynomial time.

Given a  bipartite graph $G= (A \cup B, E)$ with a cost function $c:E\mapsto\mathcal{R}$, the \mcmw problem (also known as the assignment problem) is to find a  matching $M$ that matches all vertices of the smaller size and minimizes the cost $\sum_{e\in M} c(e)$. Then, a \mcmw matching can be found in strongly polynomial time using Hungarian method~\cite{ramshaw2012minimum}.

%

\textbf{Graphical representation of house allocation.}   Given an instance $\langle N, H, V \rangle$, we construct a bipartite graph $G = (N \cup H, E)$ such that given $i\in N$ and $ h\in H, (i, h) \in E$ if and only if $v_i(h) >0$. We call $G$ as the valuation graph.
When the valuations are not binary, we additionally construct an edge weight function $wt$ defined as: $wt((i,h)) = v_i(h)$ for $i\in N$, $ h\in H$, and $ (i, h) \in E$.
Given an allocation $A$ of $\langle N, H, V \rangle$, we define the \emph{matching corresponding to allocation $A$ in $G$} as the set of vertex disjoint edges $\{(i,h)\in E \mid i \in N, h\in H, \text{ and } A(i) =h \}$.

\section{Material Omitted From Section~\ref{sec:EF}} \label{app:EF}
\subsection{Binary Valuations}\label{app:EF:bin}
When the valuations of agents towards houses are binary, there is a polynomial time algorithm that can return an envy-free allocation of maximum size (or empty) by computing an ``envy-free matching'' on the bipartite graph induced by the house allocation instance \cite{aigner2022envy}.
However, as we illustrated in \cref{ex:tradeoff}, while an envy-free maximum size `matching' of \citet{aigner2022envy} may return `empty', an envy-free allocation of maximum size could be of larger size.

The key difference between our approach is that we allow allocations along zero edges (aka houses that are valued zero and are not adjacent).
%



 We start by providing some structural results in binary instances. The next proposition consolidates two key ideas presented in Theorem 1.1 (e) and Theorem 1.2 of  \cite{aigner2022envy} on finding envy-free matchings. We, then, use it to prove \cref{prop:EFmaxSize}.

 \begin{proposition}[\protect{\cite{aigner2022envy}}]\label{prop:aigner2022envy}
 Every bipartite graph $G=(N \cup H, E)$ admits a unique partition of $N = N_S \cup N_L$ and of $ H = H_S \cup H_L$ such that 
  every envy-free matching in $G$ is contained in $G[N_L, H_L]$, 
 a maximum matching in $G[N_L, H_L]$ is a maximum envy-free matching in $G$, and it can be computed in polynomial time.
 \end{proposition}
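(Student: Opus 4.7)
The plan is to construct the partition $N = N_S \cup N_L$ and $H = H_S \cup H_L$ explicitly via iterative Hall-violator peeling on the agent side, verify the three required properties, and then observe that the partition is intrinsic to $G$ (which yields uniqueness).

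First, I would run the following procedure. Initialize $H_L = H$. While there exists a subset $N' \subseteq N$ with $1 \leq |N(N') \cap H_L| < |N'|$ (i.e., an inclusion-minimal Hall violator with non-empty neighborhood in the current graph $G[N, H_L]$), move the houses $N(N') \cap H_L$ from $H_L$ into $H_S$. Upon termination, declare $N_S$ to be the set of agents with no neighbor in $H_L$ and $N_L = N \setminus N_S$. Each iteration strictly shrinks $H_L$, and an inclusion-minimal Hall violator is computable in polynomial time (as noted in \cref{app:prelim}), so the procedure itself runs in polynomial time.

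For the containment guarantee, I would prove by induction on the iteration count that no house ever moved into $H_S$ lies in the support of any envy-free matching, essentially in the spirit of \cref{lem:halldeletion} applied to the peeling process. Concretely, suppose some envy-free matching $M$ saturates an $h \in N(N') \cap H_L$ at the iteration when $N'$ is chosen. By the inductive hypothesis, agents in $N'$ may only be matched to houses in $N(N') \cap H_L$ (anything previously peeled is forbidden to envy-free matchings), so at most $|N(N') \cap H_L| < |N'|$ agents of $N'$ are matched and there is an unmatched $i^* \in N'$. Using inclusion-minimality of $N'$, the restriction of $M$ to the houses in $N(N') \cap H_L$ must saturate every such house through some agent of $N'$, so $i^*$'s neighborhood (contained in $N(N') \cap H_L$) is entirely matched, contradicting envy-freeness of $M$. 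Combined with the observation that agents in $N_S$ have no neighbor in $H_L$ (and hence cannot be envy-freely matched at all without using a house in $H_S$), this shows every envy-free matching lies inside $G[N_L, H_L]$.

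For the maximum-matching claim, I would next note that after termination no Hall violator with non-empty neighborhood remains and each $i \in N_L$ has at least one neighbor in $H_L$, so Hall's condition holds for $N_L$ inside $G[N_L, H_L]$. Hall's theorem then yields a matching $M^{\ast}$ saturating $N_L$, and $M^{\ast}$ is automatically envy-free because the only unmatched agents lie in $N_S$ and have no neighbor in $H_L$. Since by the containment claim any envy-free matching matches at most $|N_L|$ agents, $M^{\ast}$ is a maximum envy-free matching, and the same conclusion applies to any maximum matching of $G[N_L, H_L]$. Uniqueness of the partition would then follow by identifying $H_S$ intrinsically as the set of houses that lie in no envy-free matching: one containment is what we just proved, and for the other I would take any $h \in H_L$ and exhibit an envy-free matching saturating it via an alternating-path exchange starting from $M^{\ast}$ inside $G[N_L, H_L]$. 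The most delicate step in this plan is the inclusion-minimality argument in the containment proof, where one must rule out envy-free matchings that saturate only some houses of $N(N') \cap H_L$ via agents outside $N'$; the minimality of $N'$ is precisely what prevents such partial configurations, and getting this bookkeeping right is the main obstacle.
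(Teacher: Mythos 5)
First, a point of context: the paper does not actually prove this proposition --- it is imported verbatim from \cite{aigner2022envy}, where the partition is obtained by a different route (a maximum matching of $G$ together with alternating reachability from unsaturated agents, in the style of the Dulmage--Mendelsohn decomposition). Your iterative Hall-violator peeling is therefore an independent reconstruction, closer in spirit to \cite{GSV19envy}, and the architecture is viable; but the step you yourself flag as delicate is genuinely broken as written. The assertion that ``the restriction of $M$ to the houses in $N(N')\cap H_L$ must saturate every such house through some agent of $N'$'' does not follow from inclusion-minimality and is false in general: an envy-free matching could a priori saturate only one house of $W:=N(N')\cap H_L$, and the unmatched agent $i^*$ you produce need not be adjacent to it, so the path you describe reaches no contradiction. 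The correct use of minimality is different. Let $X\subseteq N'$ be the agents of $N'$ saturated by $M$; by your induction hypothesis each is matched inside $W$, so $|X|\le|W|<|N'|$. If $X=\emptyset$, every agent of $N'$ is unmatched and envy-freeness forces every house of $W$ to be unmatched, contradicting the saturation of $h$. If $X\neq\emptyset$, envy-freeness forces $N(N'\setminus X)\cap H_L\subseteq W\setminus M(X)$, where $M(X)$ is the set of $|X|$ houses assigned to $X$; hence $|N(N'\setminus X)\cap H_L|\le|W|-|X|<|N'|-|X|=|N'\setminus X|$, so $N'\setminus X$ is a Hall-violating set strictly contained in $N'$, contradicting minimality. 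That is the argument you need to write out.

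Second, your procedure mishandles agents with empty neighborhoods, which do occur in this paper's graphs (edges exist only for positively valued houses). A singleton $\{i\}$ with $N(\{i\})\cap H_L=\emptyset$ is itself a Hall set, so a set that is inclusion-minimal \emph{among Hall violators with non-empty neighborhood} can still contain such an agent: with $N=\{i,j\}$, $i$ isolated and $j$ adjacent only to $h$, the set $\{i,j\}$ meets your loop condition and your procedure deletes $h$, yet $\{(j,h)\}$ is an envy-free matching using $h$. This breaks both the containment claim and your uniqueness argument, which identifies $H_S$ as exactly the houses lying in no envy-free matching. The fix is to search for minimal Hall violators only among agents that still have a neighbor in the current $H_L$; this also guarantees that the smaller violator $N'\setminus X$ above is of the kind your loop considers, so the minimality contradiction goes through. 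Finally, the reverse inclusion needed for uniqueness --- that every $h\in H_L$ is saturated by \emph{some} envy-free matching via an alternating exchange from $M^{*}$ --- is only gestured at and should be argued explicitly.
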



We will use \cref{prop:aigner2022envy} to prove the following proposition.






\begin{restatable}{proposition}{propEFmaxSize}
\label{prop:EFmaxSize}
    Given a binary instance, an envy-free allocation of maximum size can be computed in polynomial time. 
\end{restatable}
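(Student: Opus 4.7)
The plan is to reduce the problem to finding a maximum envy-free \emph{matching} in a suitably constructed value-1 bipartite graph and then extend that matching to an allocation via zero-edges. Concretely, I would build the bipartite graph $G = (N \cup H, E)$ whose edges are exactly the value-1 pairs, and invoke \cref{prop:aigner2022envy} to obtain the unique AH-SH partition $N = N_S \cup N_L$, $H = H_S \cup H_L$ together with a maximum matching $M$ of $G[N_L, H_L]$, which by that proposition is a maximum envy-free matching in $G$.

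The next step is to extend $M$ to a full envy-free allocation by matching each still-unmatched agent of $N_S$ to an arbitrary leftover house in $H_L$, via a maximum bipartite matching between $N_S$ and $H_L \setminus V(M)$ along (zero-valued) allocation edges. The correctness of this extension rests on the structural fact that $N(N_S) \subseteq H_S$ in $G$, which follows directly from the inductive Hall-violator construction of the AH-SH partition: whenever an agent is moved to $N_S$, all of its currently remaining value-1 neighbors are moved to $H_S$ in the same step, and $H_S$ is absorbing. Consequently every agent in $N_S$ values every house in $H_L$ at $0$, so matching them to unused $H_L$ houses creates no value-1 adjacency from them to any assigned house; no such agent envies anyone. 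Agents in $N_L$ each receive a value-1 house in $M$, so they envy no one either.

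For maximality I would appeal to \cref{lem:halldeletion} (equivalently, the first clause of \cref{prop:aigner2022envy}): no house in $H_S$ can appear in any envy-free allocation. Therefore any envy-free allocation is supported on $H_L$ and so has size at most $|H_L|$; our construction attains $\min\{n, |H_L|\}$ assignments, matching this bound. Since the AH-SH partition and both bipartite matchings can be computed in polynomial time, the whole algorithm is polynomial-time.

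The main obstacle I anticipate is pinning down the structural property $N(N_S) \subseteq H_S$ cleanly from the AH-SH construction---this is the hinge that makes the zero-edge extension envy-free without any case analysis. As a sanity check, the same conclusion can also be obtained by specializing \cref{thm:algorithmenvyt} to $\{0,1\}$-valuations, since \cref{alg:algorithmenvyt} collapses to iterated Hall-violator deletion on the value-1 graph (the AH-SH procedure) when every agent's top value is $1$, and its final ``union with a maximum matching in $\overline{G} - M$'' step is precisely the zero-edge extension described above.
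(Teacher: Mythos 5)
Your proposal is correct and follows essentially the same route as the paper's proof: compute the maximum envy-free matching via \cref{prop:aigner2022envy} and then extend it with zero-valued assignments of the remaining $H_L$ houses, which is exactly the paper's step of adding pairs $(j,h)$ whenever every agent valuing $h$ positively is already matched. Your explicit structural observation $N(N_S)\subseteq H_S$ is the same fact the paper uses implicitly to argue that this extension creates no envy, and the maximality argument via \cref{lem:halldeletion} matches the paper's as well.
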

\begin{proof}
Given a binary instance $\langle N, H, V \rangle$, we start by constructing a bipartite graph $G = (N \cup H, E)$ such that given $i\in N$ and $ h\in H, (i, h) \in E$ if and only if $v_i(h) = 1$. We do the following:
    \begin{enumerate}[(1)]
        \item\label{it:msEF1} Find a envy-free matching $M$ by using \cref{prop:aigner2022envy}. 
        Add all (agent, house) pairs in $M$ to $A$.
        \item\label{it:msEF2} While there exists an unassigned (agent, house) pair $(j,h)$ in $A$ such that each agent $i$ with valuation $v_i(h)>0$ is matched in $M$,  we add the pair $(j,h)$ to $A$. 
    \end{enumerate}

We show that $A$ is envy-free.
By \cref{prop:aigner2022envy}, $M$ is envy-free. Therefore, no envy is created in step~\eqref{it:msEF1}. Let $(j,h)$ be a pair assigned in the step~\eqref{it:msEF2} of the algorithm. No agent can be envious of $j$ since each agent $i$ that likes the house $h$ is assigned to another house it likes in step~\eqref{it:msEF1}, i.e., for an agent $i$, if $v_i(h) >0$, then we have that $v_i(A(i)) = 1$. Since this holds true for each iteration of step~\eqref{it:msEF2}, allocation $A$ is envy free. 


Now we show that $A$ is of maximum size. Towards this, first note that   every EF allocation matches only the houses in $H_L$ (from \cref{prop:aigner2022envy}). Thus it suffices to show that maximum possible houses in $H_L$ are allocated in $A$. Observe that each house in $H_L$ satisfies the premise for step~\eqref{it:msEF2} since $M$ is a maximum matching in $G[N_L \cup H_L]$. Therefore, step~\eqref{it:msEF2} of our algorithm allocates houses in $H_L$ as long as there is an unassigned agent. Therefore, $A$ has maximum size. Thus, $A$ a \msEF.
\end{proof}

 The above algorithm is based on binary bipartite matchings, and thus, it fails to work when we allow for more expressive preferences beyond binary instances. Nonetheless,
we develop a polynomial time algorithm to find maximum size allocations with zero envy in the next section.

\subsection{Weighted Instances}

%
%
%
\lemhalldeletion*
\begin{proof} 
Suppose for contradiction that there exists a house $h$ that is removed by \cref{alg:algorithmenvyt} that can be assigned to agent $i$ under an envy-free allocation $A$.
If $h$ is deleted in \cref{alg:algorithmenvyt}, then it must be included in a Hall violator $X$ - which means all agents in $X$ cannot be assigned to the houses in $X$. 
There must then exist an agent $j \in X$ that does not receive its most preferred house $h$ - or any house of equal value - despite it being assigned to some other agent.
All houses $h'$ with a positive valued edge to agent $j$, added to $E$ in later steps of \cref{alg:algorithmenvyt} must satisfy $v_j(h') < v_j(h)$, since we add edges in decreasing order of preference. Clearly any allocation where $h$ is assigned leaves $j$ envious, regardless of the house agent $j$ may later receive from a maximum size allocation on $G$.

Thus, given a Hall violator $X$ in the graph $G$, no house $h \in X$ can be contained in an envy-free allocation.
\end{proof}
Next we present the complete proof of \cref{thm:algorithmenvyt}.

\thmalgorithmenvyt*
\begin{proof}
First, note that \cref{alg:algorithmenvyt} runs in polynomial time because every component of the algorithm including finding a inclusion-minimal Hall violator~\cite{GSV19envy,aigner2022envy} and computing a maximum size bipartite matching runs in time polynomial in $n$ and $m$.
Therefore, it suffices to prove that 
i) every house removed by the algorithm cannot be contained in any envy-free allocation, and
ii) a maximum size bipartite matching on the induced graph returns a maximum size allocation among all envy-free allocations.

Let $A$ denote the allocation returned by the algorithm. Statement (i) immediately follows from \cref{lem:halldeletion}. Statement (ii) follows from the observation that \cref{alg:algorithmenvyt} finds a maximum size bipartite matching in the induced graph where every agent only has edges to its most preferred houses in the remaining instance.  

First let us consider the agents that receive a house that they value positively.
Note that a maximum size bipartite matching in $G$ first finds a maximum matching in $G$.
In a maximum-size matching on $G$, an assigned agent is given a house it values most in $G$ and no further assignment to a positively valued house is possible. 
%
%

Next consider the agents that receive a zero valued house in $A$. A maximum size bipartite matching in graph $G$ will assign the remaining agents (that cannot be assigned in a maximum size matching in $G$) to zero valued houses while such a house is  available. 
Thus the algorithm assigns maximum number of agents to their positively valued houses in $G$ and maximum number of agents to their zero valued houses in $G$.

Recall that since there is no Hall violator in $G$ and a house deleted from $G$ is never assigned, by \cref{lem:halldeletion} the maximum size bipartite matching does not create envious agents. Thus, it returns the maximum size envy-free allocation on the instance.
\end{proof}

Given a weighted instance $\langle N, H, V \rangle$ we first show how to find a maximum welfare allocation among all the EF allocations in polynomial time using \cref{alg:algorithmenvyt}.

\begin{lemma}\label{lem:WeightedEFmaxWelfare}
    An allocation $A$ returned by \cref{alg:algorithmenvyt} is a maximum utilitarian welfare EF allocation in $\langle N, H, V \rangle$. 
\end{lemma}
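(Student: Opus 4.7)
The plan is to show that $\sw(A) \ge \sw(A')$ for every envy-free allocation $A'$; combined with the envy-freeness of $A$ (\cref{thm:algorithmenvyt}), this establishes that $A$ is a maximum-welfare envy-free allocation.

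First I would extract the exact value of $\sw(A)$ from the algorithm's behaviour. Let $H^*$ denote the set of houses surviving all Hall-violator deletions, and set $w_i := \max_{h \in H^*} v_i(h)$ for each agent $i \in N$. By the construction of $E_{\topp}$ at termination, the final bipartite graph $G$ has $i$ adjacent to precisely those houses in $H^*$ attaining $w_i$ (and $i$ is isolated when $w_i = 0$). Since the algorithm halts only when no Hall violator remains, Hall's theorem implies that every maximum matching $M$ in $G$ saturates the set $N^+ := \{i : w_i > 0\}$. Each $i \in N^+$ matched in $M$ receives value $w_i$, while the remaining agents are matched via $\overline{G} - M$ to houses they value at $0$. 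Hence
\[
\sw(A) \;=\; \sum_{i \in N^+} w_i.
\]

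Next, I would upper bound $\sw(A')$ for an arbitrary envy-free allocation $A'$. The key tool is \cref{lem:halldeletion}: no house deleted by \cref{alg:algorithmenvyt} belongs to any envy-free allocation, forcing $A'(i) \in H^* \cup \{\emptyset\}$ for every $i$. Consequently $v_i(A'(i)) \le \max_{h \in H^*} v_i(h) = w_i$ for every $i$, and agents with $w_i = 0$ contribute nothing. Summing over $N$ gives
\[
\sw(A') \;=\; \sum_{i \in N} v_i(A'(i)) \;\le\; \sum_{i \in N^+} w_i \;=\; \sw(A),
\]
which closes the argument.

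The main subtlety is justifying that the terminal graph $G$ actually admits a matching saturating $N^+$. This requires a careful reading of the Hall-violator step: agents with no positive-valued house in $H^*$ have empty neighbourhoods in $G$, and must be treated as ``inactive'' rather than as perpetual Hall violators $\{i\}$ with $H' = \emptyset$ (otherwise the algorithm would not terminate). Once this is handled, Hall's theorem applied to the active set $N^+$ yields the required saturation, after which the welfare bound is a single chain of inequalities leveraging \cref{lem:halldeletion}.
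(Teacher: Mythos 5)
Your proof is correct and follows essentially the same route as the paper's: both arguments use \cref{lem:halldeletion} to confine any envy-free allocation to the surviving houses, so that no agent can exceed $w_i = \max_{h\in H^*} v_i(h)$, and use the absence of Hall violators at termination to show the algorithm actually attains $w_i$ for every agent with $w_i>0$. Your version merely makes the welfare accounting explicit (summing the per-agent caps) and flags the edge case of agents with empty neighbourhoods, which the paper's proof leaves implicit.
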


\begin{proof}
    Allocation $A$ is envy-free by design. We show that it has maximum \sw{} among the EF allocations.
    Let $N_1 \subseteq N$ denote the set of agents that has non-zero value for some house in $G$ when $A$ is returned by \cref{alg:algorithmenvyt}.
    First we show that each agent $i \in N_1$ receives their highest valued house in $G$. 
    Then, we show that no higher valued house can be assigned to any agent. Thus, we show that an agent cannot receive a higher valued house in an EF allocation. Consequently, $A$ must have maximum welfare among the EF allocations.
    
    Clearly, since there are no Hall violators in $G$, we have that allocation $A$ allocates a house to each agent in $N_1$. Moreover, if $(i,h)$ is an edge in $G$, then, since the allocation is EF, $h$ is a highest valued house for $i$ among the houses that were not deleted. Thus, if an agent $i \in N_1$ receives a house $h$ in allocation $A$, then house $h$ is a highest valued house for $i$ in $G$.

    Next we show a higher valued house cannot be assigned in to an agent in any EF allocation. Suppose that an agent $i$ receives a house $h$ (recall, $h=\emptyset$ when $i$ does not receive a house) in $A$. Let $h'$ be a house that has higher value for $i$. 
    Since we add the edges in decreasing order of value in $G$ and $v_i(h')>v_i(h)$, the house $h'$ must have been added to $G$ and removed by the algorithm. Then, using \cref{lem:halldeletion}, we have that $h'$ cannot be assigned in any EF allocation. Finally, since $A$ is maximum size allocation in \cref{alg:algorithmenvyt}, by \cref{def:maxbiaprtite} no agent can receive a higher valued house.

    Thus, welfare of $A$ is maximum among the EF allocations. 
\end{proof}

Observe that if there exists an EF allocation among the ones with maximum \sw{}, then we can find that using  \cref{lem:WeightedEFmaxWelfare} by checking if the allocation has maximum welfare.

 \thmWeightedEFmaxWelsize*

\begin{proof}
From \cref{thm:algorithmenvyt} we have that $A$ is an EF matching. In \cref{lem:WeightedEFmaxWelfare}, we prove that $A$ is of maximum welfare among the envy-free allocations. If $\sw(A)$ is the same as the maximum utilitarian welfare of an allocation in the given instance $ \langle N, H, V \rangle$, then we return $A$; otherwise we return `No'.  
Since $\sw(A)$ is the maximum welfare achieved by any EF allocation, the correctness of this step follows.
Thus, we show the proposition.
\end{proof}

\section{Material Omitted From  Section~\ref{sec:utilwel}}\label{app:utilwel}

\subsection{Minimum \#Envy}\label{app:subsec:memw}
 Given an instance $\langle N, H, V \rangle$, we  construct an instance  of \mcmw{} that we later use to solve \memw{}. 

 \paragraph{\textbf{Algorithm description}}
We construct a bipartite graph $G(N \cup H',E)$ from $I = \langle N, H, V \rangle$, on vertex set $N \cup H'$ where the set $H'$ is constructed by adding a set of $n$ dummy houses to the set of houses $H$, i.e., $H' = H \cup \{h^i \mid i \in N\}$.
  For an agent $i \in N$ and house $h \in H'$, the pair $(i,h) \in E$ if and only if house $h \in H$ and $v_i(h) >0$, or $h \in H'\setminus H$. We define a cost function $c: E \mapsto \mathcal{R}$ on edges of $G$. For ease of exposition we assume $v_i(h) = 0$ for each agent $i$ and a dummy house $h\in H'\setminus H$. Before we define the cost function we scale the valuations $V$ such that for each agent $i$ and house $h$, if we have that $v_i(h) >0$, then $v_i(h) \geq 1$. Now we define cost function $c$ as follows:
    \begin{itemize}
        \item for each agent $i$ and house $h \in H'$ we define $c(i,h) = 0$ if $h$ belong to the set of houses with highest value for agent $i$, and otherwise $c(i,h) = 1$. We call this as the envy component of cost and write it as $c_{envy}(i,h)$.
        \item for each edge $(i,h)$ in $G$ such that $h \in H$, we add $-v_i(h)\cdot L$ to $c(i,h)$ where $L = n+1$. We call this as the welfare component of cost and write is as $c_{sw}(i,h)$.
    \end{itemize}
    Thus, the cost of an edge $c(i,h)$ in $G$ is $c_{sw}(i,h) + c_{envy}(i,h)$, as given in \cref{alg:WeightedminEmaxWelfare}.
     
     This completes the construction of the graph $G$. Finally, we return a \mcmw matching in $G$ as a min \#envy max \sw{} allocation.

\begin{algorithm}[t]\small
        \caption{a \memw{} allocation}
    \label{alg:WeightedminEmaxWelfare}
    \begin{algorithmic}[1]
    \REQUIRE A house allocation instance $\langle N, H, V \rangle$.
    \ENSURE An allocation of maximum \sw{} that minimizes the number of envious agents.
    \STATE Create a bipartite graph $G = (N \cup H', E)$ s.t. $H' = H \cup \{h^i \mid i \in N\}$ and $(i,h) \in E$ if $v_i(h) >0$, or $h \in H'\setminus H$ for an agent $i \in N$ and a house $h \in H'$.
    \STATE Let $c: E \mapsto \mathcal{R}$ be defined as follows for an edge $(i,h) \in E$:
    \begin{equation*}
            c(i,h) = \begin{cases}
                 -v_i(h)\!\cdot\!L& \textit{if $h \in H$ is a most preferred house}\\
                 &\hfill\textit{ for agent $i$,}\\
                 -v_i(h)\!\cdot\!L\!+\!1\!& \textit{otherwise}\\
            \end{cases}
    \end{equation*}
    where $L=n+1$.
    \RETURN a \mcmw matching in $G$.
    \end{algorithmic}
\end{algorithm}

     We begin by observing some properties of cost a matching in $G$. For a matching $M$ in $G$ we write $c(M)$ to denote $\sum_{(i,h)\in M} c(i,h)$ and define $c_{sw}(M)$ and $c_{envy}(M)$ analogously. Note that $c(M) = c_{sw}(M) + c_{envy}(M)$.

    \begin{lemma}\label{lem:weightedmemw-maxc-envy}
        Let $M$ be a perfect matching in $G$. Then $c_{envy}(M) \leq n$.
    \end{lemma}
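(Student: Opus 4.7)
The plan is to prove this bound by a simple counting argument, exploiting the fact that $c_{envy}$ takes values in $\{0,1\}$ on every edge of $G$. First, I would fix the interpretation of ``perfect matching'' in the constructed bipartite graph. Since $|N| = n$ and $|H'| = m + n \geq n$, $N$ is the smaller side, so a perfect matching in $G$ saturates $N$ and therefore consists of exactly $n$ edges, one incident to each agent.

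Next, I would recall the definition of $c_{envy}$: for every edge $(i,h) \in E$, $c_{envy}(i,h) = 0$ if $h$ is a most preferred house for $i$ in $H$, and $c_{envy}(i,h) = 1$ otherwise. In particular, $c_{envy}(i,h) \in \{0,1\}$ for every edge of $G$, including edges to dummy houses in $H' \setminus H$. One small sanity check worth including is that this definition is well-posed on dummy-house edges: using the convention $v_i(h) = 0$ for $h \in H' \setminus H$, such edges are never ``most-preferred edges'' unless $i$ values every house in $H$ at zero, and in either case the envy component is still bounded by $1$.

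Then the bound falls out immediately:
\[
c_{envy}(M) \;=\; \sum_{(i,h)\in M} c_{envy}(i,h) \;\leq\; \sum_{(i,h)\in M} 1 \;=\; |M| \;=\; n.
\]
I do not expect any technical obstacle here; the lemma is a structural observation intended to set up later arguments (most likely to show that the $c_{sw}$ term dominates $c_{envy}$ because $L = n+1 > n \geq c_{envy}(M)$, so that minimizing $c$ first maximizes utilitarian welfare and only then minimizes the number of envious agents). I would therefore keep the proof to a few lines, emphasizing only the two ingredients used: (i) a perfect matching has exactly $n$ edges, and (ii) $c_{envy}$ is $\{0,1\}$-valued.
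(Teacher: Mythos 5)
Your proposal is correct and follows essentially the same argument as the paper: the paper also observes that $c_{envy}(i,h)\leq 1$ for every edge and that the matching has at most $n$ edges, hence $c_{envy}(M)\leq n$. Your added remarks on the dummy-house edges and the role of $L=n+1$ are accurate but not needed for the bound itself.
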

    \begin{proof}
        Note that for each agent $i \in N$ the cost $c_{envy}(i, h)$ is at most $1$ for all $h \in H'$. Since $M$ is a matching of size at most $n$, $c_{envy}(M) \leq n$.
    \end{proof}
        %

    We define an \emph{allocation $A_M$ from a matching $M$} in $G$ as follows: for each edge $(i,h) \in M$ such that $h \in H$, assign house $h$ to $i$ in $A_M$. The rest of the agents remain unassigned in $A_M$. Then the following lemma follows from the definition of $c_{sw}$.
    
    \begin{lemma}\label{lem:weightedmemw-cost-wel}
        Let $M$ be a matching in $G$. Then $\sw(A_M) = - \frac{1}{L}c_{sw}(M)$.
    \end{lemma}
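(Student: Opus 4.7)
The plan is to unfold the definitions of $A_M$, $\sw$, and $c_{sw}$ directly, since the statement is essentially a bookkeeping identity that follows from how the cost function was designed to encode utilitarian welfare.

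First, I would partition the edges of $M$ into those of the form $(i,h)$ with $h \in H$ (a ``real'' house) and those of the form $(i,h^i)$ with $h^i \in H' \setminus H$ (a dummy house). By construction of $A_M$, agent $i$ receives house $h$ in $A_M$ exactly when the edge of $M$ incident to $i$ goes to a real house; otherwise agent $i$ is unassigned and contributes value $0$ to $\sw(A_M)$. Thus
\[
\sw(A_M) \;=\; \sum_{i \in N} v_i(A_M(i)) \;=\; \sum_{\substack{(i,h)\in M \\ h \in H}} v_i(h),
\]
using the convention $v_i(\emptyset) = 0$.

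Next I would compute $c_{sw}(M)$ from its definition. The welfare component $c_{sw}(i,h) = -v_i(h)\cdot L$ is only added for edges with $h \in H$, and for dummy edges $(i,h^i)$ the welfare component is $0$ (equivalently, one can set $v_i(h^i) = 0$ as the construction already does and obtain the same value). Hence
\[
c_{sw}(M) \;=\; \sum_{(i,h) \in M} c_{sw}(i,h) \;=\; -L \sum_{\substack{(i,h)\in M \\ h \in H}} v_i(h).
\]
Combining this with the expression for $\sw(A_M)$ gives $c_{sw}(M) = -L \cdot \sw(A_M)$, and dividing by $-L$ yields the claimed identity $\sw(A_M) = -\tfrac{1}{L} c_{sw}(M)$.

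Since this is a definitional identity rather than a structural claim, there is no real obstacle; the only point requiring care is making the treatment of dummy houses explicit, so that both sides of the equation range over the same set of contributing edges. I would therefore state up front the convention that $v_i(h) = 0$ for $h \in H' \setminus H$ (which the algorithm's description already adopts), so that the sums over ``$h \in H$'' and over ``$(i,h) \in M$'' agree term by term.
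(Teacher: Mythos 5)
Your proposal is correct and follows essentially the same route as the paper's proof: both simply unfold the definitions of $A_M$, $\sw$, and $c_{sw}$, observe that dummy-house edges contribute zero to both sides (via the convention $v_i(h)=0$ for $h \in H'\setminus H$), and match the remaining sums term by term. Your version is slightly more explicit about the partition into real and dummy edges, but there is no substantive difference in the argument.
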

    \begin{proof}
        Recall that for each edge $(i,h) \in M$ such that $h \in H$, cost $c_{sw}(i,h) = -v_i(h)L$ and we assign house $h$ to $i$ in $A_M$. Therefore, for each edge $(i,h) \in M$, we add welfare $-c_{sw}(i,h)/L$ to the utilitarian welfare of $A_{M}$. Thus, $\sw(A_{M}) = \sum_{(i,h) \in M} -c_{sw}(i,h)/L = - \frac{1}{L}\sum_{(i,h) \in M} c_{sw}(i,h) = - \frac{1}{L}c_{sw}(M)$.
    \end{proof}

Finally we are ready to prove the theorem.
\thmWeightedminEmaxWelfare*

\begin{proof}
  To prove the theorem we first observe that \cref{alg:WeightedminEmaxWelfare} runs in polynomial time since construction of $G$ and finding a \mcmw{} matching can be done in polynomial time. Next, we prove the correctness of the algorithm.

    Let $M^{*}$ be a \mcmw in $G$.
    First we show that $A_{M^*}$ is a maximum-welfare allocation. Then we will show that it has minimum number of envious agents.
    
    Suppose that there exists an allocation $M'$ such that $\sw(M') > \sw(M^*)$. Then the following calculations give us a contradiction to the fact that $M^*$ has minimum cost in $G$. The first inequality follows since for any house $h$ that is liked by agent $i$, we have that $v_i(h)\geq 1$.
    \begin{align*}
       \sw(M') -1 &\geq \sw(M^*) \\
       -\sw(A_{M'})+1 &\leq -\sw(A_{M^*}) \\
       -\sw(A_{M'})\cdot L +L &\leq -\sw(A_{M^*})\cdot L
    \end{align*}
    In the next line, we replace welfare by $c_{sw}$ using \Cref{lem:weightedmemw-cost-wel} and replace $L$ by $c_{envy}(M')$. Then the inequality changes to strict since using \cref{lem:weightedmemw-maxc-envy}, envy of any matching is less than $n < L$. 
    \begin{align*}
        -c_{sw}(M')\cdot L +c_{envy}(M') &< c_{sw}(M^*)\cdot L\\
    \end{align*}
    The next line follows since $c_{envy}(M^*)$ is non-negative. 
    \begin{align*}
        - c_{sw}(M')\cdot L +c_{envy}(M') &< c_{sw}(M^*)\cdot L + c_{envy}(M^*)\\
        c(M') &< c(M^*).
    \end{align*}
    The final inequality follows from the definition of cost function. Hence, we get a contradiction. Thus, $A_{M^*}$ has maximum $\sw$. 
   
    Given that $A_{M^*}$ has maximum $\sw$, we show that number of envious agents in $A_{M^{*}}$ is minimum.

    \begin{lemma}\label{lem:minenvyenvious}
        An agent $i$ is envious in the allocation $A_{M^*}$  if and only if it is not assigned to one of its most preferred houses.
    \end{lemma}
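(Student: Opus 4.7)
The plan is to prove the biconditional directly, with the forward direction being essentially definitional and the backward direction leveraging the maximum-welfare property of $A_{M^*}$ already established earlier in this same proof of \cref{thm:WeightedminEmaxWelfare}.

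For the forward direction, I would simply unpack the definition of envy: if $i$ is envious in $A_{M^*}$, there exists some $j \neq i$ with $v_i(A_{M^*}(j)) > v_i(A_{M^*}(i))$, hence $v_i(A_{M^*}(i)) < \max_{h \in H} v_i(h)$ and therefore $A_{M^*}(i) \notin H_i^{\max}$. So $i$ is not assigned to one of its most preferred houses.

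For the backward direction, suppose $A_{M^*}(i) \notin H_i^{\max}$ and pick any $h^* \in H_i^{\max}$; then $v_i(h^*) > v_i(A_{M^*}(i))$ (using the convention $v_i(\emptyset) = 0$ if $i$ is unmatched in $A_{M^*}$). The key step is to show that $h^*$ must already be assigned to some $j \neq i$ in $A_{M^*}$. Suppose instead, for contradiction, that $h^*$ is unassigned in $A_{M^*}$. Then modifying $A_{M^*}$ by reassigning $i$ to $h^*$ (and freeing $A_{M^*}(i)$, if any) yields another valid allocation whose utilitarian welfare is strictly greater by $v_i(h^*) - v_i(A_{M^*}(i)) > 0$, contradicting the fact that $A_{M^*}$ maximizes $\sw$. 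Hence $h^*$ is allocated to some $j \neq i$, and since $v_i(A_{M^*}(j)) = v_i(h^*) > v_i(A_{M^*}(i))$, agent $i$ envies $j$.

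The main technical point to verify is that the swap in the backward direction corresponds to a legitimate modification of the underlying perfect matching $M^*$ in the constructed graph $G$, not just of the allocation. Since $h^* \in H$ with $v_i(h^*) > 0$, the edge $(i, h^*)$ exists in $E$ by construction, so one can exchange the edge incident to $i$ in $M^*$ for $(i, h^*)$ and reroute the displaced vertex (agent or house) to an appropriate dummy vertex to keep $M^*$ a perfect matching; the resulting $A_{M'}$ has strictly larger welfare by \cref{lem:weightedmemw-cost-wel}, which is the contradiction used above. A trivial corner case is $v_i \equiv 0$ on $H$: then $H_i^{\max} = H$, so any house $i$ holds lies in $H_i^{\max}$, the hypothesis of the backward direction fails vacuously, and $i$ cannot be envious anyway, so the lemma holds immediately.
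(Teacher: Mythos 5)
Your proof is correct and follows essentially the same route as the paper: the forward direction is definitional, and the backward direction is an exchange argument showing that an unassigned most-preferred house $h^*$ could be swapped in for $(i,M^*(i))$, yielding a contradiction. The only cosmetic difference is that you derive the contradiction from the welfare-maximality of $A_{M^*}$ (established just before the lemma), whereas the paper contradicts the minimum-cost property of $M^*$ directly via $c(i,h^*) < c(i,M^*(i))$; these are interchangeable here, and your observation that the swap preserves an $N$-saturating matching in $G$ is the right (if slightly over-engineered) justification, since the displaced house need not be rematched.
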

    \begin{proof}
Suppose agent $i$ is envious, then clearly it is not assigned to its most preferred house. We prove the other direction.
Suppose $i$ is not assigned to any of its most preferred houses. Let $h$ be a most preferred house for agent $i$.  We prove that $i$ is envious due to $h$ by showing $h$ assigned in $A_{M^*}$, i.e., $h$ is matched in $M^*$. 
We have that $v_i(h) > v_i(M^*(i))$ since $h$ is a most preferred house and $M^*(i)$ is not. 
Moreover, $c(i,h)= -v(i,h)L $ and $c(i,M^*(i))= -v(i,M^*(i))L +1$ from the construction of the cost function. Thus, $c(i,h)< c(i,M^*(i))$. 
 Therefore, if $h$ is not matched in $M^*$, then replacing $(i,M^*(i))$ by $(i,h)$ in $M^*$ decreases its cost, contradicting the fact that $M^*$ has minimum cost. Thus, we prove that agent $i$ is envious.
    \end{proof}
    
 Therefore, using \cref{lem:minenvyenvious} and from the construction of the cost function, an agent adds $1$ to $c_{envy}(M^*)$ if and only if it is envious.
   Hence, the cost $c_{envy}(M^*)$ in $G$ is the number of envious agents in $A_{M^*}$ in $I$. 
   
    To complete the proof, we need to show that $M^*$ minimizes the cost $c_{envy}$ in $G$. Suppose towards contradiction there exists a perfect matching $M'$ in $G$ such that $A_{M'}$ has maximum welfare in $I$ and $c_{envy}(M') < c_{envy}(M^*)$ in $G$. Then, from \cref{lem:weightedmemw-cost-wel}, we have that $c_{sw}(M') = c_{sw}(M^*)$ since both $A_{M'}$ and $A_{M^*}$ has maximum \sw{}. Therefore, $c_{sw}(M') + c_{envy}(M')< c_{sw}(M^*) +c_{envy}(M^*)$. Thus, we get $c(M')< c(M^*)$, contradicting the fact that $M^*$ is a minimum cost perfect matching in $G$. 
   Since $M^*$ has minimum $c_{envy}$, allocation $A_{M^*}$ minimizes the number of envious agents among all maximum welfare allocations.
\end{proof}

\propWelCompmlessn*
\begin{proof}
We begin the proof by constructing a complete allocation $\hat{A}$ in time polynomial in $m$ and $n$. We set $\hat{A} = A$. Then we proceed from $A$ iteratively: add a pair $(i,h)$ to $\hat{A}$ where agent $i$ and house $h$ are unassigned in $\hat{A}$, until $\hat{A}$ is complete.

We show that $\#\envy(\hat{A}) = \#\envy(A)$. 
Let $H_u$ be the set of unassigned houses in $A$. We show that assignment of a house in $H_u$ does not create more envious agents than in $A$. 
First, observe that for each agent $i \in N$ and each house $h \in H_u$, we have $v_i(h) \leq v_i(A(i))$ (recall, $v_i(A(i)) = 0$ when $i$ is unassigned). 
 Otherwise, if $v_j(h) > v_j(A(j))$ for some agent $j$ and $h \in H_u$, then we increase the welfare of $A$ by adding $(j,h)$ to $A$ and removing $(j,A(j))$ from $A$, contradicting the fact that $A$ has maximum welfare. 
Therefore, no agent envy an house $h \in H_u$ that is assigned in the iterative step. Then the envy of each agent remains the same as in $A$ after the iterative step. Thus, $\#\envy(\hat{A}) = \#\envy(A)$.

Next, we prove that $\sw(A) = \sw(\hat{A})$.
 It is clear from the construction that $\sw(\hat{A}) \geq \sw(A)$ since we assign more houses in $\hat{A}$. Observe that each agent $i$ that is assigned a house $h$ in $H_u$ receives utility zero; otherwise, we could add $(i,h)$ to $A$ to increase its welfare. Therefore, $\sw(\hat{A}) = \sw(A)$.
 This completes the prove of \eqref{it:wel-comp-m-less-n1}.

\begin{sloppypar}We prove \eqref{it:wel-comp-m-less-n2} by contradiction. Suppose that there exists a \mec allocation $A^*$ such that $\#\envy(A^*) < \#\envy(A)$. 
Then, let $j$ be an agent that is envious in $A$ but not in $A*$. 
Then, from the fact that the valuations are binary, we have $v_j(A^*(j))=1$ since all houses are assigned in $A^*$ as $m\leq n$ and $j$ is not envious in $A^*$. Additionally, $v_j(A(j))=0$ since $j$ is envious in $A$.
Therefore, each agent that is envious in $A$ but not in  $A^*$ adds one to $\sw(A^*)$ and zero to $\sw(A$). The converse holds true as well. 
If $v_i(A(i))=1$ and $ v_i(A^*(i))=0$ for an agent $i$, then $i$ must be envious in $A^*$ but not in $A$ due to binary valuations. 
Therefore, more agents add one to the welfare of $A^*$ than that of $A$ since $A$ has more envious agents than $A^*$. Thus, we get that $\sw(A^*) > \sw (A)$, a contradiction to the fact that $A$ has maximum welfare.\end{sloppypar}
\end{proof}

As a consequence of \Cref{thm:WeightedminEmaxWelfare} and \Cref{prop:wel-comp-m-less-n} we get the following. 
\begin{restatable}{corollary}{corrminEnvycomplete}
\label{cor:minEnvycomplete}
    Given a binary instance, a \mec can be computed in polynomial time when $m \leq n$.
\end{restatable}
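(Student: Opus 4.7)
The plan is to derive the corollary by composing the two polynomial-time results already established for weighted instances, since a binary instance is a special case of a weighted instance with values in $\{0,1\}$.

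First, I would invoke \cref{thm:WeightedminEmaxWelfare} on the given binary instance $\langle N, H, V\rangle$ to compute, in polynomial time, a \memw{} allocation $A$, i.e., an allocation that maximizes utilitarian welfare among all allocations and, subject to that, minimizes the number of envious agents. Since $m \leq n$, the hypothesis of \cref{prop:wel-comp-m-less-n} is satisfied. I would then apply the constructive procedure from \cref{prop:wel-comp-m-less-n} to $A$, which in polynomial time extends $A$ to a complete allocation $\hat{A}$ by iteratively assigning each unassigned house to some unassigned agent (well-defined since $m \leq n$ guarantees at least as many agents as houses at every stage).

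By parts (i) and (ii) of \cref{prop:wel-comp-m-less-n}, $\hat{A}$ satisfies $\#\envy(\hat{A}) = \#\envy(A)$ and $\hat{A}$ has the minimum \#envy among all complete allocations; that is, $\hat{A}$ is a \mec{} allocation. Since both \cref{thm:WeightedminEmaxWelfare} and the completion step of \cref{prop:wel-comp-m-less-n} run in time polynomial in $n$ and $m$, the combined algorithm runs in polynomial time and returns the desired \mec{} allocation, establishing the corollary.

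There is essentially no obstacle beyond correctly invoking the two prior results; the only subtlety worth remarking on is that \cref{prop:wel-comp-m-less-n} is stated for binary instances (as noted in the accompanying remark that it does not extend to weighted instances), which matches the hypothesis of the corollary exactly, so no additional argument is required.
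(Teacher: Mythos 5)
Your proposal is correct and matches the paper exactly: the paper derives this corollary by the same composition, stating that it follows ``as a consequence of \cref{thm:WeightedminEmaxWelfare} and \cref{prop:wel-comp-m-less-n}.'' Your additional remark about the binary hypothesis being exactly what \cref{prop:wel-comp-m-less-n} requires is a fair (and correct) observation, but no further argument is needed.
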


\begin{example}[\cref{prop:wel-comp-m-less-n} does not hold for weighted instances]\label{ex:mecWtm-lessn}
Consider three houses and three agents as shown in \cref{fig:wMEMWvsMEC}. The values are shown on edges. The \memw{} allocation shown in red must allocate $h_1$ to $a_1$ with the value of $100$ to satisfy the maximum \sw{} constraint. This allocation results in creating two envious agents (agents 2 and 3). However, The \mec{} allocation (shown in green) has exactly one envious agent.
 
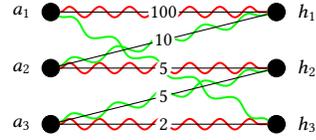
\begin{figure}[t] 
    \centering
    \begin{minipage}{0.45\textwidth}
        \centering
        {\footnotesize
        \begin{tikzpicture}[every node/.style={draw,circle}, fsnode/.style={fill=black}, ssnode/.style={fill=black}, vnode/.style={draw=white, fill=white, inner sep=0pt,outer sep=0pt, midway}]
        
        \begin{scope}[start chain=going below,node distance=5mm]
        \foreach \i in {$a_1$, $a_2$, $a_3$}
          \node[fsnode,on chain] (f\i) [label=left: \i] {};
        \end{scope}
        
        \begin{scope}[xshift=3cm,yshift=0cm,start chain=going below,node distance=5mm]
        \foreach \i in {$h_1$, $h_2$, $h_3$}
          \node[ssnode,on chain] (s\i) [label=right: \i] {};
        \end{scope}
        
        \tikzset{decoration={snake,amplitude=.7mm,segment length=4mm,
                       post length=0mm,pre length=0mm}}

        \draw[decorate, red, line width=0.75] (f$a_1$) -- (s$h_1$);
        \draw[decorate, red, line width=0.75] (f$a_2$) -- (s$h_2$);
        \draw[decorate, red, line width=0.75] (f$a_3$) -- (s$h_3$);

        \tikzset{decoration={snake,amplitude=.7mm,segment length=6mm,
                      post length=0mm,pre length=0mm}}
        \draw[decorate, green, line width=0.75] (f$a_1$) -- (s$h_3$);
        \draw[decorate, green, line width=0.75] (f$a_2$) -- (s$h_1$);
        \draw[decorate, green, line width=0.75] (f$a_3$) -- (s$h_2$);

        \draw (f$a_1$) -- (s$h_1$) node[vnode] {$100$};
        \draw (f$a_2$) -- (s$h_2$) node[vnode] {$5$};
        \draw (f$a_2$) -- (s$h_1$) node[vnode] {$10$};
        \draw (f$a_3$) -- (s$h_2$) node[vnode] {$5$};
        \draw (f$a_3$) -- (s$h_3$) node[vnode] {$2$};
        
       \end{tikzpicture}}
       \small
        \caption{The \memw{} allocation (shown in red) is not a \mec{} allocation (shown in green) in weighted instances. 
        }
       \label{fig:wMEMWvsMEC}
    \end{minipage}

\end{figure}

\end{example}

\begin{algorithm}[t]\small
        \caption{Finding a \mec{} allocation when $m \leq n$}
    \label{alg:WeightedminEnvycomplete}
    \begin{algorithmic}[1]
    \REQUIRE A house allocation instance $\langle N, H, V \rangle$ with $m \leq n$.
    \ENSURE A \mec{} allocation $A$.
    \STATE Create a bipartite graph $G_{\topp} = (N \cup H, E_{\topp})$ where $(i, h^{\max}_i) \in E_{\topp}$ if and only if $h^{\max}_i \in \argmax_{h\in H}  v_{i}(h)$
    \STATE Let $M$ be a maximum size matching in $G_{\topp}$.
    \STATE Initialize $A=M$
    \WHILE{there exists a unassigned  house $h$ in $A$}
        \STATE Allocate $h$ to an unassigned agent in $A$.   
    \ENDWHILE
   \RETURN Allocation $A$.
    \end{algorithmic}
\end{algorithm}

Observe that each house is assigned in every complete allocation when $m \leq n$. Thus, if an agent is assigned to its most preferred house, then it cannot be envious of others. On the other hand, each agent that doesn't receive its most preferred house (or one from such set) is envious of some agent. This gives us the following characterization.



\begin{restatable}{claim}{obsWeightedNonTopisEnvy}\label{obs:weightedNonTopisEnvy}
When $m \leq n$, in any complete allocation an agent is not envious if and only if it is assigned to one of its most preferred houses.
\end{restatable}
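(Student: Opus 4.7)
The plan is to prove the two directions of the biconditional separately, relying crucially on the completeness condition (when $m \leq n$, completeness means every house in $H$ is assigned to some agent).

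For the easier direction $(\Leftarrow)$, I would argue directly from the definition of a most preferred house. Suppose agent $i$ is allocated a house $A(i) \in \argmax_{h \in H} v_i(h)$. Then $v_i(A(i)) \geq v_i(h)$ for every $h \in H$, and in particular $v_i(A(i)) \geq v_i(A(j))$ for every other agent $j \in N$. Hence $\envy_{i,j}(A) = 0$ for all $j$, so $i$ is not envious.

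For the contrapositive of the forward direction $(\Rightarrow)$, I would invoke the completeness hypothesis. Suppose agent $i$ is not assigned to any of its most preferred houses, and let $h^* \in \argmax_{h \in H} v_i(h)$. Since $m \leq n$ and $A$ is complete (hence $H$-saturating), every house in $H$ is allocated; in particular there exists some agent $j \in N$ with $A(j) = h^*$. Note $j \neq i$ because $A(i) \neq h^*$ by assumption. Then $v_i(A(j)) = v_i(h^*) > v_i(A(i))$, where the strict inequality holds because $A(i)$ is not among $i$'s most preferred houses while $h^*$ is. Therefore $\envy_{i,j}(A) > 0$, and $i$ is envious.

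I do not anticipate any real obstacle here; the statement is essentially a characterization that falls out immediately once one observes that $H$-saturation makes every house (including each agent's favorite) visible as someone's bundle, so the only way to avoid envy is to hold a favorite yourself. The only subtlety worth flagging explicitly in the write-up is that ``one of its most preferred houses'' refers to the $\argmax$ over $H$ (not over the currently available subset), and that when $m \leq n$ completeness is $H$-saturation rather than $N$-saturation, which is what powers the reverse direction.
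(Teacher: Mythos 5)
Your proof is correct and takes essentially the same route as the paper, which justifies the claim informally in the preceding paragraph by observing that $H$-saturation under $m\le n$ forces every house (in particular each agent's favourite) to be assigned to someone, so the backward direction is immediate from the definition of $\argmax$ and the forward direction follows by the contrapositive exactly as you write it. The only caveat---one the paper also glosses over---is the degenerate agent who values every house at zero and is left unassigned: such an agent is non-envious yet not assigned to any house, so the strict inequality in your forward direction fails there, but this does not affect how the claim is used.
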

Based on the above characterization, we design \cref{alg:WeightedminEnvycomplete}.
\paragraph{\textbf{Algorithm description}} In \cref{alg:WeightedminEnvycomplete}, given an weighted instance $\langle N, H, V \rangle$, we construct an unweighted  bipartite graph $G_{\topp} = (N \cup H, E_{\topp})$ such that given $i\in N$ and $ h\in H$, we have $(i, h) \in E_{\topp}$ if and only if $h$ is a most preferred houses of agent $i$.
We find a maximum size matching $M$ in the graph $G_{\topp}$. We extend $M$ to construct a complete allocation $A$ as follows: initialize $A= M$; until $A$ is complete, allocate an unassigned house to an unassigned agent in $A$. 


We prove the correctness of \cref{alg:WeightedminEnvycomplete} in the next theorem.
\thmWeightedminEnvycomplete*

\begin{proof}
Let $A$ be the allocation produced by \cref{alg:WeightedminEnvycomplete}.
  We show that $A$ has minimum envy among the set of complete allocations of $\langle N, H, V \rangle$.

Suppose that there is a complete allocation $A^{*}$ with $\#\envy(A^{*}) < \#\envy(A)$. 
 We obtain a matching $M^{*}$ from $A^*$ as follows: $M^{*}=\{(i, A^{*}(i)) \mid i \in N \textit{ and $i$ is envy-free in $A^*$}\}$.
 We show that $M^{*}$ is a matching in $G_{\topp}$ and has more edges than $M$, a contradiction to the fact that $M$ is of maximum size. 
 
  First, we show $M^{*}$ is a matching in $G_{\topp}$. By using \cref{obs:weightedNonTopisEnvy} for $A^{*}$, we have  $A^{*}(i)$ is a most preferred house for each envy-free agent $i$. 
  Thus, from definition of $E_{\topp}$, we have that $(i, A^{*}(i)) \in E_{\topp}$ for each envy-free agent $i$ in $A^*$. Therefore, $M^{*}$ is a matching in $G_{\topp}$. 
  Next, we show size of $M$ is strictly less than that of $M^{*}$. 
   From \cref{obs:weightedNonTopisEnvy}, if an agent is matched in $M$, then 
   it is envy-free in $A$. Thus, the number of edges in $M$ is strictly less than that of $M^{*}$ since number of envy-free agents in $A$ is strictly less than that of $A^*$. Therefore, we get a contradiction to the fact that $M$ is a maximum size matching in $G_{\topp}$ and prove that $A$ is a \mec.
   
   The matching $M$ can be found in time $O(m\sqrt{n})$ using a maximum  size matching algorithm~\cite{hopcroftKarpmatching} and extending $M$ to $A$ takes time $O(m+n)$. 
\end{proof}

    


\subsection{Minimum Total Envy}\label{app:subsec:mtew}

\begin{example}[Example showing \memw{} does not imply \mtemw{} even in binary instances] \label{ex:envy_totalenvy_binary}
    Consider the instance given in \cref{ex:tradeoff}. The allocation shown in blue, that is, $(a_1, h_1)$ and $(a_2, h_2)$ maximizes the welfare and leaves only two agents envious $\{a_3, a_4\}$. Similarly, allocation $(a_3, h_1)$ and $(a_4, h_2)$ maximizes the welfare and leaves two agents envious.
    Yet, the former has a total envy of 2, while the latter has a total envy of 3.
\end{example}


We find a \mtemw{} allocation by finding a \mcmw{} allocation in an appropriately constructed graph. 
The construction is similar to the one finding \memw{} allocation with a different cost function as given in \cref{alg:minTotalEnvyMaxWelfare}.

\begin{algorithm}[t]\small
        \caption{A \mtemw{} allocation}
    \label{alg:minTotalEnvyMaxWelfare}
    \begin{algorithmic}[1]
    \REQUIRE A house allocation instance $\langle N, H, V \rangle$.
    \ENSURE An allocation of maximum \sw{} that minimizes the total envy of agents.
    \STATE Create a bipartite graph $G = (N \cup H', E)$ s.t. $H' = H \cup \{h^i \mid i \in N\}$ and $(i,h) \in E$ if $v_i(h) >0$, or $h \in H'\setminus H$ for an agent $i \in N$ and a house $h \in H'$.
    \STATE Let $H^{\max}_i$ be the set of most preferred houses in $H$ for agent $i$.
    \STATE Define $c: E \mapsto \mathcal{R}$ for an edge $(i,h) \in E$ as follows:
   
    \begin{equation*}
     c(i, h)  = \begin{cases}
         -v_i(h)\cdot L,
         \hfill\textit{\hspace{0.35\linewidth} if } h \in H^{\max}_i&\\[1ex]
       -v_i(h)\cdot L+ \sum_{\bar{h} \in H}\max\{v_i(\bar{h}) - v_i(h), 0\}&\\
       \hfill\textit{\hspace{0.5\linewidth}otherwise}&.
     \end{cases}
\end{equation*} 
where $L = \sum_{i\in N}\sum_{\bar{h} \in H} v_i(\bar{h}) +1$.
    \RETURN a \mcmw matching in $G$.
    \end{algorithmic}
\end{algorithm}

 The proof will be analogous to the proof of \cref{thm:WeightedminEmaxWelfare}. We being with some properties of a minimum cost perfect matching in $G$. 

    \begin{lemma}\label{lem:weightedmtemw-maxc-envy}
        Let $M$ be a perfect matching in $G$. Then $c_{envy}(M) < L$.
    \end{lemma}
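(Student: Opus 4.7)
The plan is to bound $c_{envy}(M)$ by bounding the envy contribution of each individual agent and then summing. Since $M$ is a matching, each agent $i \in N$ appears in at most one edge of $M$, so $c_{envy}(M) = \sum_{(i,h)\in M} c_{envy}(i,h)$ is a sum of at most $n$ terms, one per agent.

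For a fixed agent $i$, consider the edge $(i,h) \in M$ incident to $i$ (if any). If $h \in H^{\max}_i$, the contribution is $0$. Otherwise, the contribution is
\[
\sum_{\bar{h} \in H} \max\{v_i(\bar{h}) - v_i(h),\, 0\} \;\leq\; \sum_{\bar{h} \in H} v_i(\bar{h}),
\]
where the inequality uses the fact that $v_i(h) \geq 0$, so $\max\{v_i(\bar{h}) - v_i(h), 0\} \leq v_i(\bar{h})$ for every $\bar{h}$. (This bound also trivially holds for dummy houses $h \in H' \setminus H$, where we treat $v_i(h) = 0$.) Hence the per-agent envy contribution in $M$ is at most $\sum_{\bar{h} \in H} v_i(\bar{h})$.

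Summing over agents and using $L = \sum_{i\in N}\sum_{\bar{h} \in H} v_i(\bar{h}) + 1$ then gives
\[
c_{envy}(M) \;\leq\; \sum_{i \in N} \sum_{\bar{h} \in H} v_i(\bar{h}) \;=\; L - 1 \;<\; L,
\]
which is the desired bound. There is no real obstacle here; the lemma is a direct consequence of the definition of $c_{envy}$ and the choice of $L$, which was precisely designed to strictly dominate the worst-case total envy component so that $c_{sw}$ lexicographically dominates $c_{envy}$ in the minimum-cost perfect matching (this is what makes welfare the primary objective in the subsequent proof of \cref{thm:minTotalEnvyMaxWelfare}).
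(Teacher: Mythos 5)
Your proof is correct and follows essentially the same argument as the paper: bound each agent's envy cost by $\sum_{\bar{h}\in H} v_i(\bar{h})$ using nonnegativity of the valuations, sum over the (at most $n$) matched agents, and conclude $c_{envy}(M) \leq L-1 < L$. No further comment is needed.
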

    \begin{proof}
        Note that for each agent $i \in N$ the cost $c_{envy}(i, h)$ is at most $\sum_{\bar{h} \in H} v_i(\bar{h})$ for any house $h$. Therefore, for all the agents in $N$ the total envy component of the cost $c_{envy}(M) \leq \sum_{i\in N}\sum_{\bar{h} \in H} v_i(\bar{h})$. Since $L = \sum_{i\in N}\sum_{\bar{h} \in H} v_i(\bar{h})+1$, we have that  $c_{envy}(M) < L$.   
    \end{proof}

    The next lemma follows from the definition of $c_{sw}$ and the proof is the same as \cref{lem:weightedmemw-cost-wel}.
    
    \begin{lemma}\label{lem:weightedmtemw-cost-wel}
        Let $M$ be a matching in $G$. Then $\sw(A_M) = - \frac{1}{L}c_{sw}(M)$.
    \end{lemma}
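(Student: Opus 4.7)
The plan is to mirror the argument already given for \cref{lem:weightedmemw-cost-wel}, exploiting the fact that the welfare component of the cost function has the same functional form in both constructions, while the envy component (which differs between the two algorithms) does not appear in the statement we need to prove. So the proof is essentially a bookkeeping verification that the welfare contributed by each edge of $M$ in $A_M$ exactly matches $-c_{sw}/L$ for that edge.

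First I would recall the definition of $A_M$: for every $(i,h)\in M$ with $h\in H$, agent $i$ is assigned house $h$ in $A_M$, contributing $v_i(h)$ to $\sw(A_M)$; edges $(i,h)\in M$ with $h\in H'\setminus H$ (dummy houses) leave $i$ unassigned in $A_M$ and contribute $0$ to $\sw(A_M)$. Next, I would read off from the definition of $c$ in \cref{alg:minTotalEnvyMaxWelfare} that, regardless of whether $h\in H^{\max}_i$ or not, the term $-v_i(h)\cdot L$ is present; hence $c_{sw}(i,h)=-v_i(h)\cdot L$ uniformly for every edge $(i,h)\in E$ with $h\in H$. Using the convention $v_i(h)=0$ for $h\in H'\setminus H$, this formula also yields $c_{sw}(i,h)=0$ on dummy edges, which is consistent with their zero contribution to $\sw(A_M)$.

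Finally, I would sum over the edges of the matching:
\[
\sw(A_M)=\!\!\!\sum_{\substack{(i,h)\in M\\ h\in H}}\!\!\! v_i(h)=\sum_{(i,h)\in M}\!\!\!-\tfrac{1}{L}c_{sw}(i,h)=-\tfrac{1}{L}c_{sw}(M),
\]
where the first equality is the definition of $\sw(A_M)$, the second uses $c_{sw}(i,h)=-v_i(h)\cdot L$ on real edges and $c_{sw}(i,h)=0=v_i(h)$ on dummy edges, and the third is linearity. I do not expect any genuine obstacle here; the only point that warrants a sentence is the bookkeeping for the dummy edges, which contribute $0$ to both sides and therefore do not affect the identity.
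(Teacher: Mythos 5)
Your proposal is correct and follows essentially the same route as the paper, which simply observes that $c_{sw}(i,h)=-v_i(h)\cdot L$ on every matched edge (with the convention $v_i(h)=0$ on dummy edges) and sums over $M$, exactly as in the proof of \cref{lem:weightedmemw-cost-wel}. Your explicit bookkeeping for the dummy edges is a slightly more careful write-up of the same one-line computation.
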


    Now we are ready to proof the correctness of \cref{alg:minTotalEnvyMaxWelfare}.

\thmminTotalEnvyMaxWelfare*
\begin{proof} 

Let $M$ be a matching in $G$. We define 
\begin{align}
 \nonumber   A_{M}= \{(i, M(i)) \mid M(i) \in H\} \cup \{(i,\emptyset)\mid M(i) \in H' \setminus H\}
\end{align}

Let $M^*$ be a \mcmw{} in $G$.
We show that $A_{M^*}$is a \mtemw allocation where $M^*$ is a \mcmw in $G$.
The proof will be analogous to the proof of \cref{thm:WeightedminEmaxWelfare}. 
    First we show that $A_{M^*}$ is a maximum-welfare allocation. Then we will show that it has minimum total envy of the agents.

    The proof to show that $A_{M^*}$ is a maximum-welfare allocation is the same as in \cref{thm:WeightedminEmaxWelfare} except we use Lemmas \ref{lem:weightedmtemw-cost-wel} and  \ref{lem:weightedmtemw-maxc-envy} to reach a contradiction. We show it below for the sake of completeness.
    
    Suppose that there exists an allocation $M'$ such that $\sw(M') > \sw(M^*)$. Then the following calculations give us a contradiction to the fact that $M^*$ has minimum cost in $G$. The first inequality follows since for any house $h$ that is liked by agent $i$, we have that $v_i(h)\geq 1$.
    \begin{align*}
       \sw(M') -1 &\geq \sw(M^*) \\
       -\sw(A_{M'})+1 &\leq -\sw(A_{M^*}) \\
       -\sw(A_{M'})\cdot L +L &\leq -\sw(A_{M^*})\cdot L
    \end{align*}
    In the next line, we replace welfare by $c_{sw}$ using \Cref{lem:weightedmtemw-cost-wel} and replace $L$ by $c_{envy}(M')$. Then the inequality changes to strict since using \cref{lem:weightedmtemw-maxc-envy}, envy of any matching is strictly less than $L$. 
    \begin{align*}
        -c_{sw}(M')\cdot L +c_{envy}(M') &< c_{sw}(M^*)\cdot L\\
    \end{align*}
    The next line follows since $c_{envy}(M^*)$ is non-negative. 
    \begin{align*}
        - c_{sw}(M')\cdot L +c_{envy}(M') &< c_{sw}(M^*)\cdot L + c_{envy}(M^*)\\
        c(M') &< c(M^*).
    \end{align*}
    The final inequality follows from the definition of cost function. Hence, we get a contradiction. Thus, $A_{M^*}$ has maximum $\sw$. 
   
    Given that $A_{M^*}$ has maximum $\sw$, we show that the total envy of agents in $A_{M^{*}}$ is minimum.

    \begin{lemma}
        Total envy of an agent $i$ in the allocation $A_{M^*}$ is \\$c_{envy}(i,M^{*}(i))$.
    \end{lemma}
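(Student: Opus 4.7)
The claim is that the per-agent envy in $A_{M^*}$ is exactly read off from the cost function of the edge incident to $i$ in $M^*$. My plan is to rewrite the quantity $\envy_i(A_{M^*})$ as a sum over \emph{houses} rather than over agents, and then match it against the case split in the definition of $c_{envy}$.

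First, I will unfold definitions. By definition,
\[
\envy_i(A_{M^*}) \;=\; \sum_{j\in N}\max\bigl\{v_i(A_{M^*}(j))-v_i(A_{M^*}(i)),\,0\bigr\}.
\]
Agents $j$ with $A_{M^*}(j)=\emptyset$ (i.e., matched to a dummy) contribute $\max\{0-v_i(A_{M^*}(i)),0\}=0$, and any two distinct assigned agents receive distinct houses of $H$. Hence the sum collapses to
\[
\envy_i(A_{M^*}) \;=\; \sum_{h\in H\cap \mathrm{Im}(A_{M^*})}\max\bigl\{v_i(h)-v_i(A_{M^*}(i)),\,0\bigr\}.
\]

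Next comes the main step: replace the restriction ``$h$ allocated'' by ``$h\in H$''. I will show that every house $\bar h\in H$ with $v_i(\bar h)>v_i(A_{M^*}(i))$ is allocated in $A_{M^*}$. Suppose not; then $\bar h$ is free and we can modify $M^*$ by removing the edge $(i,M^*(i))$ and adding $(i,\bar h)$, which is still a valid (perfect) matching in $G$ since $\bar h$ was unmatched and the original partner of $i$ (possibly a dummy) becomes free. The change in $\sw(A_{M^*})$ is $v_i(\bar h)-v_i(A_{M^*}(i))>0$, contradicting the fact, already established above, that $A_{M^*}$ is welfare-maximizing. Houses $\bar h$ with $v_i(\bar h)\le v_i(A_{M^*}(i))$ contribute $0$ to the max and thus may be included in the sum for free. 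Therefore
\[
\envy_i(A_{M^*}) \;=\; \sum_{\bar h\in H}\max\bigl\{v_i(\bar h)-v_i(A_{M^*}(i)),\,0\bigr\}.
\]

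Finally, I will compare with $c_{envy}(i,M^*(i))$ by splitting on the case in its definition. If $M^*(i)\in H^{\max}_i$, the right-hand side above is $0$ since $v_i(A_{M^*}(i))=\max_{\bar h}v_i(\bar h)$, and $c_{envy}(i,M^*(i))=0$ by definition. Otherwise, $v_i(M^*(i))=v_i(A_{M^*}(i))$ (using the convention $v_i(h)=0$ for dummy $h\in H'\setminus H$, which agrees with $v_i(\emptyset)=0$), so the displayed expression matches $c_{envy}(i,M^*(i))=\sum_{\bar h\in H}\max\{v_i(\bar h)-v_i(M^*(i)),0\}$ term by term. This equality of expressions yields the lemma. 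The only substantive step is the welfare-swap argument; everything else is bookkeeping.
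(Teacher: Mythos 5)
Your proof is correct and follows essentially the same route as the paper's: the paper likewise splits on the cases of the definition of $c_{envy}$ and uses the identical welfare-swap argument (if a house $\bar h$ with $v_i(\bar h) > v_i(A_{M^*}(i))$ were unallocated, reassigning $i$ to it would increase $\sw$, contradicting the already-established welfare maximality of $A_{M^*}$). Your reorganization --- first collapsing the sum over agents into a sum over all houses of $H$ and only then matching it against the cost definition --- is just slightly tidier bookkeeping of the same ideas.
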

    \begin{proof}
    To prove the statement we consider each of the three cases in the definition of $c_{envy}$.
    
    If $M^{*}(i)$ is a highest valued house for agent $i$, then $i$ receives its most preferred house and total envy of $i$ is zero. Form definition of $c_{envy}$, we have $c_{envy}(i,M^{*}(i)) = 0$ as $M^{*}(i)$ is a most preferred house.

    If $M^{*}(i)$ is a dummy house, i.e., $M^{*}(i) \in H' \setminus H$, then agent $i$ is not assigned any house in $A_{M^*}$. Then $i$ is envious of all the houses it values more than zero. Thus, value of each house $h \in H$ is added to the total envy of $i$. Therefore, total envy of $i$ is $\sum_{h \in H} v_i(h)$, the same as defined in the cost $c_{envy}(i,M^{*}(i))$.

    Finally, if $M^{*}(i)$ is neither a most preferred house, nor a dummy house, then we claim that agent $i$ is envious of each house that is valued more than $M^{*}(i)$. Let $h$ denote the house $M^{*}(i)$. Let $\bar{h}$ denote a house such that  $v_i(\bar{h}) > v_i(h)$. We show that agent $i$ is envious of $\bar{h}$ in $A_{M^*}$. Towards this we show that $\bar{h}$ is assigned to some agent in $A_{M^*}$. Suppose that $\bar{h}$ is not assigned to any agent in $A_{M^*}$, then we can assign agent $i$ to $\bar{h}$ and increase welfare of $A_{M^*}$, contradicting the fact that $A_{M^*}$ is a max welfare allocation.  Therefore, $\bar{h}$ is assigned in $A_{M^*}$ and agent $i$ envies $\bar{h}$. Thus, it adds $v_i(\bar{h}) - v_i(h)$ to total envy of $i$ for each $\bar{h}$ such that  $v_i(\bar{h}) > v_i(h)$. Observe that this is the same as the cost $c_{envy}(i, M^*(i))$. Thus we complete the proof of the lemma.
    \end{proof}
     Therefore, the cost $c_{envy}(M^*)$ in $G$ is the sum of $c_{envy}(i, M^{*}(i))$ for each agent $i$ since $M^*$ is a perfect matching. Therefore, $c_{envy}(M^*)$ is the same as total envy of agents in $A_{M^*}$ in $I$. 
     
     To complete the proof, we need to show that $M^*$ minimizes the envy component of cost $c_{envy}$ in $G$. Suppose towards contradiction there exists a perfect matching $M'$ in $G$ such that $A_{M'}$ has maximum welfare in $I$ and $c_{envy}(M') < c_{envy}(M^*)$ in $G$. Then, from \cref{lem:weightedmtemw-cost-wel}, we have that $c_{sw}(M') = c_{sw}(M')$ since both $A_{M'}$ and $A_{M^*}$ has the same welfare. Therefore, $c_{sw}(M') + c_{envy}(M')< c_{sw}(M^*) +c_{envy}(M^*)$. Thus, we get $c(M')< c(M^*)$, contradicting the fact that $M^*$ is a minimum cost perfect matching in $G$. 
     Since $M^*$ has minimum $c_{envy}$, allocation $A_{M^*}$ minimizes the total envy of agents among all maximum welfare allocations.
 
\end{proof}

\propWelCompmlessTE*
\begin{proof}
First, note that since $m \leq n$, every house must be allocated to some agent. Any \mtemw{} allocation $A^*$ is complete when every house in $H$ is wanted by at least one agent. Otherwise, we can simply create a complete allocation $A$ from $A^*$ as follows. Till there is an unassigned house $h$, assign $h$ to some unassigned agent. Note that completing the allocation does not change the total envy i.e., $\tenvy(A)=\tenvy(A^*)$.
Moreover, it is clear that $A$ and $A^*$ has the same welfare.

We prove that $A$ is a \mtec{} allocation by contradiction. Suppose that $A^*$ is a \mtec{} allocation such that total envy of $A^*$ is less than that of $A$. Then, clearly, welfare of $A^*$ must be less than welfare of $A$.  Then there exists at least one house $h \in H$ such that agent $i =A(h)$, agent $i^* = A^*(h)$, and $v_i(h) > v_{i^*}(h)$.

We show that there exists an agent $z$ such that $\envy_z(A) > \envy_z(A^*)$ and there is an alternating path from $h$ to $z$ (the edges of the path are alternating between edges of $A^*$ and $ A$). Suppose not. Then consider the longest alternating path $P$ (edges alternating between $A^*$ and $A$) starting with $i^*,h,i, A^*(i)$ and so on. Construct the allocation $A'$ from $A^*$ by changing the assignment of each agent in $P$ according to $A$. Then the total envy of the allocation $A'$ decreases from total envy of $A^*$ since there is no agent $z$ on the path satisfy the condition $\envy_z(A) > \envy_z(A^*)$ and total envy of agent $i$ decreases. Thus, we contradict the fact that $A^*$ has minimum total envy by constructing $A'$. Hence, there is an alternating path  from house $h$ to an agent $z$ such that $\envy_z(A) > \envy_z(A^*)$. Let $z$ denote the first agent on the aforementioned path from $h$ satisfying $\envy_z(A) > \envy_z(A^*)$ and the total envy of the agents path from $h$ to $z$ is less in $A^*$ than in $A$. We denote the path by $P_{hz}$.

However, now we construct an allocation $A''$ by modifying $A$ along the path $P_{hz}$, i.e., for each agent $j$ on the path $P_{hz}$, we change $A''(j)$ from $A(j)$ to $A^*(j)$. The total envy of allocation $A''$ is less than $A$ since the total envy of the agents path from $h$ to $z$ is less in $A^*$ than in $A$. 
Consequently, the sum of the their values for the assigned houses is more in $A^*$ than in $A$.
Therefore, the total welfare contributed by the agents on path $P_{hz}$ is increased. Since the remaining assignments are the same as in $A$, we have that $\sw(A'')$ is more than that of $A$, a contradiction. Thus, the proposition holds.

    
    


Thus when $m \leq n$, any decrease in welfare corresponds to an increase in total envy. This means that given a \mtemw allocation, we can retrieve a \mtec allocation when $m \leq n$.


\end{proof}

\section{Omitted proofs from 
Section~\ref{sec:egal}}\label{app:egal}

        
        


\subsection{Minimum \#Envy}\label{app:egal:minenvy}

\thmminenvykegal*
\begin{proof}
    We prove this by showing a reduction from the problem of finding a \mec{} allocation that is known to be NP-complete.

    Let $I = \langle N, H, V\rangle$ be an instance of the Minimum Envy Complete problem where the goal is to find a complete allocation with at most $z$ envious agents.  
    We build an equivalent instance $I' = \langle N, H, V' \rangle$ of the min \#envy max \egwel{} problem. We create the valuation $V'$ as follows: 
    \begin{itemize}
        \item for each agent $i \in N$ and house $h \in H$, we set the value $v'_i(h) = v_i(h) + \beta$, where $0 < \beta <  \min_{(i,h) \in N \times H} v_i(h)$. 
    \end{itemize}
    We now show that an allocation with \egwel{} at least $\beta$ for all $n$ agents has envy at most $z$ in the instance $\langle N, H, V' \rangle$ if and only if a minimum envy complete allocation has envy at most $z$ in $\langle N, H, V\rangle$. 
    
Before we show the equivalence, we show the following property. 

    Let $A$ be a \minEnvy{} \egaln{} allocation with $\egwel{}\geq \beta$ in $I'$ with $\#\envy(A) \leq z$. 
    We show that the $\envy(A)$ in $I$ and $I'$ are the same.
    The egalitarian welfare  of $A$ is at least $\beta > 0$., i.e.,  $v_i(A(i)) \geq \beta$ for each agent $i$.  Then,  if an agent $i \in N$ is envious, then $i$ must be envious of an agent who is assigned to a house $h$ such that $v_i(h) > \beta$, since there are more houses than agents, and in $I'$ each agent has positive valued for all the houses 
    Therefore, if an agent is envious in $I'$, then it must be envious in $I$. Thus, each envious agent in $I'$ is envious in $I$ for the allocation $A$.  Further, if agent $i$ is envious due to a house $h$ in $I$, then it is envious in $I'$ since from the construction $\beta \leq v'_i(h) < v_i(h)$ for $(i,h) \in N \times H$.  Hence,  in the allocation $A$,  the number of envious agents in $I$ is the same as in $I'$.  Hence,  $A$ has envy at most $z$ in $I$.

    To show the equivalence, first note that a a \minEnvy{} \egaln{} allocation $A$ is a complete allocation and has minimum envy due to the above property. Hence, $A$ is a \mec\ allocation with $\envy(A) \leq z$ in $\langle N, H, V\rangle$.

    For the other direction, let $A'$ be a \mec\ with $\#\envy(A') \leq z$.  We show that $\egwel(A') \geq \beta$ in  $I'$.  This holds since, by construction, $I'$ is a complete graph where every agent has value at least $\beta$ for each house.
    
    
    Thus any complete allocation in $I$ satisfies the \egwel{} threshold $\beta$ in an allocation of size $n$ i.e. any \mec allocation in $I$ must also be a \minEnvy{} \egaln{} allocation in $I'$.
    Since the minimum \#envy complete matching is a \NPH{}~\cite{kamiyama2021envy}, we conclude that minimum \#envy max egalitarian is also \NPH{}.
\end{proof}

\subsection{Minimum Total Envy}\label{app:egal:minTE}
\thmredfromMTEC*
\begin{proof}
    Let $\langle N,H, V\rangle$ be an instance of a minimum total envy complete problem. We use the construction in \cref{thm:minenvyk-egal} to construct an instance of \minTEnvy{} max \egwel{}.  
    First we show the following property.
    We now show that a minimum total envy allocation $A$ in $\langle N,H, V'\rangle$ of size $n$ and $\egwel(A) \geq \beta$ has the same total envy with respect to the valuations $V$ and $V'$. 
    We use $G$ and $G'$  to denote the two bipartite graphs on vertex set $N\cup H$ where the edges in $G$ and $G'$ are given by the valuations $V$ and $V'$, respectively.
    We know that the egalitarian welfare must be at least $\beta > 0$. It follows that any envious agent $i \in N$ must be envious of some house it values strictly more than $ \beta$. From the construction of the instance we have that each edge that contributes to total envy of an agent $i$ in $G'$ exists in $G$. Thus, the total envy of each agent is the same in $G$ and $G'$. Hence, if the total envy is minimized in $G'$, it must also be minimized in $G$ i.e. a \minTEnvy{} \egalk{} allocation gives us a \mtec\ allocation.

    Now we show the equivalence between the instances. A minimum total envy allocation $A$ in $G'$ of size $n$ and $\egwel \geq \beta$, is a complete allocation in $G$. Due to the above argument, $A$ has minimum total envy. Hence, $A$ is a \mtec{} allocation in $G$.

    For the other direction, given a minimum total envy complete allocation $A$ in $G$, it must have egalitarian welfare $\egwel(A) \geq \beta$ in $G'$ since by construction, $G'$ is a complete graph where each edge has weight at least $\beta$. Thus, $A$ is an allocation in $G'$ of size $n$ and $\egwel \geq \beta$. Therefore, using the above property $A$ is a  \minTEnvy{} \egalk{} allocation in $G'$.
    Thus any complete matching in $G$ satisfies the $\beta$-threshold for egalitarian welfare, for size $= n$ i.e. any minimum total envy complete matching in $G$ is a \minTEnvy{} \egalk{} allocation in $G'$.
\end{proof}

\subsection{Minimax Total Envy}\label{app:egal:minimaxE}

\thmmmtenvyegalwelfare*

\begin{proof}
    The construction of the reduction is the same as in ~\cite[Lemma 38]{MadathilMS23} except for the valuation functions. We reduce from the Independent Set problem in cubic graph which is known to be \NPH~\cite{GJ79}. 
    
    We construct an instance $\langle N,H,V \rangle$ of \minmaxTEnvy{} \egaln\ from an instance $G=(X,E),k$ of Independent Set where the goal is to find an independent set of size at least $k$. Let $|X|=\enn$ and $|E| = \emm$.
    
    \textbf{Houses}: For each vertex $x\in X$ create a house $h_x$, we refer to them as vertex-houses and denote by $H_X$. Additionally, we add  $3\emm + \enn-k$ dummy houses to $H$.
    
    \textbf{Agents}: For each vertex $x\in X$ create an agent $a_x$, we refer to them as vertex-agents and denote by $N_X$. For each edge $e\in E$ create three agents $a_e^1, a_e^2,$ and $a_e^3$, we refer to them as edge-agents. That is, $N = N_X \cup \{a_e^1, a_e^2,a_e^3: e\in E\}$.
    
    \textbf{Valuations}: For each $x \in X$, $v_{a_x}(h_x) = \beta+1$ and $v_{a_x}(h) = \beta$ for every house $h \in H \setminus \{h_u\}$. For each $e = \{x,y\} \in E$ and each $i \in [3]$, we set $v_{a_e^i}(h)$ to be $\beta+1$ if $h \in \{h_x,h_y\}$ and set it to be $\beta$ otherwise.
    This finishes the construction.
    
    First observe that value of an agent towards a house is at least $\beta$. Thus any allocation of the instance has egalitarian welfare at least $\beta$. Now the the proof is the same as \cite[Lemma 38]{MadathilMS23}.

    We prove that $G$ has an independent set of size at least $k$ if and only if $\langle N,H,V \rangle$ has an allocation where maximum $\tenvy{}$ of any agent is at most one.
    For the forward direction, suppose that $X' \subseteq X$ is an independent set in $G$ of size at least $k$. Then, in the following allocation has each agent has  $\tenvy{}$ one. We assign the pair $(a_x,h_x)$ for each $x \in X'$ and assign the remaining unassigned agents to dummy houses. The later step is possible since $|X'| \geq k$, so there are at most $\enn-k$ vertex-agents that are not assigned to their corresponding vertex-houses and $3\emm$ edge agents that are unassigned.
    Now, observe that no vertex agent is envious since either it is assigned to the house it values the most, or it is assigned to the second best house and its best valued the house is not assigned to any agent. Moreover, for each edge-agent, the  $\tenvy{}$ is at most one. It follows from the fact that $X'$ is an independent set and so both endpoints of an edge is not present in $V'$. Hence, for an edge $e= \{x,y\} \in E$, we have that at most one of  the houses $h_x$ or $h_y$ is assigned to their corresponding vertex-agent but both are not assigned. Hence, the edge-agents for $a_e^i$ has  $\tenvy{}$ at most one for $i \in [3]$.

    For the other direction, first we show that any allocation that has maximum $\tenvy{}$ at most one can be changed into a nice allocation such that the  $\tenvy{}$ of the maximum envious agent remain the same. Here, an allocation is nice if for every assigned vertex-house $h_x$ it holds that the agent assigned to $h_x$ is $a_x$. Suppose it doesn't hold for some allocation $A$. Then, we add $(a_x,h_x)$ and $(A(h_x),A(a_x))$ to $A$ and delete $(a,A(a_x))$ and $(a(h_x),h_x)$ from $A$. The maximum $\tenvy{}$ experienced by any agent remains the same. If the agent $A(h_x)$ is not an edge-agent such that $v$ is one of the endpoint of the edge, then envy of $A(h_x)$ does not increase.
    
    Otherwise, let the edge be $e = \{x,y\}$ and wlog, $a_e^1$ is assigned to $h_y$ in $A$. The agent $a_e^1$ envies only $a_y$ after the exchange. This follows from the fact that both $a_e^2$ and $a_e^3$ cannot be assigned to $h_x$ that is the only other highest valued house for them. Then at least one of them will envy both $A(h_y)$ and $A(h_x)$, producing a $\tenvy{}$ of two and contradicting the fact that maximum $\tenvy{}$ is one for any agent in $A$. Hence, we assume we have a nice allocation. Now, the set of vertex-houses assigned to vertex-agents form an independent set in $G$. This completes the proof. 
    \end{proof}


\section{Omitted details from Experiments - Section~\ref{sec:expt}} \label{expt-contd}

\setlength{\belowcaptionskip}{-10pt}

\begin{table*}[t] \centering \small
\begin{tabular}{@{}cccc@{}}
               \cmidrule(l){2-4} 
               & Utilitarian \sw{} & Min \#envy    & Min total envy \\ \midrule
Binary &   \includegraphics[scale=0.24]{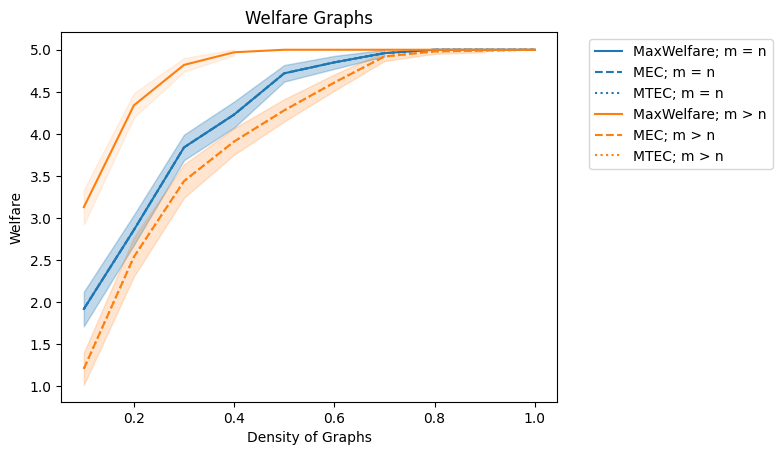}    &   \includegraphics[scale=0.25]{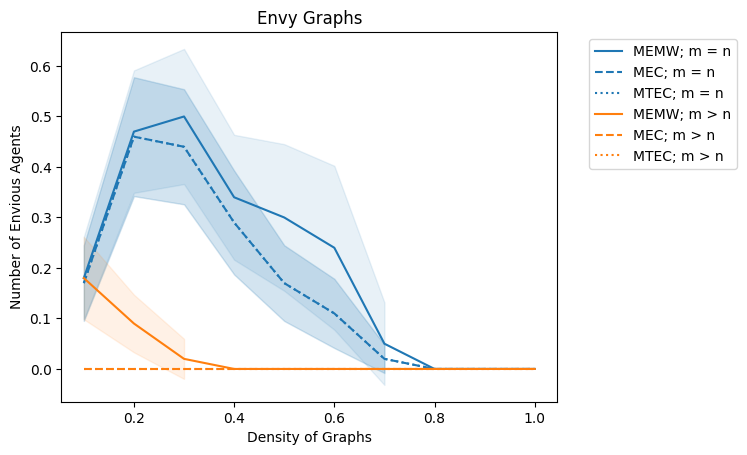}  &  \includegraphics[scale=0.24]{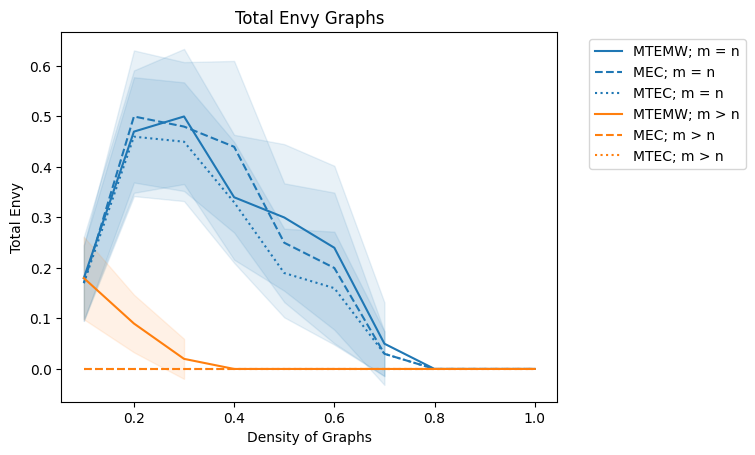}
\\
 
\midrule
\begin{tabular}{c}
   Truncated \\ Borda valuations 
\end{tabular}  &      \includegraphics[scale=0.25]{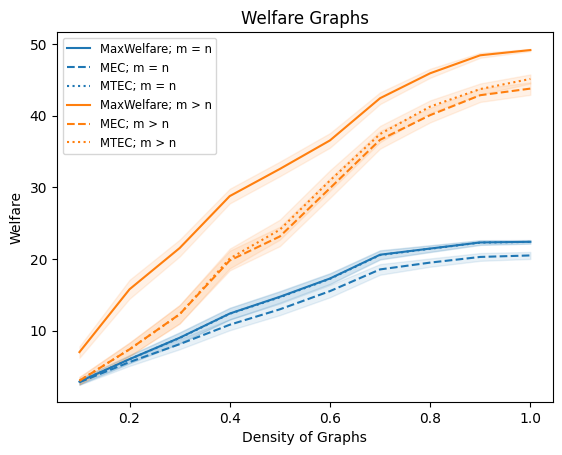}  &   \includegraphics[scale=0.24]{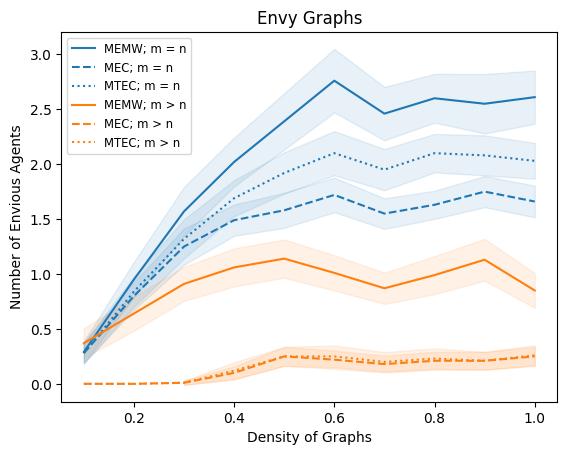}  & \includegraphics[scale=0.25]{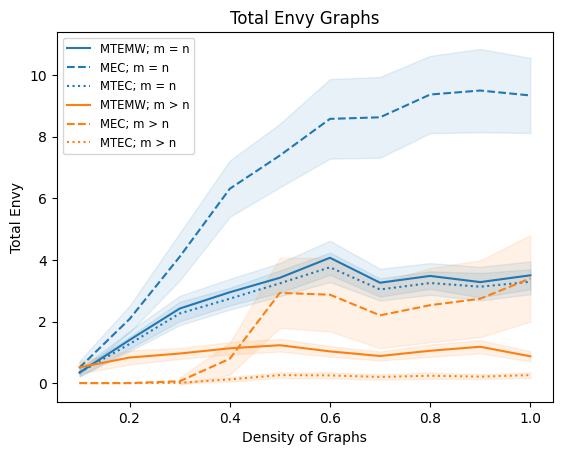} \\
\bottomrule
\end{tabular}
\end{table*}
\subsection{Design}

Our code was compiled with Python $3.8$ and we use the NetworkX library to model all instances of the house allocation problem as bipartite graphs. We deconstruct our experiments into instances and trials. We fix the number of agents across all instances and trials to be $n = 5$. 
\subsubsection*{Instances}

An instance $I$ of the problem is defined by the tuple $(m_I, \lambda_I, V_I)$, where $m_I$ is the number of houses, $\lambda_I$ is the probability that an edge $(a, h)$ between $a \in N$ and $h \in H$ exists, and $V_I$ is the type of valuation function. We vary the number of houses ($m_I$) over the set $\{n, 1.6n, 2n\}$ where $n$ is the number of houses, to capture the impact of an abundance of houses on the fairness-efficiency trade-off. 

Similarly, $\lambda_I$ iterates over the $10$ values in the interval $[0.1, 1]$ with step size $= 0.1$. Since all edges are picked independently, 
$\text{Pr}[(a, h) \in G | a \in N, h \in H] = \lambda_I$ implies that $\frac{\#\text{ edges in }G}{\max(\text{possible edges in }G)} = \lambda_I$. Correspondingly, we choose to refer to $\lambda_I$ as the density of the graph.

Lastly, we consider the following three distinct valuation functions:
\begin{itemize}
    \item \textbf{Binary Valuations : } Each existing edge has weight $= 1$ i.e. for all $a \in N, h \in H$ if $\exists (a, h) \in G$ then $v_a(h) = 1$. 
    \item \textbf{Weighted Valuations : } We noticed a great deal of noise when weighted valuations were uniformly assigned over the interval $[1, 100]$ since agent preferences could differ wildly. To minimize this noise while still retaining the fundamental traits that make Weighted Valuations more challenging than Binary Valuations, we assigned each edge $(a, h) \in G$, $v_a(h)= k$ for some random $k \in [\text{MaxDeg}(G), \text{MaxDeg}(G) - \text{Deg}(a) + 1]$ such that $k$ is chosen \textit{without} replacement. Note that $\text{MaxDeg}(G)$ is the maximum degree of a node in $G$ and $\text{Deg}(a)$ is the degree of agent $a$ i.e., the number of houses it has positive valuations for.
\end{itemize}

Thus, the experiments are run over a set of $60$ unique instances, and for each $I$, we generate $100$ random \textit{trials}. 

\subsubsection*{Trials}

A trial 
is a single randomly generated graph $G$ that satisfies the constraints ($m_I, \lambda_I, V_I$) of a specific instance $I$. Each $G$ was created using the bipartite random graph function in the NetworkX algorithms library, with parameters "$(n, m, \lambda_I)$". Once $G$ is generated, we give the edges $(a, h) \in G$ weights according to $V_I$.

Thus, every \textit{instance} $I$ defines a corresponding family of graphs $S$, and each \textit{trial} $T$ of $I$ generates a graph $G$ that belongs to $S$. 

\subsection{Setup}

Given an instance $I$, in each trial $T$ of $I$ we find the following allocations \memw, \mtemw, \mec, and \mtec using a brute force implementation. For every type of allocation we store the envy, total envy and utilitarian \sw{} generated for that specific trial. Once we have these statistics for all trials and all instances, we plot the mean and $95\%$-Confidence Interval over $100$ trials for all instances. For clarity, we plot each of these metrics on different figures and group them by valuation function. Below, we have the plots for Binary and Truncated Borda Valuations.

\subsection{Observations}
We describe the general trends observed in the plots below:
\begin{itemize}
    \item \textbf{Abundance of Houses : } As the number of houses increases, the \#envy and total envy across all three types of allocations decrease, and the utilitarian \sw{} generated increases. This is consistent with what we expect. An abundance of houses makes it easier to satisfy the demands of all agents since the probability of each agent getting some house allocated to them, i.e. a complete matching, increases. Under binary valuations this increases the number of envy-free agents, and under (positive) weighted valuations this decreases the envy of all envious agents since very few are left unallocated. 
    \item \textbf{Density of the Graph :} Across matchings and valuation profiles, we notice that as the graph grows more connected, i.e., $\lambda_I$ increases, the \#envy of the graph increases correspondingly. Under binary valuations we note that $\lambda ~ 0.3$ is roughly when the graph is connected enough to create large amounts of \#envy and total envy, but disconnected enough to prevent a complete matching. When $\lambda > 0.3$, as agents are increasingly assigned to some house they want, \#envy and total envy decrease correspondingly. Both \#envy and total envy rapidly drop to $0$ under binary valuations, but remain non-zero under truncated borda valuations. The latter can be explained by the fact that truncated borda valuations are assigned without repetition, and an agent's most preferred house being assigned to another, still leaves an agent envious and adding to the total envy of an allocation. Again, this is as expected. Envy only occurs when preferences clash, and this can only happen in reasonably dense graphs. However, since we ensure there are at least as many houses as agents, a further increase in the density of the graph increases the number of completely, or partially, satisfied agents, which explains the decrease in the \#envy and total envy. 
    \item \textbf{\#Envy: } The envy of a \memw{} allocation is consistently higher than that of a \mec{} allocation. This is expected since \mec{} allows for unassigned highly valued houses and assigning houses to agents that do not want them. However, the notable difference between the \#envy generated by these two allocations gives us some insight into the considerable impact this relaxation has on the envy of an allocation. Further, we note that while \mtec has lesser \#envy than \memw{}, which can also be attributed to relaxing the max welfare constraint, \mtec{} also has more \#envy than a \mec{} allocation. This can be explained by the fact that \mtec{} would prefer to \textit{distribute} envy over multiple agents in order to decrease the net total envy, while \mec{} would prefer to concentrate all the total envy generated on one incredibly envious agent. This highlights the fundamental incompatibility between the two measures. 
    \item \textbf{Total Envy: } Under binary valuations, when $m = n$ the Total Envy generated by the three types of allocations is roughly equivalent, with \mtec{} lower bounding the total envy in all instances. However when $m > n$, there is hardly any total envy once the graph is dense enough for a complete matching. However, when we consider the truncated borda valuations, we notice that \mec{} performs significantly worse than both \mtemw{} and \mtec{} allocations. This further hints at total envy being more dependent on the \sw{} of an allocation than \#envy, and thus being somewhat incompatible with envy under weighted valuations.
    \item \textbf{Welfare: }The utilitarian welfare generated by an allocation is significantly higher in \sw{} maximizing allocations, like \memw{} and \mtemw, than in envy or total envy minimizing allocations such as \mec{} and \mtec. This can be the explained by the significant positive impact that allowing highly wanted houses to go unassigned, has on the envy/total envy of an allocation.
\end{itemize}

\end{document}